\documentclass{article}

\usepackage{natbib}
\usepackage[htt]{hyphenat}
\usepackage{enumitem}
\usepackage{makecell}
\usepackage[below]{placeins}
\usepackage{tikz}
\usetikzlibrary{positioning}

\usepackage{mathtools}

\usepackage{soul}

\usepackage{minitoc}

\usepackage[utf8]{inputenc}
\usepackage[english]{babel}

\usepackage{verbatim}

\usepackage{geometry}

\usepackage[export]{adjustbox}

\usepackage[htt]{hyphenat}

\usepackage{setspace}

\usepackage{caption}
\usepackage[labelformat=simple]{subcaption}

\usepackage{tabularx}
\usepackage{multirow}
\usepackage{multicol}
\usepackage{booktabs}

\usepackage{amsmath}
\usepackage{amsthm}
\usepackage{amssymb}
\usepackage{bm}

\usepackage{xcolor}

\usepackage{tikz}
\usetikzlibrary{graphs}

\usepackage{graphicx}
\graphicspath{{./figures/}}

\usepackage[ruled]{algorithm}
\usepackage{algpseudocode}
\algnewcommand{\lIf}[1]{\State\algorithmicif\ #1\ \algorithmicthen}
\algnewcommand{\EndlIf}{\unskip\ \algorithmicend\ \algorithmicif}

\usepackage[title]{appendix}

\usepackage{fancyhdr}

\usepackage[pages=some,placement=top]{background}

\usepackage{pdfpages}

\usepackage[bookmarks=false]{hyperref}

\DeclareMathOperator*{\argmax}{argmax}

\newcommand{\R}{\mathbb{R}}

\newcommand{\N}{\mathbb{N}}

\newcommand{\indep}{\mathrel{\perp\!\!\!\perp}}

\newcommand{\sctlabel}[1]{\label{sct:#1}}
\newcommand{\sctref}[1]{Section \ref{sct:#1}}

\newcommand{\figlabel}[1]{\label{fig:#1}}
\newcommand{\figref}[1]{Figure \ref{fig:#1}}
\newcommand{\figrefmult}[2]{Figures \ref{fig:#1} and \ref{fig:#2}}

\newcommand{\apxlabel}[1]{\label{apx:#1}}
\newcommand{\apxref}[1]{Appendix \ref{apx:#1}}

\newcommand{\tbllabel}[1]{\label{tbl:#1}}
\newcommand{\tblref}[1]{Table \ref{tbl:#1}}

\newcommand{\eqnlabel}[1]{\label{eqt:#1}}
\newcommand{\eqnref}[1]{(\ref{eqt:#1})}

\newcommand{\say}[1]{``#1''}

\newtheorem{theorem}{Theorem}[section]

\newtheorem{prop}[theorem]{Proposition}

\theoremstyle{definition}

\NewCommandCopy{\proofqedsymbol}{\qedsymbol}

\AtBeginEnvironment{proof}{\renewcommand{\qedsymbol}{\proofqedsymbol}}

\theoremstyle{definition}
\newtheorem{example}{Example}[section]
\AtBeginEnvironment{example}{%
  \pushQED{\qed}\renewcommand{\qedsymbol}{$\triangle$}%
}
\AtEndEnvironment{example}{\popQED\endexample}

\newcommand{\howtobeginartappendix}{\begin{appendices}}
\newcommand{\howtoendartappendix}{\end{appendices}}
\newenvironment{artappendix}[1][A]{
    \howtobeginartappendix
    \setcounter{equation}{0}
    \setcounter{figure}{0}
    \setcounter{table}{0}
    \setcounter{theorem}{0}
    \setcounter{example}{0}
    \setcounter{definition}{0}
    \setcounter{algorithm}{0}
    \renewcommand{\theequation}{{#1}\arabic{equation}}
    \renewcommand{\thefigure}{{#1}\arabic{figure}}
    \renewcommand{\thetable}{{#1}\arabic{table}}
    \renewcommand{\thelemma}{{#1}\arabic{lemma}}
    \renewcommand{\theprop}{{#1}\arabic{prop}}
    \renewcommand{\thetheorem}{{#1}\arabic{theorem}}
    \renewcommand{\thecorollary}{{#1}\arabic{corollary}}
    \renewcommand{\theexample}{{#1}\arabic{example}}
    \renewcommand{\thedefinition}{{#1}\arabic{definition}}
    \renewcommand{\thealgorithm}{{#1}\arabic{algorithm}}
}{
    \howtoendartappendix
}

\fancypagestyle{plain}{%
\fancyhf{}
\fancyfoot[C]{\iffloatpage{\raisebox{-3cm}{\thepage}}{\thepage}}

}
\renewcommand{\howtobeginartappendix}{\appendix}
\renewcommand{\howtoendartappendix}{}

\makeatletter
\renewcommand{\p@subfigure}{\thefigure, panel }
\makeatother

\setlistdepth{9}

\newlist{denumerate}{enumerate}{6}
\setlist[denumerate,1]{label=\arabic*.}
\setlist[denumerate,2]{label=(\alph*)}
\setlist[denumerate,3]{label=\roman*.}
\setlist[denumerate,4]{label=(\Alph*)}
\setlist[denumerate,5]{label=\Roman*.}
\setlist[denumerate,6]{label=\alph*.}

\newcommand{\nrasem}{no\--repeated\--anything graph pattern evaluation semantics}

\title{tmQM-RDF Dataset: a Knowledge Graph Representing Transition Metal Complexes}

\newcommand{\affil}[1]{\\\footnotesize #1}
\newcommand{\affilIntegreat}{\affil{Integreat -- Norwegian Centre for Knowledge-Driven Machine Learning}}
\newcommand{\affilUIOMI}{\affil{University of Oslo -- Department of Mathematics}}
\newcommand{\affilBU}{\affil{Bielefeld University -- Center for Cognitive Interaction Technology}}
\newcommand{\affilUIOIFI}{\affil{University of Oslo -- Department of Informatics}}
\newcommand{\affilUIOKI}{\affil{University of Oslo -- Department of Chemistry}}
\newcommand{\affilIcredd}{\affil{Hokkaido University -- Institute for Chemical Reaction Design and Discovery (WPI-ICReDD)}}
\newcommand{\corrEmail}{\\{lucaci@math.uio.no}}
\author{
    Luca Cibinel\corrEmail\affilIntegreat\affilUIOMI\and
    Trond Linjordet\affilIcredd\and
    Johan Pensar\affilIntegreat\affilUIOMI\and
    David Balcells\affilUIOKI\and
    Riccardo De Bin\affilIntegreat\affilUIOMI\and
    Basil Ell\affilIntegreat\affilBU\affilUIOIFI
}

\date{February 2026}

\mathchardef\breakingcomma\mathcode`\,
{\catcode`,=\active
  \gdef,{\breakingcomma\discretionary{}{}{}}
}

\begin{document}

\doparttoc
\faketableofcontents

\maketitle

\begin{abstract}
Transition Metal Complexes (TMCs) have wide-ranging practical utility in chemistry, with possible applications that range from catalysis to medicinal chemistry. The study of TMCs and their properties is thus a field rich with potential, one in which machine learning and computational approaches can offer a substantial aid. For this reason, appropriate and accessible datasets, collecting a wide range of information, are required in order to facilitate the effective analysis and investigation of such compounds. This paper contributes to the data modelling effort via the introduction of the \emph{transition metal quantum mechanics RDF} (tmQM-RDF) dataset, a knowledge graph constructed using the Resource Description Framework (RDF) vocabulary which collects rich and detailed descriptions of approximately 50k TMCs. These descriptions are both qualitative and quantitative in nature, encompassing the compositional nature of TMCs in terms of their constituting ligands, as well as the entirety of their molecular graphs. An example of the power of the proposed representation is presented, showcasing how the information available in tmQM-RDF can be exploited for TMC manipulation tasks, achieving promising performance even with relatively simple probabilistic models.
\end{abstract}

\section{Introduction}
Transition metal complexes (TMCs) are chemical compounds with a widespread range of applications. Fields such as catalysis, medicine and material science can gain significant benefits from advancements in our understanding of TMCs \citep{abu2015tmcreview, liang2023transition}.

What makes this chemical family so broadly relevant is the possibility of adjusting their properties by appropriate manipulation of the bond between the transition metal centre and the ligands \citep{nandy2021computational}. At the same time,  the involvement of d orbitals in these bonds make TMCs more difficult to represent as molecular graphs, compared to other classes of molecules \citep{gee2025enriched,kneiding2023learning}. This is one of the key challenges that arise in the study of TMCs. Another fundamental obstacle is represented by the combinatoric explosion of the number of possible chemical structures, due to the numerous possible assemblies of the central transition metal atom with the surrounding ligands \citep{nandy2021computational}.

In order to tackle and overcome these issues, computational techniques have become increasingly predominant in the last decades. Density functional theory (DFT) is one of the fundamental pillars of modern computational chemistry, as it enables the preliminary screening of compounds of interests on the basis of their electronic properties \citep{griego2021acceleration}. More recently, machine learning (ML) approaches have attracted considerable interest as surrogates for DFT simulations, which can easily become prohibitively expensive in computational costs \citep{garrison2023applying, gee2025enriched}. ML models , on the other hand, have considerably lower inference times, making them an attractive solution for tasks like property prediction, electronic structure calculation or even structure generation \citep{goh2017deepchem, jin2019junction, gallegos2024explainable}.

This new paradigm, however, comes with its own set of caveats. First, popular ML algorithms, like deep neural networks (DNNs), are notorious black boxes, and do not allow for their parameters to be intuitively interpreted \citep{goh2017deepchem, gallegos2024explainable}. Second, acceptable performances are typically achieved only when an adequately large amount of data is available for the training of the model. This is perhaps one of the most crucial aspects of ML and DNNs, as the quantity, quality and chemical diversity of the training data can have a dramatic impact on the results \citep{lavravc2021representation, garrison2023applying, heid2023uncertainty}. 

Notable efforts have been dedicated, then, to the curation of suitable datasets which can provide a solid foundation for ML training. In the case of TMCs, the most prominent representative of this effort is the tmQM dataset series \citep{balcells2020tmqm, kneiding2023learning, kneiding2024optmization, kneiding2025tmqmgstar, kneiding2026delta}, which describes a subset of the TMCs reported in the Cambridge Structural Database \citep[CSD;][]{groom2016cambridge}. Among its many contributions, we primarily highlight the merit of having provided, with tmQM, one of the first large-scale datasets on the geometric and electronic properties of TMCs \citep{balcells2020tmqm}, while also tackling the problem of molecular graph representation in tmQMg \citep{kneiding2023learning} and that of extensive ligand description in tmQMg-L \citep{kneiding2024optmization}. The intrinsic power and utility of this dataset series have been demonstrated by the numerous applications and extensions found in the literature \citep{garrison2023applying, kevlishvili2024language, gee2025enriched, meggio2026path}. 

Nevertheless, diversity and size are not the only desirable properties of a dataset. Accessibility, ease of manipulation and compatibility with other resources are also important factors that can affect the overall impact of a dataset on the research process \citep{kanza2019new}. Data integration and machine readability are the main focus of the Semantic Web Standards, centred around the objective of creating a \say{Web of Data} that is easily accessible, and most importantly interpretable, to both humans and machines \citep{bratt2005semanticweb}. This kind of methodology prompted the development of a long-standing series of computational tools for chemical research \citep{gordon1988chemical, borkum2014usage, kanza2019new}. In addition, several chemical datasets and ontologies, i.e., collections of large-scale standardised representations, have been developed over the years, contributing to the development of computational chemistry under this  framework \citep{hastings2011chemical,djoumbou2016classyfire,pascazio2023chemical, he2024reaction}. 

This work participates in the data representation effort by building on top of the tmQM dataset series, specifically on the three datasets tmQM, tmQMg and tmQMg-L. Many of the key challenges in modern data collection, such as chemical variety, ML-readiness and appropriate chemical representation, are already addressed \citep{balcells2020tmqm, kneiding2023learning, kneiding2024optmization}, and we contribute to the existing resources by organising the content of the three datasets into a single coherent representation that can facilitate data access. Inspired by the aforementioned principles of data availability, readability and operability, we present the tmQM-RDF dataset, an integrated description of a large (approximately 50k entries) and diverse population of TMCs, encoded using the Resource Description Framework (RDF) vocabulary \citep{klyne2004rdfsyntax} and its RDF Schema (RDFS) semantic extension \citep{hayes2004rdfsemantics, brickley2014rdfs}.

By leveraging the RDF and RDFS vocabulary, tmQM-RDF falls into the category of knowledge graphs \citep[KGs;][]{hogan2021knowledge}, where the fundamental units of information are triples of the form $(t_s, t_p, t_o) \in \mathcal{T}^3$, where $\mathcal{T}$ is a (possibly infinite) set of terms. Each triple represents a subject-predicate-object structure, but indeed it can also be interpreted as a directed labelled edge, where the label is the \emph{predicate} $t_p$, between two nodes, namely the \emph{subject} $t_s$ and the \emph{object} $t_o$. In tmQM-RDF, then, knowledge about TMCs is easily accessible using this dual interpretation.

We furthermore contribute an example application of the novel integrated data representation, and define the task of \emph{plausible TMC completion}. By extracting frequent structural motifs from the dataset and then estimating their joint distribution using statistical techniques, we show how to quantitatively assess the appropriateness of a candidate TMC structure considered as a completion of an initial molecular scaffold.

The remainder of this paper is organised as follows. \sctref{methods} introduces the vocabulary used to build the tmQM-RDF KG. \sctref{tmQM-RDF} presents a detailed account of how tmQM-RDF is built using  the information from the tmQM series. In \sctref{experiments} we present and discuss our example on plausible TMC reconstruction powered by the structural analysis of the content of tmQM-RDF. Finally, in \sctref{conclusions} we summarise our main contributions. Additional content, such as notable subsets of tmQM-RDF, and technical details about the creation of the dataset and the example application, can be found in the Appendix.  

\section{Methods}\sctlabel{methods}
The act of converting existing chemical datasets into RDF format is not unprecedented and several notable examples can be found in the literature \citep{belleau2008biotordf, willighagen2013chembl, fu2015pubchemrdf}. The creation of tmQM-RDF falls within this category, as it requires a careful examination of the data contained in the tmQM dataset series and the elaboration of an appropriate \emph{terminology component}, or \emph{TBox} \citep{degiacomo1996tbox}. This section focuses on this aspect of data modelling, by introducing the challenges involved in designing the semantic representation of TMCs and the proposed solution, intended as an RDF vocabulary.

In accordance with the RDF syntax, we use Uniform Resource Identifiers (URIs) to identify entities and resources, blank nodes to state the existence of entities for which it is not necessary to state their precise identity \footnote{In other words, they behave as \emph{existentially quanitified variables} \citep{hayes2004rdfsemantics, bergmann2014logic}.}, and literals to report quantitative data.

For the sake of brevity, we adopt the common convention of shortening URIs by employing prefix names. An excerpt of the prefixes we use is available in \tblref{prefixes_short}. For a full list, the reader is referred to \apxref{comprehensive}. In general, the newly introduced prefixes adhere to the following scheme: the first two lowercase letters denote the general topic (the whole \underline{\textbf{c}}o\underline{\textbf{m}}plex, \underline{\textbf{l}}i\underline{\textbf{g}}and-like structures, \underline{\textbf{a}}to\underline{\textbf{m}}ic structures and \underline{\textbf{d}}ata\underline{\textbf{s}}ets), the following uppercase letter specifies the first subtopic (\underline{\textbf{T}}MC, \underline{\textbf{B}}onds, metal \underline{\textbf{C}}entre, \underline{\textbf{L}}igand, \underline{\textbf{S}}tructural connectivity, \underline{\textbf{A}}tom). A variable number of lowercase letters can follow, indicating further specification.

\begin{table}[!t]
    \footnotesize
    \centering
    \begin{tabular}{l l l}
         \toprule
         Prefix & Namespace & \\
         \midrule
         \midrule
         rdf & http://www.w3.org/1999/02/22-rdf-syntax-ns\# & \\
         rdfs & http://www.w3.org/2000/01/rdf-schema\# & \\
         xmls & http://www.w3.org/2001/XMLSchema\# & \\
         inp5 & resource://integreat/p5/ & \\
         nm & resource://integreat/p5/numerical/ & \\[0.15cm]
         \multicolumn{3}{c}{\emph{Families of namespaces}}\\
         \midrule
          & & Relevant prefixes\\
         \midrule
         cm\texttt{*} & resource://integreat/p5/complex/\texttt{*}  & cmT\ \ cmTp\\
         lg\texttt{*} & resource://integreat/p5/ligand/\texttt{*}  & lgB\ \ lgC\ \ lgCr\ \ lgL\ \ lgLr\ \ lgS\\
         tm\texttt{*} & resource://integreat/p5/atomic/\texttt{*} & tmA\ \ tmAr\ \ tmB\ \ tmBp\ \ tmS\\
         ds\texttt{*} & resource://integreat/p5/datasets/\texttt{*}  & ds\\
         \bottomrule
    \end{tabular}
    \caption{In RDF, entities are represented via URIs, which are often organised by of namespaces. These namespaces can employ prefixes as their alias, as a way to shorten URIs. This table summarizes the namespace prefixes used in this work. To improve readability, certain groups of prefixes are not listed entirely, but rather are summarised using the symbol \texttt{*} as a placeholder for a (possibly empty) sequence of characters. For each group, a (non-exhaustive) list of the most important prefixes is reported. See \apxref{methods_rdf_uri} and \tblref{prefixes} for a complete discussion.}
    \tbllabel{prefixes_short}
\end{table}

\subsection{The semantic integration of the tmQM series}\sctlabel{methods_semantic}
The vocabulary that we employ to semantically integrate the datasets of the tmQM series must be tailored to the data , in order to appropriately convey the information once the unified representation is assembled. The content of tmQM, tmQMg and tmQMg-L is then hereby briefly introduced, followed by an explanation of how the data is organised in a single, coherent KG.

\subsubsection{The tmQM dataset series}
 In this work we consider three of the datasets in the tmQM series, namely tmQM \citep{balcells2020tmqm}, tmQMg \citep{kneiding2023learning} and tmQMg-L \citep{kneiding2024optmization}.\footnote{For brevity, we shall still refer to the group formed by tmQM, tmQMg and tmQMg-L as the \emph{tmQM dataset series}. There is no risk of ambiguity, as no other dataset is included in this work.} These three datasets are closely related and describe the same population of TMCs, but each one provides a different perspective:

\begin{itemize}
    \item tmQM describes TMCs primarily at a whole-complex level, providing quantum properties computed at the DFT(TPSSh-D3BJ/def2-SVP) level of theory, based on geometric data optimised at the GFN2-xTB level. Geometric information, in the form of atomic Cartesian coordinates, are also available.

    \item tmQMg provides some TMC-level quantum properties as well, but unlike tmQM, this dataset focuses on the atomic composition of the TMCs. In particular, \citet{kneiding2023learning} developed two different graphical representations to analyse TMCs from a graph-theoretical point of view, namely the directed and the undirected natural quantum graph. For the purpose of this work, we only employ the undirected version, as this is the one that most closely relates to the classical molecular graph representation. Calculations of geometries, frequencies and thermochemistry properties in this dataset have been conducted at the PBE-D3BJ/def2-SVP level, whereas single-point energy and natural bond orbital (NBO) calculations have been performed at the PBE0-D3BJ/def2-TZVP level. Notice that this dataset only contains complexes with at most 85 atoms, hence this constraint will be transmitted to tmQM-RDF as well.

    \item tmQMg-L acts as a ligand library in which extensive properties, descriptors and features belonging to ligands, intended as independent entities, are reported. We shall treat the entries of this dataset as abstract reference definitions of species of ligands, in a similar way in which chemical elements in the periodic table define species of atoms \citep{iupac2025element}.
\end{itemize}

\subsubsection{A hierarchical representation}\sctlabel{semantic_hierarchical}
We decided to organise the data using a three-level hierarchical representation, in which each level describes the features of a TMC at a  different resolution. Each TMC, i.e., each corresponding subgraph of the resulting KG, will be constructed using this scheme.

The top level is the \emph{complex} level, in which we only consider properties that belong to the entire TMC. Here we derive the data mostly from tmQM, but partly also from tmQMg, as it also contains whole-TMC features. Below this level we find the \emph{ligand} level, which specifies the composition of the TMC in terms of its constituting ligands and the metal centre, in the \mbox{tmQMg-L} sense. Here we report only structural information, as numerical features are treated either as pertaining to specific atomic bonds or to the general representation of the entire ligand. In the latter case, these properties are specified elsewhere, in what can be considered the ligand-level equivalent of the periodic table (see \sctref{tbox_ligand_pt}). This level draws information from \mbox{tmQMg-L}, whose entries are compared against the information in tmQMg to correctly locate the ligands within each TMC (see \sctref{tmQM-RDF}). Finally, the most detailed level is the \emph{atomic} level, which essentially corresponds to the molecular graph of the TMC, enriched with all the relevant node and edge features, as extracted from tmQMg.

This is summarised in \figref{dataset_scheme}, with a brief example of the three-level representation being shown in \figref{abox}. This example makes use of the RDF vocabulary developed below, in \sctref{methods_tbox}, and further explained therein.

\subsection{The terminology component of tmQM-RDF}\sctlabel{methods_tbox}
In order to achieve a functional RDF representation that adheres to the principles of the three-level hierarchical representation, there are several modelling decisions that have to be carefully considered.

With regards to the encoding of the structural component of TMCs, appropriate classes and predicates have to be introduced to represent:
\begin{enumerate}
    \item the \emph{TMC}, as an entity to which properties that apply to the complex as a whole can be attached;
    \item the \emph{ligands} and the \emph{metal centre}, as abstract structural components and specific instances of those abstractions;
    \item the \emph{ligand-metal centre bonds}, as entities representing the bond between structural components and its properties;
    \item the \emph{atoms}, both as abstractions and physical instances, similarly to the ligands and the metal atom;
    \item the \emph{atomic bonds}, as representations of the bond between atoms and its properties;
    \item the \emph{connections between levels}, intended as containment relationships meant to specify which structural elements belong to which higher level structure (e.g., which atoms are part of a given ligand).
\end{enumerate}
Structural information must then be enriched with quantitative and qualitative features and properties, which poses a new set of challenges:
\begin{enumerate}
    \item introducing a \emph{general property description scheme};
    \item appropriately describing \emph{non-elementary features} (intended as features which cannot be represented with a single numerical or categorical value);
    \item separating \emph{abstract ligand properties} from those of actual instances of chemical entities;
    \item appropriately describing \emph{TMC metadata};
    \item appropriately referencing \emph{information for computational reproducibility}.
\end{enumerate}

\begin{figure}
    \centering
    \makebox[\textwidth][c]{%
    \includegraphics[width=1.25\linewidth]{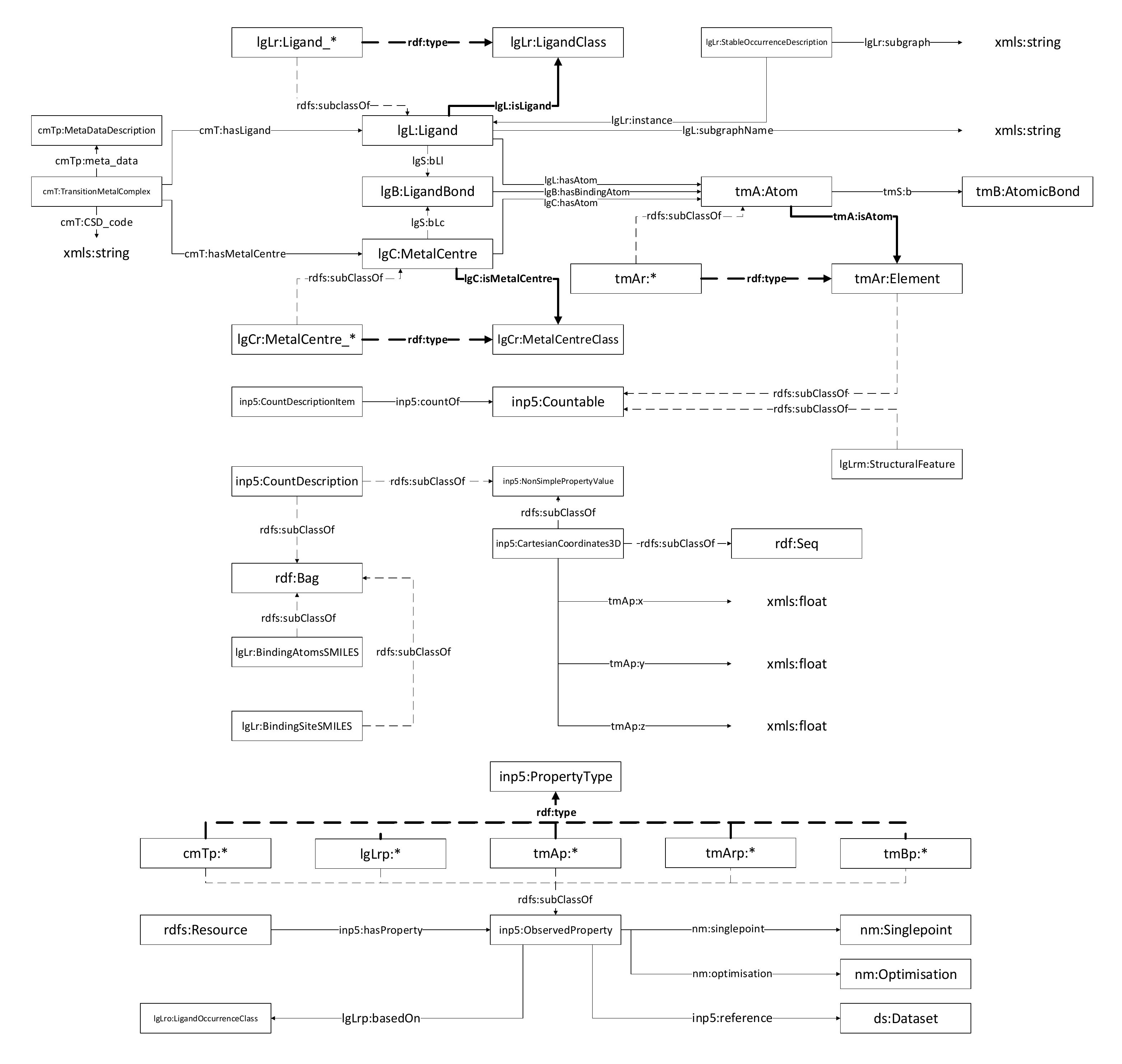}
    }
    \caption{Statements about general concepts are collected in the TBox. Here the TBox of tmQM-RDF is visually represented. Nodes represent the available classes. Solid edges represent the available predicates, where the tail and the head of the edge represent domain and range restrictions for that predicate. Dashed edges represent class-related assertions, i.e. subclass relationships between classes or class assignments. Bold edges highlight the predicate \texttt{rdf:type} and its subproperties (notice that an edge can be both solid and bold or both dashed and bold). The symbol \texttt{*} is used as a placeholder for a sequence of charachters, representing chemical elements symbols (in \texttt{tmAr:*} and \texttt{lgCr:MetalCentre\_*}), ligand ids (in \texttt{lgLr:Ligand\_*}) or property names (in \texttt{cmTp:*}, \texttt{lgLrp:*}, \texttt{tmAp:*}, \texttt{tmArp:*} and \texttt{tmBp:*}).}
    \figlabel{tbox}
\end{figure}

We will consider each of these points in detail in the rest of the section. A summary of the resulting TBox is schematised in \figref{tbox}. An example of a compliant \emph{assertion component}, or \emph{ABox}, i.e., a set of assertions on actual objects that follows the specification of the TBox \citep{degiacomo1996tbox}, is shown in \figref{abox} and will be referenced throughout the explanation below to provide illustrations for the most fundamental concepts. Contextually, this example ABox will also demonstrate another feature introduced in tmQM-RDF, namely that the TMCs and all the substructures therein are referenced using the CSD code (six alphabetic characters) that identifies the TMC. For instance, within the subgraph of the tmQM-RDF KG that describes the TMC that in the CSD is listed as KCEYPT, all the URIs that pertain to ligands and atoms of this TMC will include the KCEYPT code.

\begin{figure}[p]
    \begin{subfigure}[t]{\linewidth}
        \centering
        \includegraphics[width=0.8\linewidth]{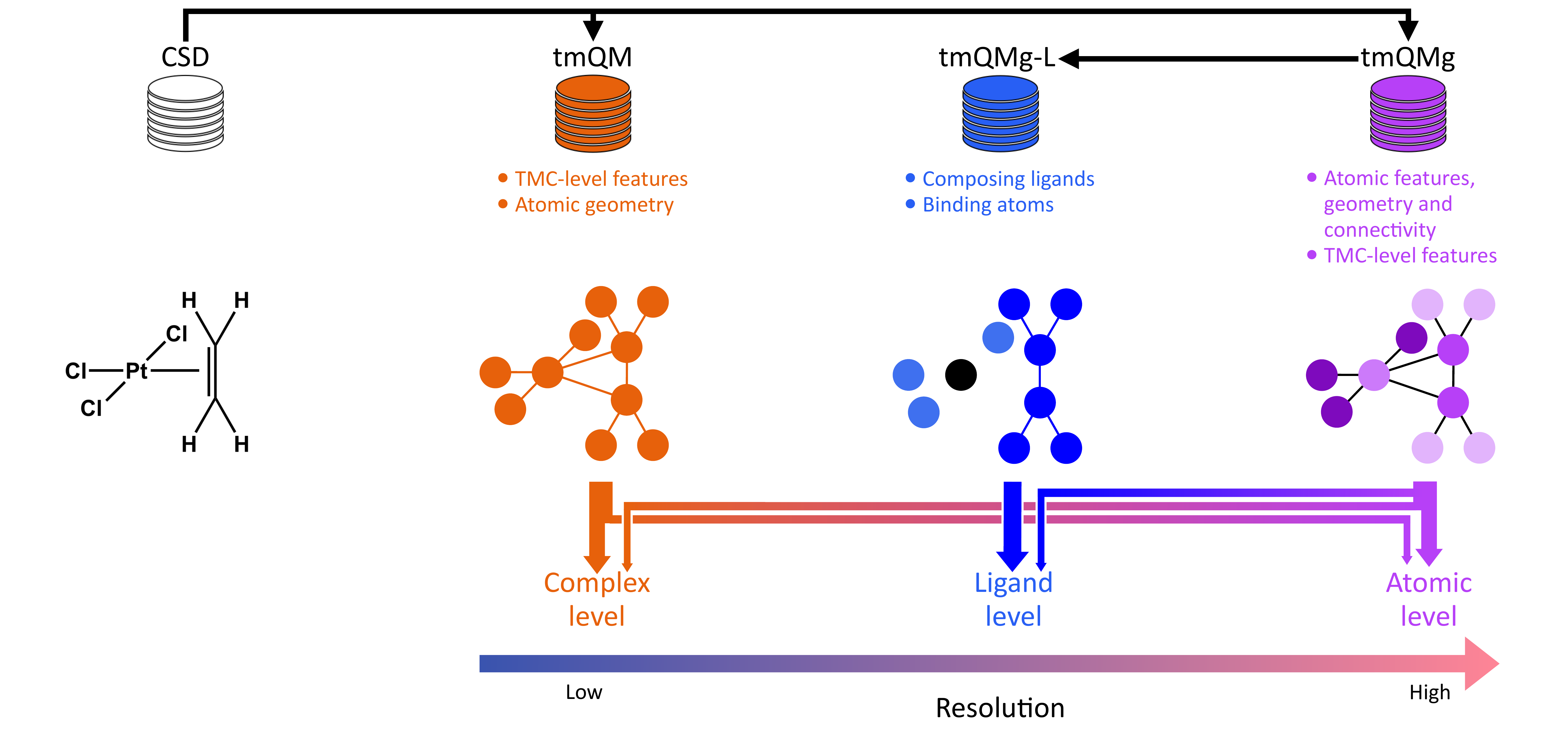}
        \caption{}
        \figlabel{dataset_scheme}
    \end{subfigure}
    \begin{subfigure}[t]{\linewidth}
        \centering
        \makebox[\textwidth][c]{%
        \includegraphics[width=\linewidth]{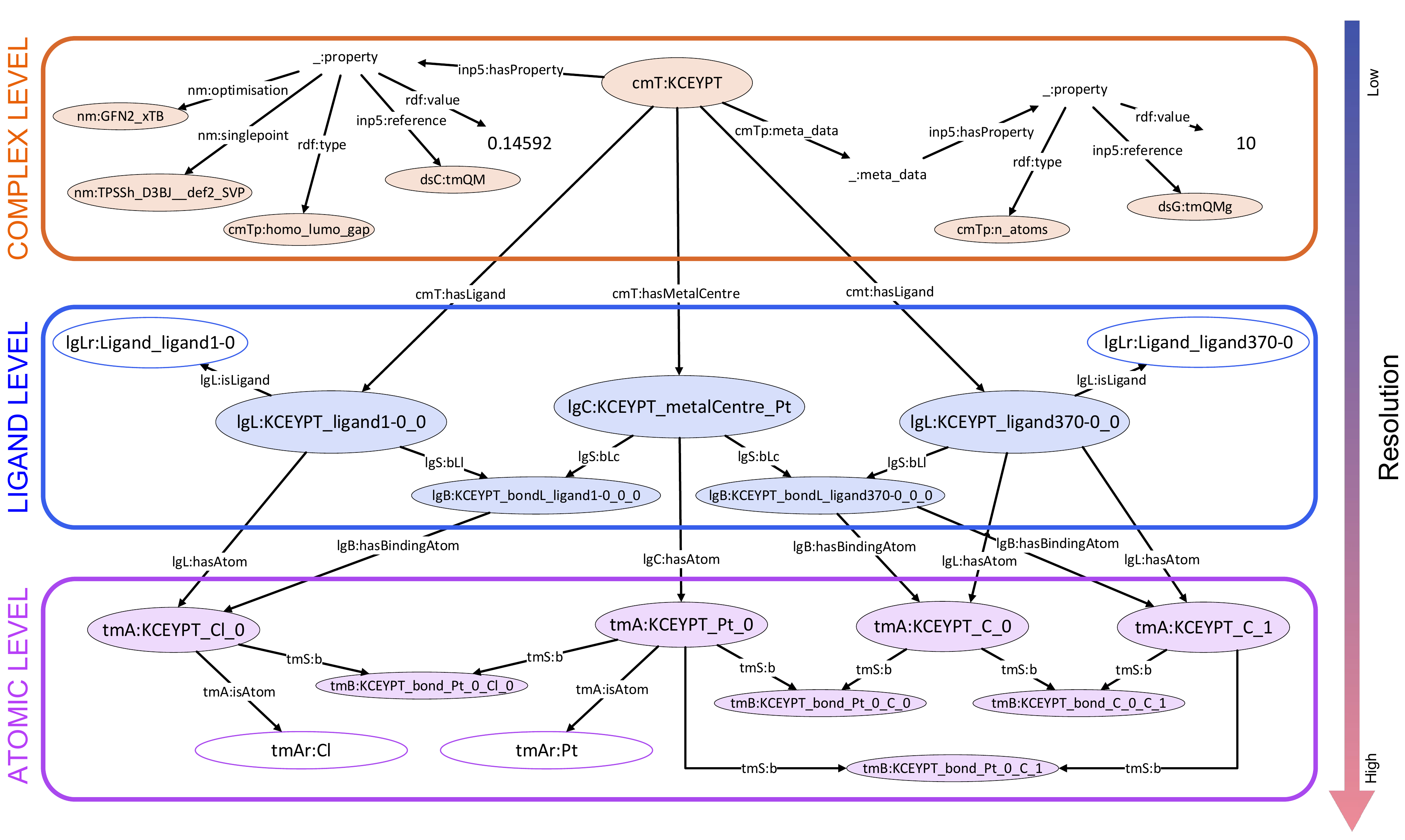}
        }
        \caption{}
        \figlabel{abox}
    \end{subfigure}
    \caption{(a) The tmQM-RDF KG is the result of the integration of tmQM, tmQMg and tmQMg-L. This scheme illustrates syntetically how exactly each dataset contributes to the final three-level representation. (b) A visual example of an ABox compliant with the TBox in \figref{tbox} showcasing how the data from the tmQM series can be represented. Nodes with neither background nor border represent either blank nodes (\texttt{\_:*}) or literals (\say{\texttt{*}}). Nodes with a white background represent classes. The remaining nodes represent instances of classes. For the sake of readability, not all the features, or literal datatypes, are represented.}
\end{figure}

\subsubsection{Structural representation}\sctlabel{methods_semantic_structure}

\paragraph{Whole TMC}
At the complex level, the TMC is the only structural feature, hence a single RDFS class is sufficient for adequate representation. For this purpose, the class \texttt{cmT:Tran\-si\-tion\-Me\-tal\-Com\-plex} has been defined and one of its instances is used to summarise the complex, interpreted as a single entity. In \figref{abox}, this instance is represented by the URI \texttt{cmT:KCEYPT}. In this case, the class assignment is implicit and performed via the domain specification of the predicates that apply to the node.

\paragraph{Ligands and metal centre}
At the ligand level, each URI ,representing either the metal centre or one of the ligands, is assigned a chemical identity which specifies its role in the structural composition of the TMC.

In the case of ligands, the predicate \texttt{lgL:isLigand}, a subproperty of \texttt{rdf:type}, assigns a subclass of \texttt{lgL:Ligand} (e.g., \texttt{lgLr:Ligand\_ligand1-0}, referring to the ligand that in \mbox{tmQMg-L} is identified as \say{ligand1-0}, a single Cl atom) to a URI. This subclass is itself an instance of \texttt{lgLr:LigandClass} and codifies the information on the chemical identity of the ligand.

An example of this scheme, as shown in \figref{abox}, is the triple
\begin{equation*}
    \texttt{lgL:KCEYPT\_ligand1-0\_0} \xrightarrow{\texttt{lgL:isLigand}} \texttt{lgLr:Ligand\_ligand1-0}.
\end{equation*}
Here, \texttt{lgL:KCEYPT\_ligand1-0\_0} is a URI that represents the first instance of ligand1-0 found within KCEYPT. Moreover, \texttt{lgLr:Ligand\_ligand1-0} is a subclass of \texttt{lgL:Ligand} and an instance of \texttt{lgLr:LigandClass}. This means that the following triples can also be found within the tmQM-RDF knowledge graph:
\begin{align*}
    \texttt{lgLr:Ligand\_ligand1-0} &\xrightarrow{\texttt{rdfs:subClassOf}} \texttt{lgL:Ligand}\\
    \texttt{lgLr:Ligand\_ligand1-0} &\xrightarrow{\texttt{rdf:type}} \texttt{lgLr:Ligand\_Class}.
\end{align*}
Moreover, since \texttt{lgL:isLigand} is a subproperty of \texttt{rdf:type}, the RDFS axioms allow us to infer\footnote{The entailment follows from the RDFS entailment rules \texttt{rdfs7} and \texttt{rdfs9} \citep{hayes2004rdfsemantics}.} the triple
\begin{equation*}
    \texttt{lgL:KCEYPT\_ligand1-0\_0} \xrightarrow{\texttt{rdf:type}} \texttt{lgL:Ligand}.
\end{equation*}

The metal centre is handled in an analogous way, using the predicate \texttt{lgC:isMetalCentre} and the classes \texttt{lgC:MetalCentre} and \texttt{lgCr:MetalCentreClass} respectively. An example of a subclass of \texttt{lgC:MetalCentre} (and instance of \texttt{lgCr:MetalCentreClass}) is \texttt{lgCr:MetalCentre\_Pt}.

\paragraph{Ligand-metal centre bonds}
A bond between a ligand and the metal centre is represented via one or more instances of the class \texttt{lgB:LigandBond}, to which the ligand and the centre are connected via the predicates \texttt{lgS:bLl} (which connects the ligand) and \texttt{lgS:bLc} (which connects the centre). The denticity of a ligand is represented by introducing one new \texttt{lgB:LigandBond} instance (and the corresponding predicates) for each binding site. Hapticity, on the other hand, is dealt with through the connection with the atomic level, described below.

In \figref{abox}, the bond between the metal centre (\texttt{lgC:KCEYPT\_metalCentre\_Pt}) and the ligand \texttt{lgL:KCEYPT\_ligand1-0\_0} introduced above is represented as
\begin{align*}
    \texttt{lgL:KCEYPT\_ligand1-0\_0} &\xrightarrow{\texttt{lgS:bLl}} \texttt{lgB:KCEYPT\_bondL\_ligand1-0\_0\_0}\\
    \texttt{lgC:KCEYPT\_metalCentre\_Pt} &\xrightarrow{\texttt{lgS:bLc}} \texttt{lgB:KCEYPT\_bondL\_ligand1-0\_0\_0}.
\end{align*}

\paragraph{Atoms}
At the atomic level, atoms are represented using the same mechanism used for the metal centre and the ligands. The predicate \texttt{tmA:isAtom} (again, a subproperty of \texttt{rdf:type}) assigns a chemical label (i.e., an element) to a URI. The label is expressed as an instance of the class \texttt{tmAr:Element} and a subclass of \texttt{tmA:Atom}, and it is meant to represent the chemical element of the atom (e.g., \texttt{tmAr:C} for carbonium).

Referencing \figref{abox} once more, the Cl atom in \texttt{lgL:KCEYPT\_ligand1-0\_0} is represented via
\begin{equation*}
    \texttt{tmA:KCEYPT\_Cl\_0} \xrightarrow{\texttt{tmA:isAtom}} \texttt{tmAr:Cl}.
\end{equation*}
The TBox statements about \texttt{tmAr:Cl} are
\begin{align*}
    \texttt{tmAr:Cl} &\xrightarrow{\texttt{rdfs:subClassOf}} \texttt{tmA:Atom}\\
    \texttt{tmAr:Cl} &\xrightarrow{\texttt{rdf:type}} \texttt{tmAr:Element},
\end{align*}
again with the possibility of inferring\footnote{Once again, this relies on the RDFS entailment rules \texttt{rdfs7} and \texttt{rdfs9} \citep{hayes2004rdfsemantics}. Recall that \texttt{tmA:isAtom} is also a subproperty of \texttt{rdf:type}.}
\begin{equation*}
    \texttt{tmA:KCEYPT\_Cl\_0} \xrightarrow{\texttt{rdf:type}} \texttt{tmA:Atom}.
\end{equation*}
\newline
\paragraph{Atomic bonds}
As with the ligand level bonds, chemical bonds between atoms are represented by instances of the class \texttt{tmB:AtomicBond}, to which the involved atoms are connected via the predicate \texttt{tmS:b}.

An example from \figref{abox}, which is the atomic level-equivalent of the ligand level bond showed above, is
\begin{align*}
    \texttt{tmA:KCEYPT\_Cl\_0} &\xrightarrow{\texttt{tmS:b}} \texttt{tmB:KCEYPT\_bond\_Pt\_0\_Cl\_0}\\
    \texttt{tmA:KCEYPT\_Pt\_0} &\xrightarrow{\texttt{tmS:b}} \texttt{tmB:KCEYPT\_bond\_Pt\_0\_Cl\_0}.
\end{align*}
Notice that, in this case, a single predicate is used for all the participants in the bond (both atoms), unlike in the ligand level, in which the two participants have different structural roles (i.e., one is a ligand, the other is the metal centre).

\paragraph{Connections between levels}
Each level communicates with the levels below and above through predicates that clarify which components are substructures of which larger structure (e.g., which atom belongs to which ligand).

At the complex level, the instance of the \texttt{cmT:TransitionMetalComplex} class is linked to the metal centre and the ligand objects at the ligand level through the predicates \texttt{cmT:hasMetalCentre} and \texttt{cmT:hasLigand}, respectively. As anticipated, the domain of both these predicates serves as the class declaration of the \texttt{cmT:TransitionMetalComplex} instance.

At the ligand level, each ligand object and the metal centre point to their composing atoms through the predicates \texttt{lgL:hasAtom} (for ligands) and \texttt{lgC:hasAtom} (for the metal centre). At this level, however, there are also objects representing the ligand-metal centre bonds, which also involve atoms. Therefore, it is necessary to introduce similar predicates for these objects as well. Since the transition metal atom always participates in these bonds, only the atoms that belong to a ligand are highlighted. This is done through the predicates \texttt{lgB:hasBindingAtom}, whose domain is the class \texttt{lgB:LigandBond}. This predicate is also used to specify the hapticity order of a ligand: if a ligand-metal centre bond involves several adjacent atoms, each of those atoms is the object of a triple whose subject is the same bond object and the predicate is \texttt{lgB:hasBindingAtom}.

In \figref{abox}, the links between the TMC object at the complex level, the ligand examined so far in the previous examples and its single composing atom read as follows:
\begin{align*}
    \texttt{cmT:KCEYPT} &\xrightarrow{\texttt{cmT:hasLigand}} \texttt{lgL:KCEYPT\_ligand1-0\_0}\\
    \texttt{lgL:KCEYPT\_ligand1-0\_0} &\xrightarrow{\texttt{lgL:hasAtom}} \texttt{tmA:KCEYPT\_Cl\_0}\\
    \texttt{lgB:KCEYPT\_bondL\_ligand1-0\_0\_0} &\xrightarrow{\texttt{lgB:hasBindingAtom}} \texttt{tmA:KCEYPT\_Cl\_0}.
\end{align*}

For an example of the representation of haptic ligands, we need to look at the other ligand-level ligand object represented in \figref{abox}, i.e., \texttt{lgL:KCEYPT\_ligand370-0\_0}. This is an haptic ligand as it binds to the metal centre via two adjacent C atoms and, therefore, its bond object, \texttt{lgB:KCEYPT\_bondL\_ligand370-0\_0\_0}, possesses two binding atoms:
\begin{align*}
    \texttt{lgB:KCEYPT\_bondL\_ligand370-0\_0\_0} &\xrightarrow{\texttt{lgB:hasBindingAtom}} \texttt{tmA:KCEYPT\_C\_0}.\\
    \texttt{lgB:KCEYPT\_bondL\_ligand370-0\_0\_0} &\xrightarrow{\texttt{lgB:hasBindingAtom}} \texttt{tmA:KCEYPT\_C\_1}.
\end{align*}

\subsubsection{Properties}\sctlabel{methods_semantic_structure_properties}
\paragraph{General property description scheme}
Within the scope of this work, any entity (in technical terms, any \texttt{rdfs:Resource} instance) can be endowed with one or more properties. In order to express this, the predicate \texttt{inp5:hasProperty} is used. As shown in \figref{tbox}, the range of \texttt{inp5:hasProperty} is the class \texttt{inp5:ObservedProperty}, which in practice is instantiated as a blank node. The reason behind this choice is that the instances of the class \texttt{inp5:ObservedProperty} are used merely to summarise the property, but they do not express any information themselves. The actual data about the property is stated by using three additional predicates that act on the blank node: \texttt{rdf:type} assigns an instance of \texttt{inp5:PropertyType} (i.e., a property name, for identifiability), \texttt{inp5:reference} points to an instance of \texttt{ds:Datataset}, representing the original dataset that provided the data and, finally, \texttt{rdf:value} states the actual value.

For example, the TMC KCEYPT has a HOMO-LUMO gap (\texttt{cmTp:homo\_lumo\_gap}) that, according to tmQM, amounts to $0.14592\,E_\mathrm{h}$. This is stated in tmQM-RDF as
\begin{align*}
    \texttt{cmT:KCEYPT} &\xrightarrow{\texttt{inp5:hasProperty}} \texttt{\_:ppp}\\
    \texttt{\_:ppp} &\xrightarrow{\texttt{rdf:type}} \texttt{cmTp:homo\_lumo\_gap}\\
    \texttt{\_:ppp} &\xrightarrow{\texttt{inp5:reference}} \texttt{dsC:tmQM}\\
    \texttt{\_:ppp} &\xrightarrow{\texttt{rdf:value}} \text{\say{\texttt{0.14592}}}.\\
\end{align*}

A list of all the properties used in this work can be found in \apxref{comprehensive}.\\

Each property may require some additional specification or some other consideration, all of which are explained in this section, but all of them employ and extend this baseline structure.
The most variable aspect of a property description is its value. Simple properties, such as the \mbox{HOMO-LUMO} gap, can be expressed using a literal. Other properties, like the NBO type of an atomic bond (\texttt{tmBp:nbo\_type}), use a single URI instead. There are, however, other properties that cannot be described using a single term, and require more complicated structures.

\paragraph{Non-elementary properties}
When a single URI or literal does not suffice to specify the value of a property, a blank node, instance of (subclasses of) \texttt{inp5:NonSimplePropertyValue} is used as the object of \texttt{rdf:value}. Several other predicates, applied to the blank node, are then used to fully describe the property value.

Multiple-term values generally follow the scheme of the \texttt{rdfs:Container} classes \citep{brickley2014rdfs}, hence they serve as \say{collectors} of the information that composes the property value. In tmQM-RDF, such \emph{non-elementary properties} can be tuples of 3D Cartesian coordinates, lists of counts (e.g., element counts or structure counts), lists of binding atoms or pointers to the most stable occurrence of a ligand.\\

In the case of count lists, which we show here as a representative example, the blank node is an instance of \texttt{inp5:CountDescription}. Via the predicates \texttt{rdf:\_nnn}, where \texttt{nnn} represents a positive integer, the entries of the list are specified. Each entry is itself a blank node and an instance of \texttt{inp5:CountDescriptionItem}, which specifies the object being counted (an instance of \texttt{inp5:Countable}), via the \texttt{inp5:countOf} predicate, and the actual count, via \texttt{rdf:value}. A compliant set of triples is:
{
\allowdisplaybreaks
\begin{align*}
    \texttt{cmT:KCEYPT} &\xrightarrow{\texttt{cmTp:meta\_data}} \texttt{\_:mmm}\\
    \texttt{\_:mmm} &\xrightarrow{\texttt{inp5:hasProperty}} \texttt{\_:ppp}\\
    \texttt{\_:ppp} &\xrightarrow{\texttt{rdf:type}} \texttt{cmTp:element\_counts}\\
    \texttt{\_:ppp} &\xrightarrow{\texttt{inp5:reference}} \texttt{dsG:tmQMg}\\
    \texttt{\_:ppp} &\xrightarrow{\texttt{rdf:value}} \texttt{\_:vvv}\\
    \texttt{\_:vvv} &\xrightarrow{\texttt{rdf:type}} \texttt{inp5:CountDescription}\\
    \texttt{\_:vvv} &\xrightarrow{\texttt{rdf:\_1}} \texttt{\_:eee}\\
    \texttt{\_:eee} &\xrightarrow{\texttt{rdf:type}} \texttt{inp5:CountDescriptionItem}\\
    \texttt{\_:eee} &\xrightarrow{\texttt{inp5:countOf}} \texttt{tmAr:Cl}\\
    \texttt{\_:eee} &\xrightarrow{\texttt{rdf:value}} \text{\say{\texttt{1}}}.
\end{align*}
}
Lists of binding atoms follow a closely related scheme, with \texttt{inp5:CountDescription} and \texttt{inp5:Cou\-nt\-Des\-crip\-tion\-I\-tem} replaced by \texttt{lgLr:BindingAtomsSMILES} and \texttt{lgLr:BindingSiteSMILES}. The only difference is that \texttt{lgLr:BindingSiteSMILES} is itself a subclass of \texttt{rdfs:Container}.

For the other available non-elementary properties, i.e., Cartesian coordinates or pointers to the most stable ligand occurrence, the scheme can be simplified. The \texttt{rdf:\_nnn} predicates are replaced by \texttt{tmAp:x}, \texttt{tmAp:y} and \texttt{tmAp:z} in the case of tuples of Cartesian coordinates (\texttt{\_:vvv} is of type \texttt{inp5:CartesianCoordinates3D}), and by \texttt{lgLr:subgraph} and \texttt{lgLr:instance} for descriptions of stable ligand occurrences (\texttt{\_:vvv} is of type \texttt{lgLr:StableOccurrenceDescription}), and their objects need not to be blank nodes, but can be literals or URIs.

\paragraph{Abstract ligand properties}\sctlabel{tbox_ligand_pt}
There are numerous quantitative and qualitative features that pertain to the entities described above. While complex-level and atomic features are attached directly to the entities of interest, contextually to the description of each TMC, properties that refer to entire ligands are reported separately. Since tmQMg-L describes each ligand using general representations that are not specific to any particular instance of the ligands, the properties thereby reported can be thought of as properties of the entire ligand class. For this reason, ligand-level properties are not specified as properties of the physical ligands within TMCs, but rather are listed once, together with the definition of the ligand classes \texttt{lgLr:Ligand\_*}.

\paragraph{TMC metadata}
Additional TMC metadata, as defined in tmQMg, is included in each TMC subgraph by introducing a blank node, instance of the class \texttt{cmT:MetaDataDescription}. As with the class \texttt{inp5:ObservedProperty}, this object serves as a summary of the collection of metadata. Actual properties are specified via the usual property description scheme as properties of the metadata blank node (as shown above), which in turn is connected to the TMC node through the \texttt{cmTp:meta\_data} predicate. Expanding on the aforementioned example, we can find:
\begin{align*}
    \texttt{cmT:KCEYPT} &\xrightarrow{\texttt{cmTp:meta\_data}} \texttt{\_:mmm}\\
    \texttt{\_:mmm} &\xrightarrow{\texttt{rdf:type}} \texttt{cmTp:MetaDataDescription}.
\end{align*}

\paragraph{Information for computational reproducibility}
For some complex-level and whole-ligand features for which this information is readily available, the quantum chemistry methods used in tmQM and tmQMg-L are specified together with their \texttt{inp5:ObservedProperty} objects. The predicates \texttt{nm:optimisation} and \texttt{nm:singlepoint}, which point to instances of \texttt{nm:Optimisation} and \texttt{nm:Singlepoint} respectively, indicate the methods used to optimise the molecular geometries and compute their electronic structure properties.

As shown in \figref{abox}, this information is available for the HOMO-LUMO gap of KCEYPT, hence it is also reported in tmQM-RDF:
\begin{align*}
    \texttt{cmT:KCEYPT} &\xrightarrow{\texttt{inp5:hasProperty}} \texttt{\_:ppp}\\
    \texttt{\_:ppp} &\xrightarrow{\texttt{rdf:type}} \texttt{cmTp:homo\_lumo\_gap}\\
    \texttt{\_:ppp} &\xrightarrow{\texttt{nm:optimisation}} \texttt{nm:GFN2\_xTB}\\
    \texttt{\_:ppp} &\xrightarrow{\texttt{nm:singlepoint}} \texttt{nm:TPSSh\_D3BJ\_\_def2\_SVP}.\\
\end{align*}
This means that the HOMO-LUMO gap, for KCEYPT, has been computed using a geometry optimised at the GFN2-xTB level and then by performing single-point lowest-energy state calculations at the TPSSh-D3BJ/def2-SVP level.
 
\section{The transition metal quantum mechanics RDF dataset}\sctlabel{tmQM-RDF}
A collection of $47,814$ TMCs, represented using the RDF vocabulary developed in \sctref{methods}, has been assembled into the \emph{transition metal quantum mechanics RDF (tmQM-RDF) dataset}. As anticipated, tmQM-RDF relies on the information extracted from the tmQM dataset series, which has been organised in the three hierarchical levels described above in \sctref{methods}. As we want to ensure that tmQM-RDF only contains TMCs for which information can be found in all the datasets of the series, we apply filters to exclude structures that do not meet this requirement.\\

The fusion of the three parent datasets into tmQM-RDF consists of three main steps, which are further divided into smaller subtasks:
\begin{enumerate}
    \item tmQMg and tmQMg-L are linked together.
    \begin{enumerate}
        \item The geometries of the ligand instances in tmQMg-L are matched against the geometries reported in tmQMg, in order to associate each instance with the correct group of atoms.
        \item Using this information, the ligand-level composition of each TMC in tmQMg is computed.
        \item A sanity check is performed, verifying that for each TMC, the reconstructed ligands extracted from tmQMg-L cover the entire molecular graph. TMCs that fail this check are removed from the dataset.\footnote{Out of the $74,636$ TMCs originally extracted from tmQMg, this step removes $26,811$ of them (approximately $36\%$), leaving $47,825$ viable TMCs.}
    \end{enumerate}
    \item tmQM and tmQMg are linked together.
    \begin{enumerate}
        \item The content of tmQM is inspected and indexed according to the viable TMCs identified in the previous step. The TMCs for which it is not possible to retrieve complete information are removed from the dataset.\footnote{This step removes $11$ more TMCs, leading to the final dataset size of $47,814$ TMCs.}
        \item The information in tmQM is reorganised in a new structure that mirrors that of tmQMg.
    \end{enumerate}
    \item tmQM, tmQMg and tmQMg-L are merged into tmQM-RDF.
    \begin{enumerate}
        \item All the processed information is used to build the RDF representation of the remaining TMCs, using the vocabulary defined in \sctref{methods_semantic}.
    \end{enumerate}
\end{enumerate}

\subsection{Dataset summary}
In total, the tmQM-RDF knowledge graph contains approximately $534$ million triples. Of these, about $12$ million are used to define the terminology component and the instances of the classes \texttt{lgCr:MetalCentreClass}, \texttt{lgLr:LigandClass} and \texttt{tmAr:Element} (together with their associated chemical properties). The remaining triples describe the assertion component, i.e., the TMCs. Each TMC, on average, requires approximately $11,000$ triples (not counting the required assertion component for ligand and atomic species) to be completely described.\\

As for the chemical expressivity of tmQM-RDf, a concise summary is provided in \figref{tmQM_RDF_summary}. 

\figref{tmQM_RDF_summary_centres} shows a bar plot of the size of the subpopulations of tmQM-RDF identified by the metal centre, and thus it offers an initial account of the diversity of the complexes represented in the dataset. In particular, it shows the wide selection of transition metals that can be encountered, while also emphasising that this representation is not uniform.

As per the abundance and the variety of the ligands that bind to the metal centres, there are currently $27,905$ different ligands in tmQM-RDF, and the average number of ligands per TMC is $3.24$ ($\mathrm{SD} = 1.16$). It is also interesting to investigate the frequency of appearance of each ligand, as adequate representativeness is essential for downstream tasks that depend on data to extract realistic correlations between different properties, substituents, and behaviours (e.g., generative tasks). With this regard, \figref{tmQM_RDF_summary_ligands} shows the $95\%$ quantile of the empirical distributions of two metrics representing two levels of ligand (absolute) frequency: the total count of occurrences of each ligand ($q_{0.95}^\mathrm{inst}$) and the number of TMCs that posses at least one copy of each ligand ($q_{0.95}^\mathrm{copy}$). These measurements are taken both for the entire dataset and for each metal centre-specific subpopulation. Both $q_{0.95}^\mathrm{inst}$ and $q_{0.95}^\mathrm{copy}$ have been designed to capture information availability with regards to ligands. Low values ($q_{0.95}^\mathrm{copy} = 5$ and $q_{0.95}^\mathrm{inst} = 6$ for the entire dataset) indicate that most ligands appear only a handful of times. Specifically, $q_{0.95}^\mathrm{inst} = 6$ means that at least $95\%$ of the ligands appear no more than $6$ times in total, whereas $q_{0.95}^\mathrm{copy} = 5$ means that at least $95\%$ of ligands appear, each, in at most $5$ different TMCs. When compared to the size of tmQM-RDF, these numbers point to a strongly sparse distribution of ligands.

We ultimately consider the $10$ most frequent ligands, in the $q_{0.95}^\mathrm{inst}$ sense, as a simple means of visually representing the most typical structures found within the dataset. We show a plot of the counts of these ligands, in the same two modalities introduced above, and their graphical representations in \figrefmult{tmQM_RDF_summary_freq_ligands_counts}{tmQM_RDF_summary_freq_ligands_structures} respectively.

\begin{figure}
    \begin{center}
    \makebox[\textwidth][c]{%
    \begin{tabular}{c c}
    \begin{subfigure}[t]{0.58\linewidth}
        \includegraphics[width=\linewidth]{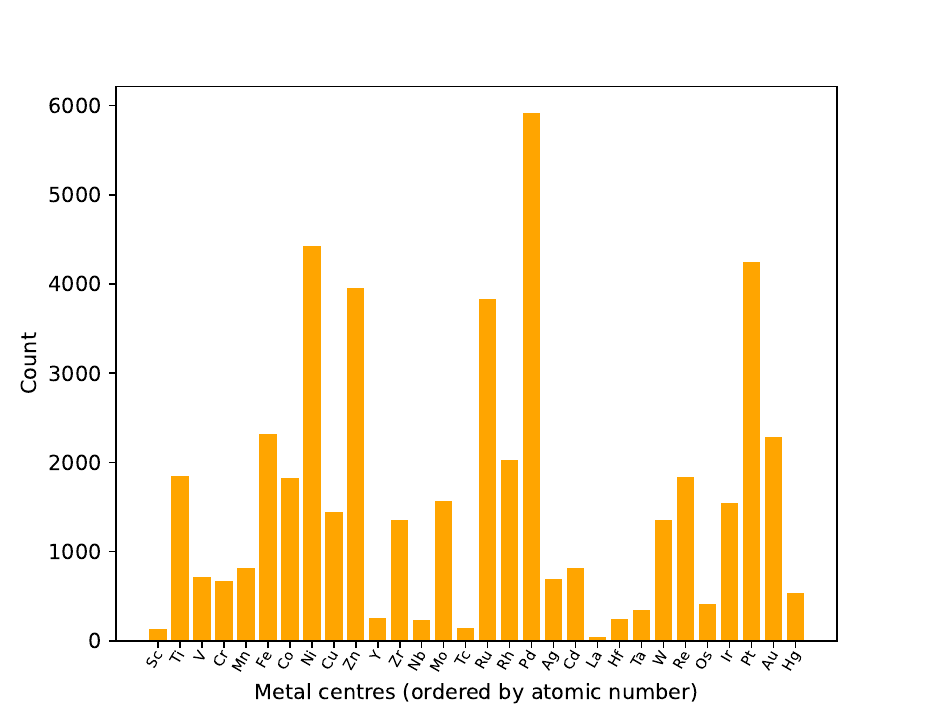}
        \caption{}\figlabel{tmQM_RDF_summary_centres}
    \end{subfigure} &
    \begin{subfigure}[t]{0.58\linewidth}
        \includegraphics[width=\linewidth]{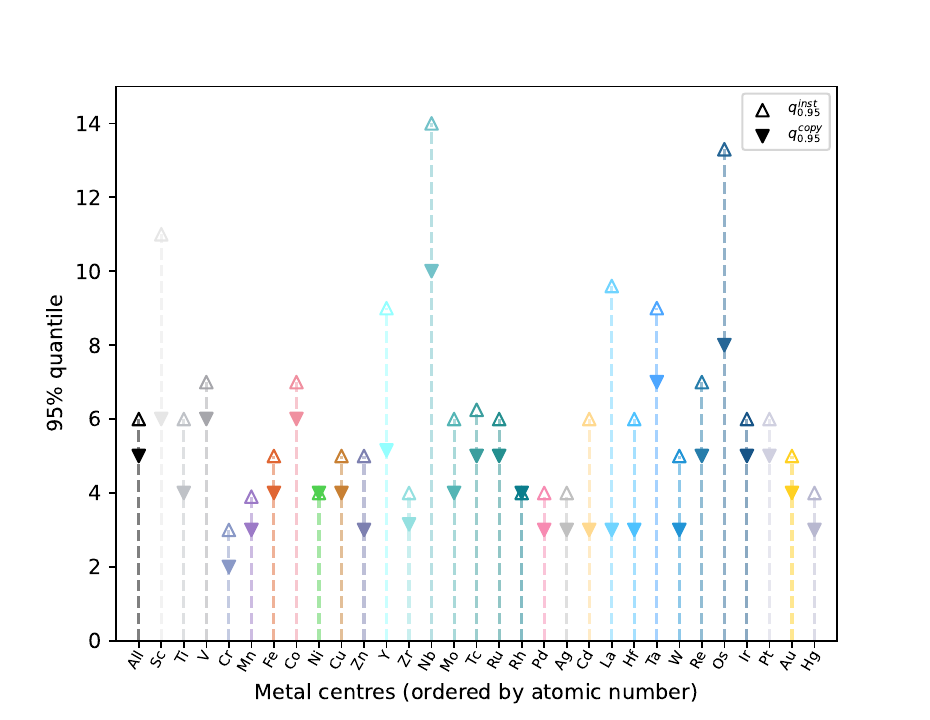}
        \caption{}\figlabel{tmQM_RDF_summary_ligands}
    \end{subfigure} \\
    \begin{subfigure}[t]{0.54\linewidth}
        \hspace*{-0.5cm}
        \includegraphics[width=\linewidth]{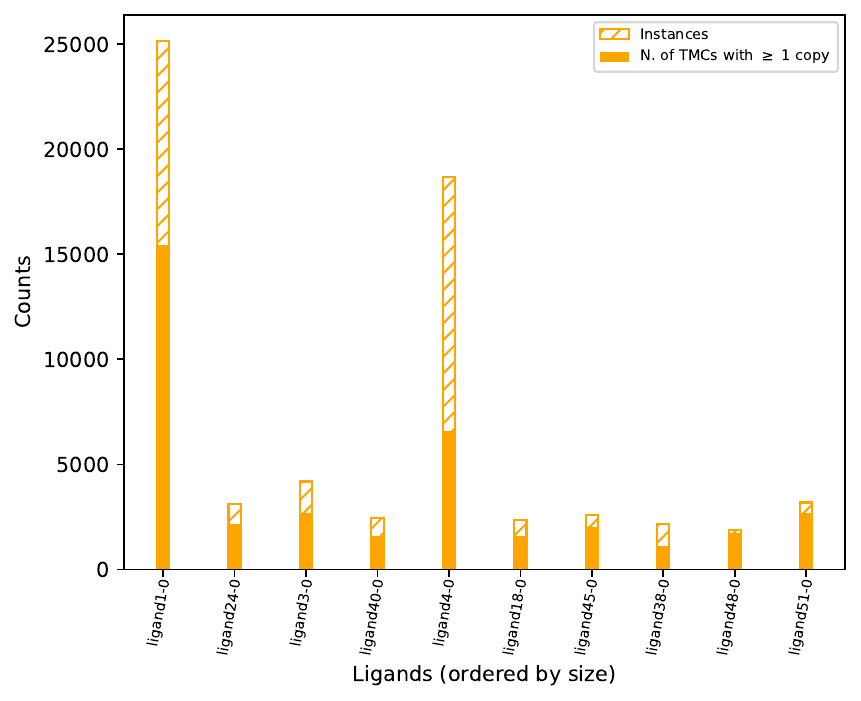}
        \caption{}\figlabel{tmQM_RDF_summary_freq_ligands_counts}
    \end{subfigure} &
    \begin{subfigure}[t]{0.54\linewidth}
        \raisebox{1.75cm}{%
        \includegraphics[width=\linewidth,margin=0pt 0cm 0pt -2cm]{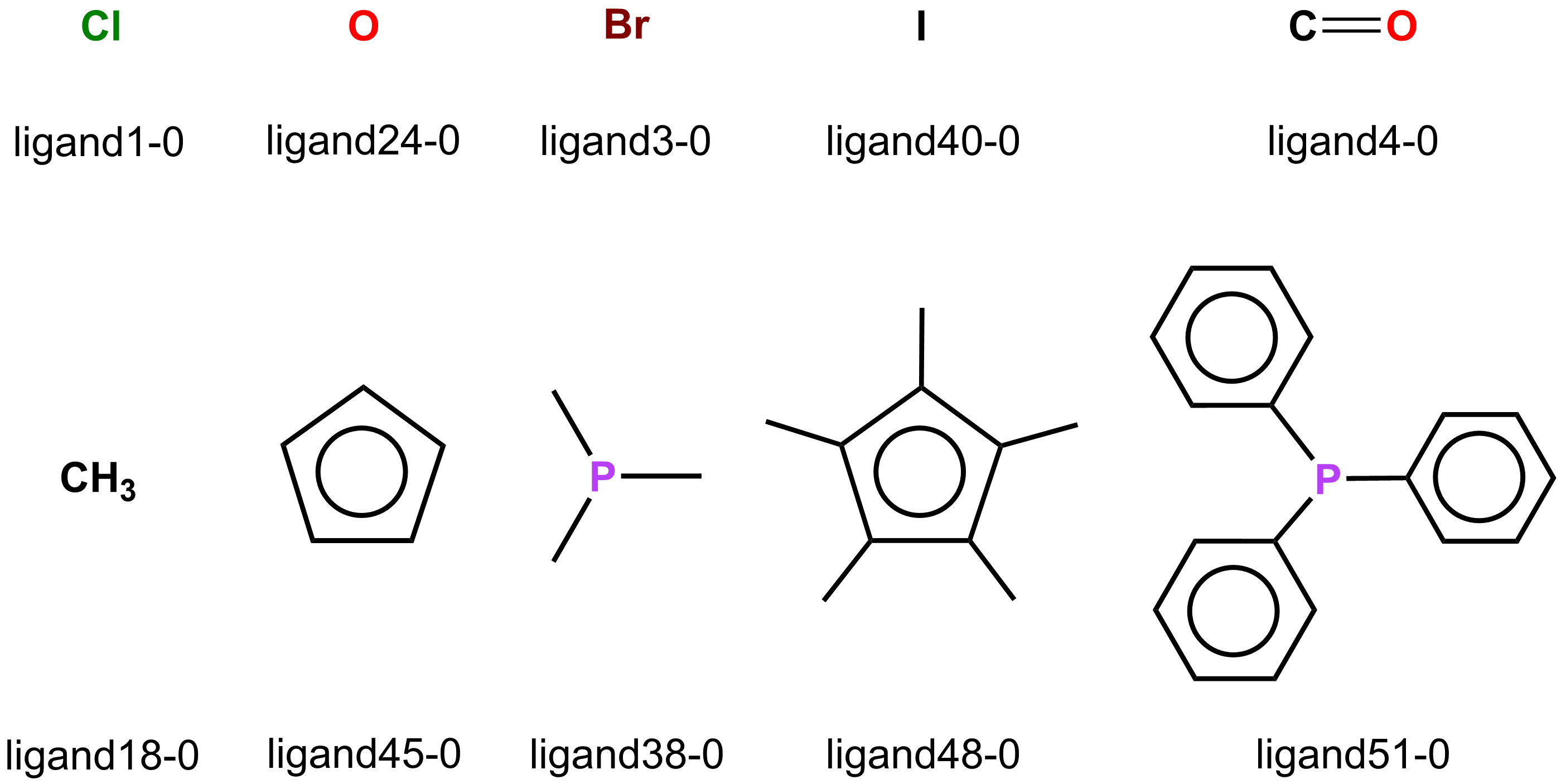}
        }
        \caption{}\figlabel{tmQM_RDF_summary_freq_ligands_structures}
    \end{subfigure}
    \end{tabular}
    }
    \end{center}
    \caption{(a) The bar plot of the counts of the appearances of the metal centres found in tmQM-RDF. (b) The $95\%$ quantiles of the empirical distributions of the count of single ligand instances ($q_{0.95}^\mathrm{inst}$) and of the count of TMCs that contain at least one copy of each ligand ($q_{0.95}^\mathrm{copy}$), in each subpopulation of TMCs identified by the metal centre. Vertical lines are added for readability. (c) The bar plot of the counts of the appearances of the 10 most frequent ligands in tmQM-RDF. (d) A visual representation of the 10 most frequent ligands in tmQM-RDF.}
    \figlabel{tmQM_RDF_summary}
\end{figure}

\subsection{Selected subpopulations}\sctlabel{dataset_selection}
With the aim of facilitating computational experiments and preliminary investigations, we select two smaller datasets sampled from the general population described within tmQM-RDF, whose distribution of chemical composition and structure does not deviate excessively from that of the parent dataset. Both datasets consist of $1600$ TMCs, split into training ($1000$ TMCs), validation ($300$ TMCs) and test ($300$ TMCs) partitions. In order to cover a wide application scope, the two subpopulations consider transition metal centres belonging to groups from separated regions of the periodic table. One dataset considers an early region (Cr, Mo, W, \emph{earlyTM} dataset selection), whereas the other considers a late region (Ni, Pd, Pt, \emph{lateTM} dataset selection). Summaries of the two datasets and additional information about the sampling procedure can be found in \apxref{selection}.\\

These subpopulations are presented in the form of lists of CSD codes. Extraction from the global KG, if needed, is greatly facilitated by the organisation of tmQM-RDF: the subgraph corresponding to each TMC is indexed by its CSD code and can be easily loaded into a triple store independently of the rest of the dataset.

\section{Experiments: plausible TMC reconstruction}\sctlabel{experiments}
In this section we propose an illustrative application of the tmQM-RDF dataset, by considering a simple TMC reconstruction and manipulation task, which we call \emph{plausible TMC completion}. Given an incomplete molecular scaffold (i.e., a TMC with a missing ligand) and a ranked set of candidate completions into full TMCs, we shall refer to the top$-k$ highest scoring reconstructions as \emph{plausible}, for an appropriate integer value of $k$.

Conceptually, in this experiment we consider a population $\mathcal{G}$ of TMCs (in practice, either one of the two selections introduced in \sctref{dataset_selection}) sampled from the set of TMCs described within tmQM-RDF. The dataset $\mathcal{G}$ is then split into two parts:\footnote{In \sctref{dataset_selection} we mention that our dataset selections are divided in three subsets, namely a training set, a validation set and a test set. Since the experiments hereby proposed do not require a validation set, we only employ the training and test partitions. We consider then, for each selection, a population of only $1300$ TMCs. On the other hand, more sophisticated frameworks, like conformal prediction \citep{balasubramanian2014conformal}, could naturally make use of the validation set as a calibration set.} a training set $\mathcal{G}_\mathrm{train}$, used to acquire relevant structural and chemical information that can be used to assign a score to a candidate TMC, and a test set $\mathcal{G}_\mathrm{test}$, which constitutes the population of target TMCs that have to be reconstructed starting from molecular scaffolds which are to be completed with the addition of a single ligand.

The former is used to implement the training phase of our experiment, the latter for the test phase. Additional information on our approach, including precise definitions and formulations, can be found in \apxref{experiments_methods}.\\

\subsection{Training phase}
The ultimate goal of the training phase is to construct a real-valued score function \mbox{$f:\mathcal{G} \to \R$} that can estimate the degree to which it is plausible that a given TMC was sampled from the population of interest. To achieve this, we proceed as follows.
\begin{enumerate}
    \item From $\mathcal{G}_\mathrm{train}$, a set of frequent (basic graph) patterns, in the SPARQL sense \citep{harris2013sparql, ali2022sparqlsurvey, ell2025graph}, is extracted. These are meant to capture recurrent substructures within the TMC population, with regards to the modalities through which ligands attach to the metal centre atom, i.e., ligand identities, denticity/hapticity order and binding atoms.
    \item Using a similarity-based agglomerative clustering algorithm, the set of mined patterns is partitioned. In this way, a set of families of typical substructures can be determined. The similarity between two patterns is evaluated as the cosine similarity between binary vectors, each representing one of the two patterns. The entries of these vectors corresponds to the TMCs in $\mathcal{G}_\mathrm{train}$ and evaluate to $1$ if and only if the associated TMC is matched by the pattern, and $0$ otherwise.
    \item\label{enum:feature_vector_def} A (binary) feature vector is now created for each TMC in $\mathcal{G}_\mathrm{train}$. Each feature corresponds to one of the families identified in the previous point and will evaluate to $1$ if and only if that family is expressed within the TMC (i.e., there exists a pattern in the family that matches the TMC). The feature will evaluate to $0$ otherwise.
    \item The joint distribution of the features is estimated.
\end{enumerate}

This last step is the most crucial of the training phase. If $G \in \mathcal{G}$ is a TMC, $\bm{Y}_G$ is the corresponding binary feature vector introduced in step \ref{enum:feature_vector_def} and $\mathbb{P}(\bm{Y}_G \mid \mathcal{G}_\mathrm{train})$ is its joint distribution, then the score of $G$ can be defined as
\begin{equation*}
    f(G) := \log{\mathbb{P}(\bm{Y}_G\mid \mathcal{G}_{\mathrm{train}})}.
\end{equation*}
If the joint probability distribution of the features is available, and the features correctly represent the fundamental structural modalities underneath the dataset, then highly representative TMCs will generate feature configurations that possess a high probability. Obviously, this distribution is not known and the high dimensionality of the feature space makes estimation generally complicated \citep{johnstone2009highdimensions}. For this reason, we model the distribution using the Bayesian Network (BN) framework \citep{koller2009probabilistic, kitson2023bnsurvey} and estimate it based on the training data. BNs are specifically designed to deal with correlated, high-dimensional features, hence they constitute a suitable solution to the challenges at hand.

\subsection{Test phase}
The test phase is meant to evaluate the goodness of the estimated score function $f$ as a tool for TMC manipulation. In particular, under the assumption that the TMCs in $\mathcal{G}$, and therefore in $\mathcal{G}_\mathrm{test}$, are themselves representative, we expect that a suitable $f$ should be able to adequately reconstruct the elements of $\mathcal{G}$ when new transition metal complexes are created by attaching one ligand to an incomplete molecular scaffold. For ease of exposition, we consider here a single test TMC $G \in \mathcal{G}_\mathrm{test}$, but the procedure below will be repeated identically for each TMC in the test set.

\begin{enumerate}
    \item The set $L_\mathrm{train}$ of all the ligands that appear in $\mathcal{G}_\mathrm{train}$ is computed.
    \item A molecular scaffold is created, by removing from $G$, at random, one of its ligand. The ligand to remove is chosen among the ligands of $G$ that also appear in $L_\mathrm{train}$.
    \item A family of candidate reconstructions is obtained through the completion of the molecular scaffold using, one by one, the ligands from different subsets of $L_\mathrm{train}$. These subsets are determined by applying different constraints on the candidate ligands. In this work we consider the following filters:
    \begin{itemize}
        \item no constraints;
        \item same hapticity and denticity order as the removed ligand;
        \item same charge as the removed ligand.
    \end{itemize}
    Notice that, by removing from $G$ a ligand from $L_\mathrm{train}$, we ensure that $G$, the original TMC, also appears among the considered reconstructions.
    \item Each reconstruction is scored via $f$ and the set of reconstructions is ranked accordingly.
\end{enumerate}

Being this experiment focused on TMC reconstruction, a natural way to interpret the resulting ranked family of reconstructions is to look for the original TMC among the plausible reconstructions. An equally natural metric to consider, then, is the top$-k$ accuracy \citep{lapin2016topk, nia2025personalized}, which can be interpreted as the fraction of test TMCs that are deemed to be plausible (recall the definition of plausible TMC completion given at the beginning of the section). We focus here on the thresholds $k \in \{1,\,5,\,10\}$. This evaluation procedure is analogous to the task of most likely sentence completion in LLM research \citep{park2020assessment, nozza2022completion}.

\subsection{Results and discussion}\sctlabel{experiments_results}
In \tblref{short_results} we summarise the results of the experiment. For each of the two dataset selections introduced in \sctref{dataset_selection}, we report the number of entries in the structural feature vector $\bm{Y}_G$, together with the total number of frequent patterns that have been mined and later clustered into the resulting families of substructures. In addition, we report the measured values of top$-k$ accuracy, under each constraint, for $k \in \{1,\, 10\}$.\\

Based on the results in \tblref{short_results}, we focus the discussion below on two key takeaways.

First, despite achieving substantially different top$-1$ accuracy, our approach is capable of achieving more that $80\%$ accuracy already for $k = 10$ for both datasets, across all three constraints. Considering that there are hundreds of possible reconstructions in each scenario for each incomplete TMC, as reported in \tblref{short_results}, this indicates that our approach is able to recognise the test TMCs as plausible members of the TMC population from which $\mathcal{G}_{\mathrm{train}}$ was sampled. Most importantly, this successfully validates the usefulness of the integrated representation offered by the tmQM-RDF knowledge graph.

Second, in the case of the \emph{lateTM} dataset, despite having access to a $\sim 2$ times larger number of patterns with respect to \emph{earlyTM} ($3710$ versus $1994$), our method is not capable of performing as well. There are multiple factors that could hypothetically contribute to this phenomenon, each deserving of further investigation which may inspire future developments. One option is that there truly is a small set of structures which alone accounts for most of the structural variability in tmQM-RDF. The high number of patterns found when analysing \emph{lateTM} could then be an indication of overfitting. Another possibility is that the mined patterns are lacking in quality, or fail to capture a sufficiently diverse set of structures. This last interpretation could also explain why such a large number of frequent patterns gave rise to a number of families which is comparable to that found in \emph{earlyTM}.

We provide additional considerations and a more extensive set of results in \apxref{extensive_results}.

\begin{table}[!t]
\footnotesize
    \centering
    \makebox[\textwidth][c]{%
    
    \begin{tabular}{c c c c c c c c c c c}
         \toprule
\multirow{2}{*}{\makecell{Dataset\\selection}} & No. of features  & \multicolumn{9}{c}{Filter}\\
\cmidrule(lr){2-2} \cmidrule(lr){3-11}
 & \emph{Clustered (Mined)} & \multicolumn{3}{c}{\emph{No filter}} & \multicolumn{3}{c}{\emph{Hapticity/denticity order}} & \multicolumn{3}{c}{\emph{Charge}}\\
\cmidrule(lr){3-5}\cmidrule(lr){6-8}\cmidrule(lr){9-11}
 &   & \multirow{2}{*}{\makecell{No. of\\candidates}} & \multicolumn{2}{c}{Top$-k$ accuracy} & \multirow{2}{*}{\makecell{No. of\\candidates}} & \multicolumn{2}{c}{Top$-k$ accuracy} & \multirow{2}{*}{\makecell{No. of\\candidates}} & \multicolumn{2}{c}{Top$-k$ accuracy}\\
\cmidrule(lr){4-5}\cmidrule(lr){7-8}\cmidrule(lr){10-11}
 &   &   & $\emph{k = 1}$ & $\emph{k = 10}$ &   & $\emph{k = 1}$ & $\emph{k = 10}$ &   & $\emph{k = 1}$ & $\emph{k = 10}$\\
\midrule
\midrule
\emph{earlyTM} & $455\ (1994)$  & $828$ & 0.619 & 0.866 &  $337.0$ ($\mathrm{SD} = 121.5$) & 0.702 & 0.967 & $321.2$ ($\mathrm{SD} = 116.6$) & 0.649 & 0.896\\
\emph{lateTM} & $518\ (3710)$  & $325$ & ${0.357}$ & ${0.833}$ & $133.1$ ($\mathrm{SD} = 41.1$) & ${0.403}$ & ${0.903}$ & $141.75$ ($\mathrm{SD} = 26.8$) & ${0.377}$ & ${0.920}$\\
\bottomrule
    \end{tabular}
    
    }
    \caption{Number of features used, average number of candidate reconstructions per scaffold and top$-k$ accuracy measured during the TMC completion task for the two dataset selections. The quantity in parentheses indicate the number of mined frequent patterns which have been clustered into the final families of typical substructures.}
    \tbllabel{short_results}
\end{table}

\section{Conclusions}\sctlabel{conclusions}

In this paper we introduce the tmQM-RDF knowledge graph. This resource collects detailed information on a large population of TMCs, by carefully extracting, aggregating and processing the data that have been collected in the tmQM dataset series. At the core of this effort lies a dedicated vocabulary, based on the RDF syntax, specifically designed to make chemical and structural information explicit, interpretable and accessible.

It is indeed in the accessibility of the data that our proposal produces its most valuable contribution, as it allows for the extraction of heterogeneous data, ranging from the querying of quantitative properties to that of the structural composition of a TMC either in terms of atoms or ligands, from a unified source using the SPARQL query language. Even in the case in which it is not possible or convenient to retrieve all the desired information from tmQM-RDF, our knowledge graph still includes all the relevant keys used by the original datasets to index the data, hence more complex investigations can be easily carried out starting from the response provided by tmQM-RDF.

We showed in this manuscript how a simple use case aimed at TMC manipulation can benefit from the formulation used by tmQM-RDF. In particular, we showed that it is possible to extract extensive amounts of relevant structural information that spans both the ligand and the atomic level of a TMC by relying on the SPARQL graph pattern formalism. A relatively simple probabilistic model, such as a Bayesian Network, was capable of harnessing the signal embedded in those patterns, in order to successfully reconstruct a new population of TMCs. This suggests a promising future development of data-driven computational TMC manipulation and analysis.

\section{Data and code availability}\sctlabel{data}
The tmQM-RDF knowledge graph and the code used in the experiments can be found at \url{https://github.com/luca-cibinel/tmQM-RDF}. A summary of all the parameters involved in the creation of tmQM-RDF and in the experiments is available in \apxref{parameters}.

\section{Acknowledgements}
This work was supported by the Research Council of Norway through its Centre of Excellence Integreat - The Norwegian Centre for knowledge-driven machine learning, project number 332645. T.\,L. and D.\,B. acknowledge the RCN FRIPRO programs for ground-breaking research (catLEGOS project; number 325003) and National Research Centers of Excellence (Hylleraas; project number 262695). R.\,D.\,B. acknowledges Plumbin’ (RCN, n. 323985). The authors would like to express their gratitude to Hannes Kneiding for his help in the management of the datasets of the tmQM series.

\section{Conflicts of interest}
There are no conflicts of interest to declare.

\bibliographystyle{chicago}
\bibliography{parsed_for_paper_1_main_proj5_bib}

@article{abu2015tmcreview,
 author = {Abu-Dief, Ahmed M and Mohamed, Ibrahim MA},
 journal = {Beni-Suef University Journal of Basic and Applied Sciences},
 note = {\url{https://doi.org/10.1016/j.bjbas.2015.05.004}},
 pages = {119--133},
 title = {A review on versatile applications of transition metal complexes incorporating {S}chiff bases},
 volume = {4},
 year = {2015}
}

@article{ali2022sparqlsurvey,
 author = {Ali, Waqas and Saleem, Muhammad and Yao, Bin and Hogan, Aidan and Ngomo, Axel-Cyrille Ngonga},
 journal = {The VLDB Journal},
 note = {\url{https://doi.org/10.1007/s00778-021-00711-3}},
 pages = {1--26},
 title = {A survey of {RDF} stores \& {SPARQL} engines for querying knowledge graphs},
 volume = {31},
 year = {2022}
}

@article{ankan2024pgmpy,
 author = {Ankan, Ankur and Johannes Textor},
 journal = {Journal of Machine Learning Research},
 note = {\url{http://jmlr.org/papers/v25/23-0487.html}},
 pages = {1--8},
 title = {pgmpy: A {P}ython Toolkit for {B}ayesian Networks},
 volume = {25},
 year = {2024}
}

@misc{ayers2008uri,
 author = {Ayers, Danny and Völkel, Max},
 howpublished = {\url{https://www.w3.org/TR/cooluris/#oldweb}},
 note = {Accessed: 2025-12-03},
 title = {Cool {URI}s for the Semantic Web},
 year = {2008}
}

@book{balasubramanian2014conformal,
 author = {Balasubramanian, Vineeth and Ho, Shen Shyang and Vovk, Vladimir},
 note = {ISBN: 978-0-12-398537-8},
 publisher = {Newnes},
 title = {Conformal Prediction for Reliable Machine Learning: Theory, Adaptations and Applications},
 year = {2014}
}

@article{balcells2020tmqm,
 author = {Balcells, David
and Skjelstad, Bastian Bjerkem},
 journal = {Journal of Chemical Information and Modeling},
 note = {\url{https://doi.org/10.1021/acs.jcim.0c01041}},
 pages = {6135--6146},
 title = {{t}m{QM} Dataset-Quantum Geometries and Properties of 86k Transition Metal Complexes},
 volume = {60},
 year = {2020}
}

@misc{beckett2014turtle,
 author = {Beckett, David and Berners-Lee, Tim and Prud'hommeaux, Eric and Carothers, Gavin},
 howpublished = {\url{https://www.w3.org/TR/turtle/#grammar-production-prefixID}},
 note = {Accessed: 2025-12-03},
 title = {{RDF} 1.1 Turtle},
 year = {2014}
}

@article{belleau2008biotordf,
 author = {Belleau, Fran{\c{c}}ois and Nolin, Marc-Alexandre and Tourigny, Nicole and Rigault, Philippe and Morissette, Jean},
 journal = {Journal of Biomedical Informatics},
 note = {\url{https://doi.org/10.1016/j.jbi.2008.03.004}},
 pages = {706--716},
 title = {{Bio2RDF}: towards a mashup to build bioinformatics knowledge systems},
 volume = {41},
 year = {2008}
}

@book{bergmann2014logic,
 author = {Bergmann, Merrie and Moor, James and Nelson, Jack},
 note = {ISBN: 978-0-07-803841-9},
 publisher = {McGraw-Hill/Connect Learn Succeed},
 title = {The Logic Book},
 year = {2014}
}

@manual{boettiger2021virtuoso,
 author = {Carl Boettiger},
 howpublished = {https://CRAN.R-project.org/package=virtuoso},
 note = {R package version 0.1.8},
 title = {{virtuoso}: Interface to {`Virtuoso'} using {`ODBC'}},
 year = {2021}
}

@article{borkum2014usage,
 author = {Borkum, Mark I and Frey, Jeremy G},
 journal = {Journal of Cheminformatics},
 note = {\url{https://doi.org/10.1186/1758-2946-6-18}},
 pages = {18},
 title = {Usage and applications of Semantic Web techniques and technologies to support chemistry research},
 volume = {6},
 year = {2014}
}

@inproceedings{bratt2005semanticweb,
 author = {Bratt, Steve},
 booktitle = {2005 IEEE International Symposium on Mass Storage Systems and Technology},
 note = {\url{https://doi.org/10.1109/LGDI.2005.1612480}},
 pages = {124--128},
 publisher = {IEEE},
 title = {Toward a web of data and programs},
 year = {2005}
}

@misc{brickley2014rdfs,
 author = {Brickley, Dan and Guha, R.V.},
 howpublished = {\url{https://www.w3.org/TR/rdf-schema/#ch_introduction}},
 note = {Accessed: 2025-06-12},
 title = {{RDF} Schema 1.1},
 year = {2014}
}

@inproceedings{champin2003graphsimilarity,
 author = {Champin, Pierre-Antoine and Solnon, Christine},
 booktitle = {5th {Int}. {Conf}. {On} {Case}-{Based} {Reasoning} ({ICCBR} 2003)},
 note = {\url{https://doi.org/10.1007/3-540-45006-8_9}},
 pages = {80--95},
 publisher = {Springer},
 title = {Measuring the similarity of labeled graphs},
 year = {2003}
}

@inproceedings{degiacomo1996tbox,
 author = {De Giacomo, Giuseppe and Lenzerini, Maurizio},
 booktitle = {Proceedings of the 1996 International Workshop on Description Logics, November 2-4, 1996, Cambridge, MA, {USA}},
 note = {\url{https://aaai.org/papers/037-ws96-05-004/}},
 pages = {37--48},
 publisher = {{AAAI} Press},
 title = {{TBox} and {ABox} reasoning in expressive description logics},
 year = {1996}
}

@article{djoumbou2016classyfire,
 author = {Djoumbou Feunang, Yannick and Eisner, Roman and Knox, Craig and Chepelev, Leonid and Hastings, Janna and Owen, Gareth and Fahy, Eoin and Steinbeck, Christoph and Subramanian, Shankar and Bolton, Evan and others},
 journal = {Journal of Cheminformatics},
 note = {\url{https://doi.org/10.1186/s13321-016-0174-y}},
 pages = {1--20},
 title = {{ClassyFire}: automated chemical classification with a comprehensive, computable taxonomy},
 volume = {8},
 year = {2016}
}

@misc{ell2025graph,
 author = {Ell, Basil},
 howpublished = {\url{https://doi.org/10.48550/arXiv.2512.15308}},
 title = {Graph Pattern-based Association Rules Evaluated Under No-repeated-anything Semantics in the Graph Transactional Setting},
 year = {2025}
}

@article{fu2015pubchemrdf,
 author = {Fu, Gang and Batchelor, Colin and Dumontier, Michel and Hastings, Janna and Willighagen, Egon and Bolton, Evan},
 journal = {Journal of Cheminformatics},
 note = {\url{https://doi.org/10.1186/s13321-015-0084-4}},
 pages = {1--15},
 title = {{PubChemRDF}: towards the semantic annotation of {PubChem} compound and substance databases},
 volume = {7},
 year = {2015}
}

@article{gallegos2024explainable,
 author = {Gallegos, Miguel and Vassilev-Galindo, Valentin and Poltavsky, Igor and Mart{\'\i}n Pend{\'a}s, {\'A}ngel and Tkatchenko, Alexandre},
 journal = {Nature Communications},
 note = {\url{https://doi.org/10.1038/s41467-024-48567-9}},
 pages = {1--13},
 title = {Explainable chemical artificial intelligence from accurate machine learning of real-space chemical descriptors},
 volume = {15},
 year = {2024}
}

@article{garrison2023applying,
 author = {Garrison, Aaron G and Heras-Domingo, Javier and Kitchin, John R and dos Passos Gomes, Gabriel and Ulissi, Zachary W and Blau, Samuel M},
 journal = {Journal of Chemical Information and Modeling},
 note = {\url{https://doi.org/10.1021/acs.jcim.3c01226}},
 pages = {7642--7654},
 title = {Applying Large Graph Neural Networks to Predict Transition Metal Complex Energies Using the {tmQM\_wB97MV} Data Set},
 volume = {63},
 year = {2023}
}

@article{gee2025enriched,
 author = {Gee, Winston and Doyle, Abigail and Vargas, Santiago and Alexandrova, Anastassia N},
 journal = {Digital Discovery},
 note = {\url{https://doi.org/10.1039/D5DD00220F}},
 pages = {3378--3388},
 title = {Multi-level {QTAIM}-enriched graph neural networks for resolving properties of transition metal complexes},
 volume = {4},
 year = {2025}
}

@article{goh2017deepchem,
 author = {Goh, Garrett B and Hodas, Nathan O and Vishnu, Abhinav},
 journal = {Journal of Computational Chemistry},
 note = {\url{https://doi.org/10.1002/jcc.24764}},
 pages = {1291--1307},
 title = {Deep learning for computational chemistry},
 volume = {38},
 year = {2017}
}

@article{gordon1988chemical,
 author = {Gordon, John E},
 journal = {Journal of Chemical Information and Computer Sciences},
 note = {\url{https://doi.org/10.1021/ci00058a011}},
 pages = {100--115},
 title = {Chemical inference. 3. Formalization of the language of relational chemistry: ontology and algebra},
 volume = {28},
 year = {1988}
}

@article{griego2021acceleration,
 author = {Griego, Charles D and Kitchin, John R and Keith, John A},
 journal = {International Journal of Quantum Chemistry},
 note = {\url{https://doi.org/10.1002/qua.26380}},
 pages = {e26380},
 title = {Acceleration of catalyst discovery with easy, fast, and reproducible computational alchemy},
 volume = {121},
 year = {2021}
}

@article{groom2016cambridge,
 author = {Groom, C. R. and Bruno, I. J. and Lightfoot, M. P. and Ward, S. C.},
 journal = {Acta Crystallographica Section B: Structural Science, Crystal Engineering and Materials},
 note = {\url{https://doi.org/10.1107/S2052520616003954}},
 pages = {171--179},
 title = {The {C}ambridge Structural Database},
 volume = {72},
 year = {2016}
}

@misc{harris2013sparql,
 author = {Harris, Steve and Seaborne, Andy},
 howpublished = {\url{https://www.w3.org/TR/2013/REC-sparql11-query-20130321/}},
 note = {Accessed: 2025-11-18},
 title = {{SPARQL} 1.1 Query Language},
 year = {2013}
}

@article{hastings2011chemical,
 author = {Hastings, Janna and Chepelev, Leonid and Willighagen, Egon and Adams, Nico and Steinbeck, Christoph and Dumontier, Michel},
 journal = {PloS One},
 note = {\url{https://doi.org/10.1371/journal.pone.0025513}},
 pages = {e25513},
 title = {The chemical information ontology: provenance and disambiguation for chemical data on the biological semantic web},
 volume = {6},
 year = {2011}
}

@misc{hayes2004rdfsemantics,
 author = {Hayes, Patrick},
 howpublished = {\url{https://www.w3.org/TR/2004/REC-rdf-mt-20040210/#rdfs_interp}},
 note = {Accessed: 2025-06-12},
 title = {{RDF} Semantics},
 year = {2004}
}

@article{he2024reaction,
 author = {He, Linke and Fu, Yulong and Hou, Shaoyi and Wang, Guoqiang and Zhao, Jiabao and Xing, Yipeng and Li, Shuhua and Ma, Jing},
 journal = {Artificial Intelligence Chemistry},
 note = {\url{https://doi.org/10.1016/j.aichem.2023.100034}},
 pages = {100034},
 title = {Reaction condition-and functional group-specific knowledge discovery: Data-and computation-based analysis on transition-metal-free transformation of organoborons},
 volume = {2},
 year = {2024}
}

@article{heid2023uncertainty,
 author = {Heid, Esther and McGill, Charles J and Vermeire, Florence H and Green, William H},
 journal = {Journal of Chemical Information and Modeling},
 note = {\url{https://doi.org/10.1021/acs.jcim.3c00373}},
 pages = {4012--4029},
 title = {Characterizing uncertainty in machine learning for chemistry},
 volume = {63},
 year = {2023}
}

@article{hogan2021knowledge,
 author = {Hogan, Aidan and Blomqvist, Eva and Cochez, Michael and d’Amato, Claudia and Melo, Gerard De and Gutierrez, Claudio and Kirrane, Sabrina and Gayo, Jos{\'e} Emilio Labra and Navigli, Roberto and Neumaier, Sebastian and others},
 journal = {ACM Computing Surveys (Csur)},
 note = {\url{https://doi.org/10.1145/3447772}},
 pages = {1--37},
 title = {Knowledge graphs},
 volume = {54},
 year = {2021}
}

@misc{iupac2025element,
 author = {{IUPAC}},
 howpublished = {https://goldbook.iupac.org/terms/view/C01022},
 note = {Online version 5.0.0. Accessed: 2026-01-28},
 title = {{`chemical element'} in \emph{{IUPAC} compendium of chemical terminology}},
 year = {2025}
}

@article{jiang2001median,
 author = {Jiang, Xiaoyi and Munger, A. and Bunke, H.},
 journal = {IEEE Transactions on Pattern Analysis and Machine Intelligence},
 note = {\url{https://doi.org/10.1109/34.954604}},
 pages = {1144--1151},
 title = {On median graphs: properties, algorithms, and applications},
 volume = {23},
 year = {2001}
}

@misc{jin2019junction,
 author = {Jin, Wengong and Barzilay, Regina and Jaakkola, Tommi},
 howpublished = {\url{https://doi.org/10.48550/arXiv.1802.04364}},
 title = {Junction Tree Variational Autoencoder for Molecular Graph Generation},
 year = {2019}
}

@article{johnstone2009highdimensions,
 author = {Johnstone, Iain M. and Titterington, D. Michael},
 journal = {Philosophical Transactions of the Royal Society A: Mathematical, Physical and Engineering Sciences},
 note = {\url{https://doi.org/10.1098/rsta.2009.0159}},
 pages = {4237--4253},
 title = {Statistical challenges of high-dimensional data},
 volume = {367},
 year = {2009}
}

@article{kanza2019new,
 author = {Kanza, Samantha and Frey, Jeremy Graham},
 journal = {Expert Opinion on Drug Discovery},
 note = {\url{https://doi.org/10.1080/17460441.2019.1586880}},
 pages = {433--444},
 title = {A new wave of innovation in Semantic web tools for drug discovery},
 volume = {14},
 year = {2019}
}

@article{kevlishvili2024language,
 author = {Kevlishvili, Ilia and St. Michel, Roland G. and Garrison, Aaron G. and Toney, Jacob W. and Adamji, Husain and Jia, Haojun and Román-Leshkov, Yuriy and Kulik, Heather J.},
 journal = {Faraday Discussions},
 note = {\url{https://doi.org/10.1039/D4FD00087K}},
 pages = {275--303},
 title = {Leveraging natural language processing to curate the {tmCAT}, {tmPHOTO}, {tmBIO}, and {tmSCO} datasets of functional transition metal complexes},
 volume = {256},
 year = {2025}
}

@article{kitson2023bnsurvey,
 author = {Kitson, Neville Kenneth and Constantinou, Anthony C. and Guo, Zhigao and Liu, Yang and Chobtham, Kiattikun},
 journal = {Artificial Intelligence Review},
 note = {\url{https://doi.org/10.1007/s10462-022-10351-w}},
 pages = {8721--8814},
 title = {A survey of {Bayesian} Network structure learning},
 volume = {56},
 year = {2023}
}

@misc{klyne2004rdfsyntax,
 author = {Klyne, Graham and Carrol, Jeremy J.},
 howpublished = {\url{https://www.w3.org/TR/2004/REC-rdf-concepts-20040210/#section-Graph-URIref}},
 note = {Accessed: 2025-12-03},
 title = {Resource Description Framework ({RDF}): Concepts and Abstract Syntax},
 year = {2004}
}

@article{kneiding2023learning,
 author = {Kneiding, Hannes and Lukin, Ruslan and Lang, Lucas and Reine, Simen and Pedersen, Thomas Bondo and De Bin, Riccardo and Balcells, David},
 journal = {Digital Discovery},
 note = {\url{https://doi.org/10.1039/D2DD00129B}},
 pages = {618--633},
 title = {Deep learning metal complex properties with natural quantum graphs},
 volume = {2},
 year = {2023}
}

@article{kneiding2024optmization,
 author = {Kneiding, Hannes
and Nova, Ainara
and Balcells, David},
 journal = {Nature Computational Science},
 note = {\url{https://doi.org/10.1038/s43588-024-00616-5}},
 pages = {263-273},
 title = {Directional multiobjective optimization of metal complexes at the billion-system scale},
 volume = {4},
 year = {2024}
}

@article{kneiding2025tmqmgstar,
 author = {Kneiding, Hannes and Balcells, David},
 journal = {Journal of Chemical Information and Modeling},
 note = {\url{https://doi.org/10.1021/acs.jcim.5c01958}},
 pages = {11766--11777},
 title = {{tmQMg*} Data Set: Excited State Properties of 74k Transition Metal Complexes},
 volume = {65},
 year = {2025}
}

@misc{kneiding2026delta,
 author = {Kneiding, Hannes and Balcells, David},
 howpublished = {\url{https://doi.org/10.26434/chemrxiv.10001643/v1}},
 title = {{$\Delta$}-Machine Learning for the Prediction of Metal Complex Properties},
 year = {2026}
}

@book{koller2009probabilistic,
 author = {Koller, Daphne and Friedman, Nir},
 note = {ISBN: 978-0-262-01319-2},
 publisher = {MIT Press},
 title = {Probabilistic Graphical Models: Principles and Techniques},
 year = {2009}
}

@article{lai2025silhouette,
 author = {Lai, Huixia and Huang, Tao and Lu, BinLong and Zhang, Shi and Xiaog, Ruliang},
 journal = {Neural Computing and Applications},
 note = {\url{https://doi.org/10.1007/s00521-024-10706-0}},
 pages = {3061–-3075},
 title = {Silhouette coefficient-based weighting k-means algorithm},
 volume = {37},
 year = {2025}
}

@inproceedings{lapin2016topk,
 author = {Lapin, Maksim and Hein, Matthias and Schiele, Bernt},
 booktitle = {Proceedings of the IEEE Conference on Computer Vision and Pattern Recognition},
 note = {\url{https://doi.org/10.1109/CVPR.2016.163}},
 pages = {1468--1477},
 publisher = {IEEE},
 title = {Loss functions for top-k error: Analysis and insights},
 year = {2016}
}

@book{lavravc2021representation,
 author = {Lavra{\v{c}}, Nada and Podpe{\v{c}}an, Vid and Robnik-{\v{S}}ikonja, Marko},
 note = {ISBN: 3-030-68817-8},
 publisher = {Springer},
 title = {Representation Learning},
 year = {2021}
}

@book{legendre2012numeco,
 author = {Legendre, Pierre},
 note = {ISBN: 1-283-73471-0},
 publisher = {Elsevier},
 title = {Numerical Ecology},
 year = {2012}
}

@article{liang2023transition,
 author = {Liang, Qi and Wang, Sizhe and Yao, Yao and Dong, Peng and Song, Haojie},
 journal = {Advanced Functional Materials},
 note = {\url{https://doi.org/10.1002/adfm.202300825}},
 pages = {2300825},
 title = {Transition metal compounds family for {Li--S} batteries: the {DFT}-guide for suppressing polysulfides shuttle},
 volume = {33},
 year = {2023}
}

@article{meggio2026path,
 author = {Claudio Meggio and Johan Pensar and David Balcells and Riccardo De Bin},
 journal = {Statistical Modelling},
 note = {\url{https://doi.org/10.1177/1471082X251407458}},
 pages = {1471082X251407458},
 title = {A path-based boosting algorithm for exploring transition metal compounds},
 volume = {OnlineFirst},
 year = {2026}
}

@article{nandy2021computational,
 author = {Nandy, Aditya and Duan, Chenru and Taylor, Michael G and Liu, Fang and Steeves, Adam H and Kulik, Heather J},
 journal = {Chemical Reviews},
 note = {\url{https://doi.org/10.1021/acs.chemrev.1c00347}},
 pages = {9927--10000},
 title = {Computational discovery of transition-metal complexes: from high-throughput screening to machine learning},
 volume = {121},
 year = {2021}
}

@misc{nia2025personalized,
 author = {Nia, Sohrab Namazi and Ghosh, Subhodeep and Roy, Senjuti Basu and Amer-Yahia, Sihem},
 howpublished = {\url{https://doi.org/10.48550/arXiv.2502.12998}},
 title = {Personalized Top-k Set Queries Over Predicted Scores},
 year = {2025}
}

@inproceedings{nozza2022completion,
 author = {Nozza, Debora and Bianchi, Federico and Lauscher, Anne and Hovy, Dirk and others},
 booktitle = {Proceedings of the Second Workshop on Language Technology for Equality, Diversity and Inclusion},
 note = {\url{https://doi.org/10.18653/v1/2022.ltedi-1.4}},
 pages = {26--34},
 publisher = {Association for Computational Linguistics},
 title = {Measuring harmful sentence completion in language models for {LGBTQIA+} individuals},
 year = {2022}
}

@article{park2020assessment,
 author = {Park, Heewoong and Park, Jonghun},
 journal = {Applied Sciences},
 note = {\url{https://doi.org/10.3390/app10041340}},
 pages = {1--18},
 title = {Assessment of Word-Level Neural Language Models for Sentence Completion},
 volume = {10},
 year = {2020}
}

@article{pascazio2023chemical,
 author = {Pascazio, Laura and Rihm, Simon and Naseri, Ali and Mosbach, Sebastian and Akroyd, Jethro and Kraft, Markus},
 journal = {Journal of Chemical Information and Modeling},
 note = {\url{https://doi.org/10.1021/acs.jcim.3c00820}},
 pages = {6569--6586},
 title = {Chemical species ontology for data integration and knowledge discovery},
 volume = {63},
 year = {2023}
}

@manual{rcoreteam2024r,
 author = {{R Core Team}},
 howpublished = {https://www.R-project.org/},
 organization = {R Foundation for Statistical Computing},
 title = {{R}: A Language and Environment for Statistical Computing},
 year = {2024}
}

@article{salton1988termweighting,
 author = {Salton, Gerard and Buckley, Christopher},
 journal = {Information Processing \& Management},
 note = {\url{https://doi.org/10.1016/0306-4573(88)90021-0}},
 pages = {513--523},
 title = {Term-weighting approaches in automatic text retrieval},
 volume = {24},
 year = {1988}
}

@article{scutari2010bnlearn,
 author = {Marco Scutari},
 journal = {Journal of Statistical Software},
 note = {\url{https://doi.org/10.18637/jss.v035.i03}},
 pages = {1--22},
 title = {Learning {Bayesian} Networks with the {bnlearn} {R} Package},
 volume = {35},
 year = {2010}
}

@inproceedings{sorlin2005reactive,
 author = {Sorlin, Sébastien and Solnon, Christine},
 booktitle = {Graph-{Based} {Representations} in {Pattern} {Recognition}},
 note = {\url{https://doi.org/10.1007/978-3-540-31988-7_16}},
 pages = {172--182},
 publisher = {Springer},
 title = {Reactive Tabu Search for Measuring Graph Similarity},
 year = {2005}
}

@article{willighagen2013chembl,
 author = {Willighagen, Egon L and Waagmeester, Andra and Spjuth, Ola and Ansell, Peter and Williams, Antony J and Tkachenko, Valery and Hastings, Janna and Chen, Bin and Wild, David J},
 journal = {Journal of Cheminformatics},
 note = {\url{https://doi.org/10.1186/1758-2946-5-23}},
 pages = {1--12},
 title = {The {ChEMBL} database as linked open data},
 volume = {5},
 year = {2013}
}

\begin{artappendix}
\renewcommand{\thepart}{}
\renewcommand{\partname}{}
\addcontentsline{toc}{section}{Appendix}
\part{Appendix}
\parttoc
\section{An introduction to the Resource Description Framework}\apxlabel{rdf_intro}
This section introduces the concept of knowledge graph and the vocabulary of the Resource Description Framework (RDF) and of its extension RDF Schema (RDFS).

\subsection{Knowledge graphs and the Resource Description Framework}\apxlabel{methods_kg}
A \emph{knowledge graph} (KG) can be roughly defined as a \say{\emph{graph of data intended to accumulate and convey knowledge [...], whose nodes represent entities [...] and whose edges represent [...] relations between these entities}} \citep{hogan2021knowledge}. In practice, a KG is usually modelled as a \emph{directed labelled graph} (DLG), i.e., a set of labelled nodes (entities) paired with a set of labelled edges (relations). Labelled edges, together with their endpoints, are supposed to encode elementary sentences of the form
\begin{equation*}
\mathrm{subject} \xrightarrow{\mathrm{predicate}} \mathrm{object},
\end{equation*}
such as 
\begin{equation*}
    \mathrm{KCEYPT} \xrightarrow{\mathrm{hasMetalCentre}} \mathrm{Pt}. 
\end{equation*}
More formally, given a (possibly infinite) set of terms $\mathcal{T}$, a DLG can be defined as a subset $G \subseteq \mathcal{T}^3$, thus as a set of triples $(t_s, t_p, t_o) \in G$ \citep{ell2025graph}. The first and the third element represent, respectively, the \emph{subject} and the \emph{object}, and are to be interpreted as the \emph{nodes} of the graph. The \emph{edges} of $G$ are identified by the triples themselves, with the convention that the edge is directed from $t_s$ (the tail) to $t_o$ (the head), with the label being $t_p$. This formulation is highly flexible, as it does not prescribe any specific scheme or topological constraint (like tree-shaped graphs). As a consequence, the resulting knowledge base can be easily modified: new pieces of information can be added by simply adding new triples. This flexibility, together with the availability of a standardised vocabulary (introduced below), makes this framework particularly accessible, both from a development and a deployment perspective.\\

Before proceeding, it is fundamental to stress the asymmetric treatment of nodes and edges implicitly stated in this definition, in order to avoid confusion. While edges requires a label assignment, via a triple structure, nodes are identified with the term (i.e., the label) used to represent them. This automatically implies that distinct nodes cannot be labelled using the same term. Contrarily, it is permitted for two different edges to receive the same label $t_p$ but, on the other hand, we adopt here the restriction that at most one edge can exist between any two nodes. This last constraint is particularly restrictive with respect to the KG literature, where it is typically absent, but, as will become clear below, multiple edges are not necessary for this work and hence are prohibited solely for a matter of notational simplicity. This restriction does not lead to any loss of generality as all the experimental methodologies reported in \sctref{experiments} and \apxref{experiments_methods} can be immediately adapted to the multi-edge case. Moreover, since nodes are implicitly defined via edge specifications, a natural consequence is that there can never be an isolated node within a DLG.
Finally, the definition of a graph given here is notably different from the classical one, which usually explicitly prescribe a node set and an edge set. Again, this choice is motivated by a desire for simplicity. Further details and more rigorous definitions are available in \apxref{similarity_metrics}.

\subsubsection{The Resource Description Framework}\apxlabel{methods_rdf}
One specific model for KGs that makes use of the DLG formulation is the \emph{Resource Description Framework (RDF)} \citep{hogan2021knowledge}, which categorises terms into three different classes: \emph{Uniform Resource Identifiers} (URIs), which uniquely identify entities, \emph{literals}, which are meant to represent quantitative data such as strings or numbers, and \emph{blank nodes}, which are only used to state the existence of entities of interest and, in technical terms, behave as existentially quantified variables \citep{hayes2004rdfsemantics, bergmann2014logic}. At the heart of RDF lies the intention of combining flexibility, simplicity and power of expressivity and this effort ultimately lead to a solid and standardised framework \citep{klyne2004rdfsyntax, hayes2004rdfsemantics, brickley2014rdfs}.

Using the mathematical language introduced above, if $\mathcal{U}, \mathcal{L}$ and $\mathcal{B}$ are the (pairwise disjoint) sets of URIs, literals and blank nodes respectively, then $\mathcal{T} = \mathcal{U} \cup \mathcal{L} \cup \mathcal{B}$ and an RDF graph is a DLG $G \subseteq (\mathcal{U} \cup \mathcal{B}) \times \mathcal{U} \times \mathcal{T}$ \citep{ell2025graph}. In other words, literals are only allowed to be placed in object position and predicates may only be represented via URIs.

Example \ref{ex:rdf} shows a simple RDF graph, encoding few elementary facts using the elements just introduced.

\begin{example}\label{ex:rdf}
In this example, URIs, blank nodes and literals are all represented as strings. URIs begin with \texttt{:}, blank nodes with \texttt{\_:} and literals are enclosed in quotation marks (\say{\texttt{...}}).

Here we present a simple knowledge graph that encodes that within a given molecule (\texttt{:molecule1}) there exist two atoms (\texttt{:hasAtom}), one with chemical label (\texttt{:isAtom}) H (\texttt{:H}), and the other with atomic number (\texttt{:atomicNumber}) $6$. Notice that this knowledge base does not explicitly identify the atoms as named entities or resources, but merely states its existence within the molecule, hence the statements will involve two blank nodes (\texttt{\_:aaa} and \texttt{\_:bbb}). \figref{example_rdf} shows the DLG representation of this set of facts, as well as the corresponding set of triples.
\end{example}

\begin{figure}
    \begin{center}
        \begin{tabular}{c c}
        \begin{subfigure}[t]{0.4\linewidth}
            \raisebox{2cm}{%
            \begin{tabular}{l}
                 \texttt{:molecule1 :hasAtom \_:aaa .} \\
                 \texttt{\_:aaa :isAtom :H .} \\
                 \texttt{:molecule1 :hasAtom \_:bbb .} \\
                 \texttt{\_:bbb :hasAtomicNumber} \say{\texttt{6}} \texttt{.}
            \end{tabular}
            }
            \caption{}
        \end{subfigure} &
        \begin{subfigure}[t]{0.4\linewidth}
            \includegraphics[width=\linewidth]{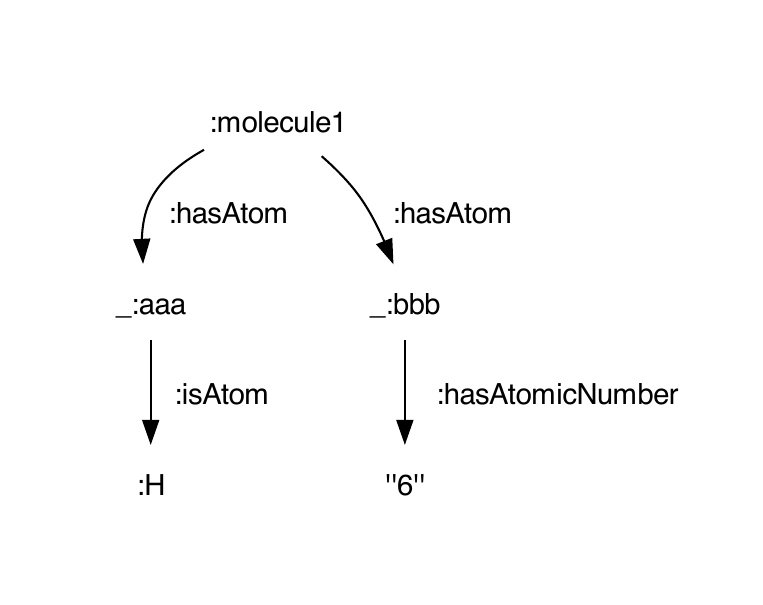}
            \caption{}
        \end{subfigure}
        \end{tabular}
    \end{center}
    \caption{(a) A simple set of facts, in the form of a set of triples written in Turtle syntax. These facts are stated using URIs (\texttt{:*}),  blank nodes (\texttt{\_:*}) and literals (\say{\texttt{*}}) (b) The equivalent directed labelled graph representation of the same set of facts.}\figlabel{example_rdf}
\end{figure}

\paragraph{URI syntax}\apxlabel{methods_rdf_uri}
URIs are, as mentioned above, tools meant to uniquely identify a resource. In practice, a URI is a string that satisfies specific syntactical constraints \citep{klyne2004rdfsyntax}. A more specific class of URI is made up of so called \emph{http URIs}, which identify resources available on the Web and point to the locations of those resources on their respective networks \citep{ayers2008uri}. An example of such URI is
\begin{center}
    \texttt{http://www.w3.org/1999/02/22-rdf-syntax-ns\#type},
\end{center}
which represents a predicate used below (see \apxref{methods_kg_rdfs}) to express class assignments. This example also shows how URIs can be particularly long and cumbersome to express in plain text, which is why in the RDF literature it is common to find \emph{prefixed names} \citep{beckett2014turtle}, i.e. a notation that allows to express URIs by contracting part of the string using uniquely identified \emph{namespaces}. Following the previous example, even though this convention is not restricted to http URIs, we can define the namespace \texttt{rdf} as
\begin{center}
    \texttt{@prefix rdf: http://www.w3.org/1999/02/22-rdf-syntax-ns\#}
\end{center}
in compliance with the Turtle syntax \citep{beckett2014turtle}.
In this statement, \texttt{@prefix} is an instruction that defines a new namespace, \texttt{rdf:} is the chosen namespace ID and \texttt{http://www.w3.org/1999/02/22-rdf-syntax-ns\#} is the associated URI string. Now, whenever we wish to use the URI above, we can simply use \texttt{rdf:type} instead. Prefixes can also be empty strings, as it happened in Example~\ref{ex:rdf} \citep{beckett2014turtle}. We recall that a short summary of the namespaces used in this work is available in \tblref{prefixes_short} while the full list can be found in \apxref{comprehensive}.

\subsubsection{The RDF Schema extension}\apxlabel{methods_kg_rdfs}
RDF alone is a simple assertional language, and it is not capable of specifying descriptions or relations among \emph{types} of objects in a way that also allow for the \emph{entailment} of new knowledge (for instance, if we know that A is of type X and that X is a subtype of Y, then it is natural to infer that A is also of type Y; this is not natively possible with the sole RDF vocabulary). This functionality is provided by the \emph{RDF Schema} (RDFS) semantic extension, in which proper concepts of \emph{classes} and \emph{subclasses} are introduced, together with a set of entailment rules that define natural inheritance mechanisms \citep{hayes2004rdfsemantics, brickley2014rdfs}. Class assignments are expressed via the predicate \texttt{rdf:type}, whereas subclass inheritance is expressed via \texttt{rdfs:subClassOf}. In a similar way, an appropriate semantic structure regarding properties is introduced. The predicate \texttt{rdfs:subPropertyOf} allows to mark two properties as being one a subproperty of the other (and hence the superproperty will apply every time that the subproperty does). The predicates \texttt{rdfs:domain} and \texttt{rdfs:range}, on the other hand, are used to define the domain and the range of a property (in a similar way to which domain and range is defined for a mathematical function), thus enabling automatic class assignments for the subject and the object of any triple in which the property of interest acts as a predicate.\\

By leveraging on the RDFS vocabulary, it is possible to separate the content of a knowledge base into two components, namely the terminology box (TBox), a set of assertions about the concepts used in the knowledge base, and the assertion box (ABox), a set of assertions on individual instances and objects \citep{degiacomo1996tbox, pascazio2023chemical}. Example~\ref{ex:rdfs} extends the RDF graph from Example~\ref{ex:rdf} by defining an appropriate RDFS terminology.\\

\begin{example}\label{ex:rdfs}
Here we extend the RDF graph presented in Example~\ref{ex:rdf} by including RDFS terminology (see \figref{example_rdfs}). In particular, we define the classes \texttt{:ClassMolecule}, \texttt{:ClassAtom} and \texttt{:ClassChemicalElement} that represent molecules, atoms and chemical labels respectively. Class definition statements are formed by declaring that an object is an instance (\texttt{rdf:type}) of \texttt{rdfs:Class}. Afterwards, we specify domain and range of the three predicates \texttt{:hasAtom}, \texttt{:isAtom} and \texttt{hasAtomicNumber}.\\
For example, the domain of \texttt{:hasAtom} is \texttt{:ClassMolecule}, which means that any term found in subject position in a triple which has \texttt{:hasAtom} as predicate is of class \texttt{:ClassMolecule}. Since the triple $(\texttt{:molecule1}, \texttt{:hasAtom}, \texttt{\_:aaa})$ is present in the RDF graph, this means that we can infer the triple $(\texttt{:molecule1}, \texttt{rdf:type}, \texttt{:ClassMolecule})$, i.e., \texttt{:molecule1} is an instance of the class \texttt{:ClassMolecule}.\\
In the case of the predicate \texttt{hasAtomicNumber}, the range is defined as \texttt{xmls:integer}. This is a class defined within the \texttt{xmls} namespace.
\end{example}

\begin{figure}
    \begin{center}
            \begin{tabular}{l l}
                \textbf{TBox:} & \\
                & \texttt{:ClassMolecule rdf:type rdfs:Class .} \\
                & \texttt{:ClassAtom rdf:type rdfs:Class .} \\
                & \texttt{:ClassChemicalElement rdf:type rdfs:Class .} \\[3pt]
                & \texttt{:hasAtom rdfs:domain :ClassMolecule .} \\
                & \texttt{:hasAtom rdfs:range :ClassAtom .} \\[3pt]
                & \texttt{:isAtom rdfs:domain :ClassAtom .} \\
                & \texttt{:isAtom rdfs:range :ClassChemicalElement .} \\[3pt]
                & \texttt{:hasAtomicNumber rdfs:domain :ClassAtom .} \\
                & \texttt{:hasAtomicNumber rdfs:range xmls:integer .} \\
                \textbf{ABox:} & \\
                & \texttt{:molecule1 :hasAtom \_:aaa .} \\
                & \texttt{\_:aaa :isAtom :H .} \\
                & \texttt{:molecule1 :hasAtom \_:bbb .} \\
                & \texttt{\_:bbb :hasAtomicNumber} \say{\texttt{6}} \texttt{.}
            \end{tabular}
    \end{center}
    \caption{The set of facts from \figref{example_rdf} is expanded by adding a TBox. The TBox defines several classes that clarify the role and nature of the entities in the ABox, while also specifying appropriate domain and range constraints for the predicates.}\figlabel{example_rdfs}
\end{figure}

This work makes use of the RDF data model and vocabulary and of the RDFS vocabulary and entailment scheme in order to provide semantic representations of TMCs. There are also other frameworks that can be used to model KGs, such as \emph{property graphs}, which also allow for the annotation of nodes with key-value pairs. While such a framework could seem to be more desirable in a context in which quantitative properties (like node features) are naturally paired with structural descriptors, this formulation is not as standardised as RDF \citep{hogan2021knowledge}. For simplicity, then, RDF and RDFS are the chosen means of representation.\\

\section{Comprehensive tmQM-RDF namespaces, classes and properties list}\apxlabel{comprehensive}
This section offers a comprehensive summary of the namespaces, RDFS classes and properties that can be found in tmQM-RDF. Namespaces are reported in \tblref{prefixes}. The full list of RDFS classes can be found in \tblref{classes}. Properties for the complex, ligand and atomic level, respectively, are listed in Tables \ref{tbl:properties_complex}, \ref{tbl:properties_ligand} and \ref{tbl:properties_atomic}.

{
\begin{table}
    \centering
    \makebox[\textwidth][c]{%
    \begin{tabular}{l l l}
         \toprule
         Prefix & Namespace & Short description \\
         \midrule
         \midrule
         rdf & http://www.w3.org/1999/02/22-rdf-syntax-ns\# & The RDF Concepts Vocabulary (RDF)\\
         rdfs & http://www.w3.org/2000/01/rdf-schema\# & The RDF Schema vocabulary (RDFS)\\
         xmls & http://www.w3.org/2001/XMLSchema\# & The XML Schema vocabulary\\[0.25cm]
         inp5 & resource://integreat/p5/ & Baseline non-specific URI path\\[0.25cm]
         cm & resource://integreat/p5/complex/ & Base URI path for complex level terms\\
         cmT & resource://integreat/p5/complex/TMC/ & General TMC-related facts\\
         cmTp & resource://integreat/p5/complex/TMC/property/ & TMC properties\\[0.25cm]
         lg & resource://integreat/p5/ligand/ & Base URI path for ligand level terms\\
         lgC & resource://integreat/p5/ligand/centre/ & General metal centre-related facts\\
         lgCp & resource://integreat/p5/ligand/centre/property/ & Metal centre properties\\
         lgCr & resource://integreat/p5/ligand/centre/reference/ & Abstract metal centre representations\\
         lgCrp & resource://integreat/p5/ligand/centre/reference/property/ & Properties of metal centre representations\\
         lgL & resource://integreat/p5/ligand/ligand/ & General ligand-related facts\\
         lgLp & resource://integreat/p5/ligand/ligand/property/ & Ligand properties\\
         lgLr & resource://integreat/p5/ligand/ligand/reference/ & Abstract ligand representations\\
         lgLrp & resource://integreat/p5/ligand/ligand/reference/property/ & Properties of abstract ligand representations\\
         lgLro & resource://integreat/p5/ligand/ligand/reference/occurrence/ & Ligand occurrences used as references\\
         lgLrm & resource://integreat/p5/ligand/ligand/reference/motif/ & Structural elements of ligand representations\\
         lgB & resource://integreat/p5/ligand/bond/ & Ligand-centre bond objects\\
         lgBp & resource://integreat/p5/ligand/bond/property/ & Ligand-centre bond properties\\
         lgBr & resource://integreat/p5/ligand/bond/reference/ & Abstract ligand-centre bond representations\\
         lgBrp & resource://integreat/p5/ligand/bond/reference/property/ & Properties of ligand-centre bond representations\\
         lgS & resource://integreat/p5/ligand/structure/ & Ligand level structural connectivity\\[0.25cm]
         tm & resource://integreat/p5/atomic/ & Base URI path for atomic level terms\\
         tmA & resource://integreat/p5/atomic/atom/ & General atom-related facts\\
         tmAp & resource://integreat/p5/atomic/atom/property/ & Atomic properties\\
         tmAr & resource://integreat/p5/atomic/atom/reference/ & Abstract atom representations\\
         tmArp & resource://integreat/p5/atomic/atom/reference/property/ & Atomic representation properties\\
         tmB & resource://integreat/p5/atomic/bond/ & Atomic bond objects\\
         tmBp & resource://integreat/p5/atomic/bond/property/ & Atomic bond properties\\
         tmBr & resource://integreat/p5/atomic/bond/reference/ & Abstract atomic bond representations\\
         tmBrp & resource://integreat/p5/atomic/bond/reference/property/ & Properties of atomic bonds representations\\
         tmS & resource://integreat/p5/atomic/structure/ & Atomic level structural connectivity\\[0.25cm]
         ds & resource://integreat/p5/datasets/ & Datasets\\
         dsC & resource://integreat/p5/datasets/complexes/ & tmQM\\
         dsG & resource://integreat/p5/datasets/graphs/ & tmQMg\\
         dsL & resource://integreat/p5/datasets/ligands/ & tmQMg-L\\[0.25cm]
         nm & resource://integreat/p5/numerical/ & Numerical methods\\
         \bottomrule
    \end{tabular}%
    }
    \caption{The prefixes used in this paper and the corresponding namespaces.}
    \tbllabel{prefixes}
\end{table}
}

{
\begin{table}[p]
    \centering
    \makebox[\textwidth][c]{%
    \begin{tabular}{l l}
        \toprule
        Class & Description \\
        \midrule
        \midrule
        inp5:PropertyType & \makecell[tl]{A possible property type, intended as one of the many types\\ of properties that are reported in the tmQM series} \\
        inp5:ComplexPropertyValue & \makecell[tl]{A complex property value that cannot be represented using a single URI/literal} \\
        inp5:CartesianCoordinates3D & \makecell[tl]{A set of coordinates in the Cartesian coordinate system of the \\three-dimensional Euclidean space $\R^3$} \\
        inp5:CountDescription & \makecell[tl]{The description of a count property, intended as a bag of items, each\\ one representing the count of a specific object} \\
        inp5:CountDescriptionItem & \makecell[tl]{An item in the description of a count property, representing the count\\ of a specific object} \\
        inp5:ObservedProperty & \makecell[tl]{An observation of a property} \\
        inp5:Countable & \makecell[tl]{A countable entity} \\[0.25cm]
        cmT:TransitionMetalComplex & \makecell[tl]{Transition Metal Complexes} \\
        cmTp:MetaDataDescription & \makecell[tl]{An object whose purpose is to collect metadata about a TMC} \\[0.25cm]
        lgC:MetalCentre & \makecell[tl]{A metal centre} \\
        lgCr:MetalCentreClass & \makecell[tl]{A metal centre class, intended as a label that can be assigned to a metal centre} \\
        lgCr:MetalCentre\_\texttt{*} & \makecell[tl]{A specific metal centre class, subclass of MetalCentre and instance of\\ MetalCentreClass. \texttt{*} must be replaced with the symbol of the desired metal centre} \\
        lgL:Ligand & \makecell[tl]{A ligand} \\
        lgLr:LigandClass & \makecell[tl]{A ligand class, intended as a label that can be assigned to a ligand} \\
        lgLr:Ligand\_\texttt{*} & \makecell[tl]{A specific ligand class, subclass of Ligand and instance of LigandClass.\\ \texttt{*} must be replaced with the ID of the desired ligand class} \\
        lgLr:StableOccurrenceDescription & \makecell[tl]{The description of the most stable occurrence of a ligand class, made\\ of a subgraph name and a ligand instance (if available in the dataset)} \\
        lgLr:BindingAtomsSMILES & \makecell[tl]{The description of the binding atoms of a ligand in terms of indices \\within a SMILES string. It is a collection of BindingSiteSMILES objects} \\
        lgLr:BindingSiteSMILES & \makecell[tl]{The description of a group of haptic atoms in a ligand in terms of \\indices within a SMILES string} \\
        lgLro:LigandOccurrenceClass & \makecell[tl]{The types of occurrences/representations of a ligand that can be \\encountered and used for computations (i.e., SMILES string, most\\ stable occurrence, relaxed structure)} \\
        lgLrm:StructuralFeature & \makecell[tl]{A structural feature of a ligand that can be counted} \\
        lgB:LigandBond & \makecell[tl]{A chemical bond between a metal centre and a binding site within a ligand} \\[0.25cm]
        tmA:Atom & \makecell[tl]{An atom} \\
        tmAr:Element & \makecell[tl]{A chemical element, intended as a label that can be assigned to an atom} \\
        tmAr:\texttt{*} & \makecell[tl]{A specific chemical element, subclass of Atom and instance of Element.\\ \texttt{*} must be replaced with the chemical symbol of the desired element} \\
        tmB:AtomicBond & \makecell[tl]{A chemical bond between two atoms} \\
        tmB:NBOType & \makecell[tl]{A possible NBO type, intended as a label that can be assigned to a\\ chemical bond} \\[0.65cm]
        ds:Dataset & \makecell[tl]{A dataset} \\[0.25cm]
        nm:Optimisation & \makecell[tl]{A method for the optimisation of molecular geometries via relaxation\\ of energy gradients} \\
        nm:Singlepoint & \makecell[tl]{A method for the computation of the energy and other electronic structure\\ properties of optimized geometries} \\
        \bottomrule
    \end{tabular}%
    }
    \caption{The RDFS classes employed in tmQM-RDF.}
    \tbllabel{classes}
\end{table}
}

\begin{table}
    \centering
    \makebox[\textwidth][c]{%
    \begin{tabular}{l l}
        \toprule
        Property & Description \\
        \midrule
        \midrule
        cmTp:lowest\_vibrational\_frequency & Lowest vibrational frequency in cm-1 \\
        cmTp:metal\_node\_degree & Metal node degree \\
        cmTp:enthalpy\_energy\_correction & Enthalpy energy correction \\
        cmTp:homo\_lumo\_gap\_delta & Change of HOMO-LUMO between different levels of theory \\
        cmTp:heat\_capacity & Heat capacity \\
        cmTp:metal\_center\_period & Periodic table period of the metal \\
        cmTp:metal\_center\_element & Metal element \\
        cmTp:n\_atoms & Total number of atoms \\
        cmTp:dipole\_moment\_delta & Change of dipole moment between different levels of theory \\
        cmTp:homo\_lumo\_gap & HOMO-LUMO gap \\
        cmTp:CSD\_years & Year of the CSD update to which the data refers to \\
        cmTp:gibbs\_energy\_correction & Gibbs free energy correction \\
        cmTp:dipole\_moment & Dipole moment in Debyes \\
        cmTp:metal\_center\_group & Periodic table group of the metal \\
        cmTp:electronic\_energy\_delta & Electronic energy difference between the different levels of theory \\
        cmTp:highest\_vibrational\_frequency & Vibrational frequency with the largest value, in cm-1 \\
        cmTp:charge & Overal charge of the metal complex \\
        cmTp:stoichiometry & Stoichiometry of the metal complex, in Hill format \\
        cmTp:spin & Overal spin multiplicity of the metal complex \\
        cmTp:dispersion\_energy\_delta & Dispersion energy difference between the different levels of theory \\
        cmTp:metal\_node\_natural\_charge & Natural atomic charge of the metal node, in e units \\
        cmTp:polarisability & Overal polarizability of the metal complex \\
        cmTp:zpe\_correction & Zero-point energy \\
        cmTp:homo\_energy & HOMO energy \\
        cmTp:electronic\_energy & Electronic energy \\
        cmTp:dispersion\_energy & Dispersion energy \\
        cmTp:lumo\_energy & LUMO energy \\
        cmTp:molecular\_mass & Molecular mass \\
        cmTp:gibbs\_energy & Gibbs energy \\
        cmTp:element\_counts & Element counts \\
        cmTp:n\_electrons & Total number of electrons in the metal complex \\
        cmTp:enthalpy\_energy & Enthalpy energy \\
        cmTp:entropy & Entropy energy \\
        \bottomrule
    \end{tabular}%
    }
    \caption{The URIs representing the complex level properties.}
    \tbllabel{properties_complex}
\end{table}

\begin{table}
    \centering
    \makebox[\textwidth][c]{%
    \begin{tabular}{l l}
        \toprule
        Property & Description \\
        \midrule
        \midrule
        lgLrp:n\_haptic\_bound & Number of haptic bonds \\
        lgLrp:metal\_bound\_homo\_energy & HOMO energy involving metal bound atom \\
        lgLrp:I2\_over\_I3 & Moments of inertia I2 and I3 ratio \\
        lgLrp:metal\_bound\_homo\_p & p orbital character of metal-bound HOMO \\
        lgLrp:smiles\_metal\_bond\_node\_idx\_groups & Indicies of metal-bound atoms in the SMILES \\
        lgLrp:solid\_angle & Ligand coordination solid angle \\
        lgLrp:n\_atoms & Ligand total number of atoms \\
        lgLrp:homo\_lumo\_gap & Ligand HOMO-LUMO gap \\
        lgLrp:sasa\_area\_stable & Ligand SASA area (stable) \\
        lgLrp:buried\_volume & Ligand buried volume \\
        lgLrp:metal\_bound\_homo\_f & f orbital character of metal-bound HOMO \\
        lgLrp:sasa\_volume\_free & Ligand SASA volume (free) \\
        lgLrp:metal\_bound\_lumo\_d & d orbital character of metal-bound LUMO \\
        lgLrp:metal\_bound\_lumo\_f & f orbital character of metal-bound LUMO \\
        lgLrp:haptic\_element\_counts & Element-wise counting of haptic-metal-bound atoms \\
        lgLrp:dipole\_moment & Ligand dipole moment \\
        lgLrp:is\_alternative\_charge & \makecell[tl]{Whether the charge reported in tmQMg-L disagrees with \\the charge reported in the OctLig dataset for the same ligand} \\
        lgLrp:metal\_bound\_lumo\_p & p orbital character of metal-bound LUMO \\
        lgLrp:charge & Ligand charge \\
        lgLrp:largest\_frequency & Ligand largest frequency \\
        lgLrp:n\_dentic\_bound & Number of dentic-metal-bound atoms \\
        lgLrp:stable\_occurrence & The most stable occurrence of a ligand \\
        lgLrp:n\_metal\_bound & Number of metal-bound atoms \\
        lgLrp:metal\_bound\_lumo\_s & s orbital character of metal-bound LUMO \\
        lgLrp:metal\_bound\_lumo\_energy & Metal-bound LUMO energy \\
        lgLrp:solid\_cone\_angle & Ligand solid cone angle \\
        lgLrp:structure\_counts & Counts of selected structures within ligands \\
        lgLrp:polarisability & Ligand polarizability \\
        lgLrp:exact\_cone\_angle & Ligand exact cone angle \\
        lgLrp:molar\_volume & Ligand molar volume \\
        lgLrp:sasa\_area\_free & Ligand SASA area (free) \\
        lgLrp:I1\_over\_I3 & Moments of inertia I1 and I3 ratio \\
        lgLrp:dentic\_element\_counts & Number of dentic-metal-bound atoms \\
        lgLrp:logp & Ligand logP \\
        lgLrs:SMILES & Ligand SMILES \\
        lgLrp:G\_parameter & Ligand G parameter \\
        lgLrp:element\_counts & Element-wise count of ligand atoms \\
        lgLrp:sasa\_volume\_stable & Ligand SASA volume (stable) \\
        lgLrp:metal\_bound\_homo\_d & d orbital character of metal-bound HOMO \\
        lgLrp:metal\_bound\_homo\_s & s orbital character of metal-bound HOMO \\
        \bottomrule
    \end{tabular}%
    }
    \caption{The URIs representing the ligand level properties.}
    \tbllabel{properties_ligand}
\end{table}

\begin{table}
    \centering
    \makebox[\textwidth][c]{%
    \begin{tabular}{l l}
        \toprule
        Property & Description \\
        \midrule
        \midrule
        tmAp:lone\_pair\_max\_occupation & Maximum electron occupation of the NBO lone pairs \\
        tmAp:natural\_electron\_configuration\_d\_occupation & Electron occupation of the d orbitals \\
        tmAp:n\_lone\_vacancies & Number of lone vacancy orbitals \\
        tmAp:natural\_electron\_population\_valence & Electron occupation of valence orbitals \\
        tmAp:lone\_vacancy\_min\_energy & Minimum energy of lone vacancies \\
        tmAp:lone\_vacancy\_min\_p\_occupation & Minimum p occupation of lone vacancy orbitals \\
        tmAp:lone\_pair\_max\_p\_occupation & Maximum p occupation of lone pair orbitals \\
        tmAp:lone\_pair\_max\_d\_occupation & Maximum d occupation of lone pair orbitals \\
        tmAp:lone\_vacancy\_min\_occupation & Minimum occupation of lone vacancy orbitals \\
        tmAp:lone\_pair\_max\_energy & Maximum energy of lone pair orbitals \\
        tmAp:lone\_vacancy\_min\_s\_occupation & Minimum s occupation of lone vacancy orbitals \\
        tmAp:natural\_electron\_configuration\_s\_occupation & Electron occupation of the s orbitals \\
        tmAp:node\_position & Atom index in associated xyz geometry \\
        tmAp:lone\_pair\_energy\_min\_max\_difference & \makecell[tl]{Energy difference between lowest- and\\ highest-energy lone pair orbitals} \\
        tmAp:natural\_atomic\_charge & Natural atomic charge, in e units \\
        tmAp:hydrogen\_count & Number of hydrogen atoms bound to atomic node \\
        tmAp:n\_lone\_pairs & Number of lone pair orbitals \\
        tmAp:lone\_vacancy\_min\_d\_occupation & Minimum d occupation of lone vacancy orbitals \\
        tmAp:lone\_vacancy\_energy\_min\_max\_difference & \makecell[tl]{Energy difference between lowest- and\\ highest-energy lone vacancy orbitals} \\
        tmAp:valency\_index & Valency index from NBO analysis \\
        tmAp:natural\_electron\_configuration\_p\_occupation & Electron occupation of the p orbitals \\
        tmAp:node\_id & \makecell[tl]{A unique internal identifier used to\\ distinguish the nodes of the atomic graph} \\
        tmAp:lone\_pair\_max\_s\_occupation & Maximum s occupation of lone pair orbitals \\
        tmBp:bond\_max\_s\_occupation & Maximum s occupation of bonding orbitals \\
        tmBp:bond\_max\_p\_occupation & Maximum p occupation of bonding orbitals \\
        tmBp:wiberg\_bond\_order & Wiberg bond order \\
        tmBp:nbo\_type & NBO type defining the bond \\
        tmBp:bond\_max\_energy & Highest energy of bonding orbitals \\
        tmBp:bond\_max\_d\_occupation & Maximum d occupation of bonding orbitals \\
        tmBp:bond\_energy\_min\_max\_difference & \makecell[tl]{Energy difference between lowest- and\\ highest-energy bonding orbitals} \\
        tmBp:bond\_distance & Bond distance, in Å \\
        tmBp:antibond\_min\_s\_occupation & Minimum s occupation of antibonding orbitals \\
        tmBp:bond\_max\_occupation & Maximum overall occupation of bonding orbitals \\
        tmBp:n\_bn & Number of bonding orbitals for bond \\
        tmBp:antibond\_energy\_min\_max\_difference & \makecell[tl]{Energy difference between lowest- and\\ highest-energy antibonding orbitals} \\
        tmBp:antibond\_min\_occupation & Minimum overall occupation of antibonding orbitals \\
        tmBp:n\_nbn & Number of antibonding orbitals for bond \\
        tmBp:antibond\_min\_d\_occupation & Minimum d occupation of antibonding orbitals \\
        tmBp:antibond\_min\_p\_occupation & Minimum p occupation of antibonding orbitals \\
        tmBp:antibond\_min\_energy & Minimum energy of antibonding orbitals \\
        tmArp:atomic\_number & Atomic number \\
        \bottomrule
    \end{tabular}%
    }
    \caption{The URIs representing the atomic level properties.}
    \tbllabel{properties_atomic}
\end{table}

\section{Thematic selections for experimental purposes}\apxlabel{selection}
As the tmQM-RDF dataset is extremely rich and diverse, it can be convenient to have access to a smaller subset of TMCs on which to conduct preliminary experiments. To this end, we have introduced, in \sctref{dataset_selection}, two selections of $1600$ TMCs extracted from the population of TMCs in tmQM-RDF, each divided into a training set ($1000$ TMCs), a validation set ($300$ TMCs) and a test set ($300$ TMCs). These two selections have been designed with the intent of replicating the variety found in tmQM-RDF while also restricting the scope of the datasets to a precise chemically relevant subpopulation of complexes. In particular, one selection (the \emph{earlyTM} selection) only contains complexes whose metal centres are either Cr, Mo, W (which are early transition metals) whereas the other (the \emph{lateTM} selection) only admits the metal centres Pd, Ni, Pt (late transition metals; these also happen to be the three most frequent centres that appear in tmQM-RDF, as shown in \figref{tmQM_RDF_summary_centres}).\\

The two selections have been sampled from the full dataset following the procedure below:
\begin{enumerate}
    \item a \say{seed} of the $N_\mathrm{seed}$ most frequent ligands found within the subset of tmQM-RDf identified by the desired metal centres has been computed.
    \item a candidate set made of all the complexes whose ligands are all included in the \say{seed} has been computed;
    \item from the candidate set, $1600$ TMCs have been sampled with probability inversely proportional to the number of atoms\footnote{In this phase, the relative proportions of TMCs having a given metal centres are constrained to match those found in tmQM-RDF};
    \item the train/validation/test split has been performed.
\end{enumerate}

A quick assessment of the representativeness of each selection can be performed by comparing the sets of the most frequent ligands within the selection to the same set extracted from tmQM-RDF (by only considering, however, complexes which have the same metal centres as the selection). \figrefmult{lig_sum_earlyTM}{lig_sum_lateTM} portray the results of this kind of analysis by showing the counts of the occurrences of the most frequent ligands in the original dataset and in the three partitions of each selection. Overall it can be seen that, while the sets of most frequent ligands do not coincide perfectly, for both selections at least half of the ligands retain their status of most frequent across all the datasets (specifically, for the \emph{earlyTM} selection those ligands are \emph{ligand1-0}, \emph{ligand24-0}, \emph{ligand4-0}, \emph{ligand45-0} and \emph{ligand38-0}, whereas for the \emph{lateTM} selection the ligands are \emph{ligand1-0}, \emph{ligand3-0}, \emph{ligand40-0}, \emph{ligand188-0}, \emph{ligand38-0} and \emph{ligand51-0}). Moreover, by looking at the expressiveness of each ligand-metal centre pair, it is possible to notice that, especially in the \emph{earlyTM} selection, there are only minor qualitative differences (in the \emph{lateTM} selection, on the other hand, it is more frequent to observe ligands for which one centre becomes overrepresented in the selection with respect to tmQM-RDF).

\begin{figure}
        \begin{subfigure}[b]{\linewidth}
            \includegraphics[width=\linewidth]{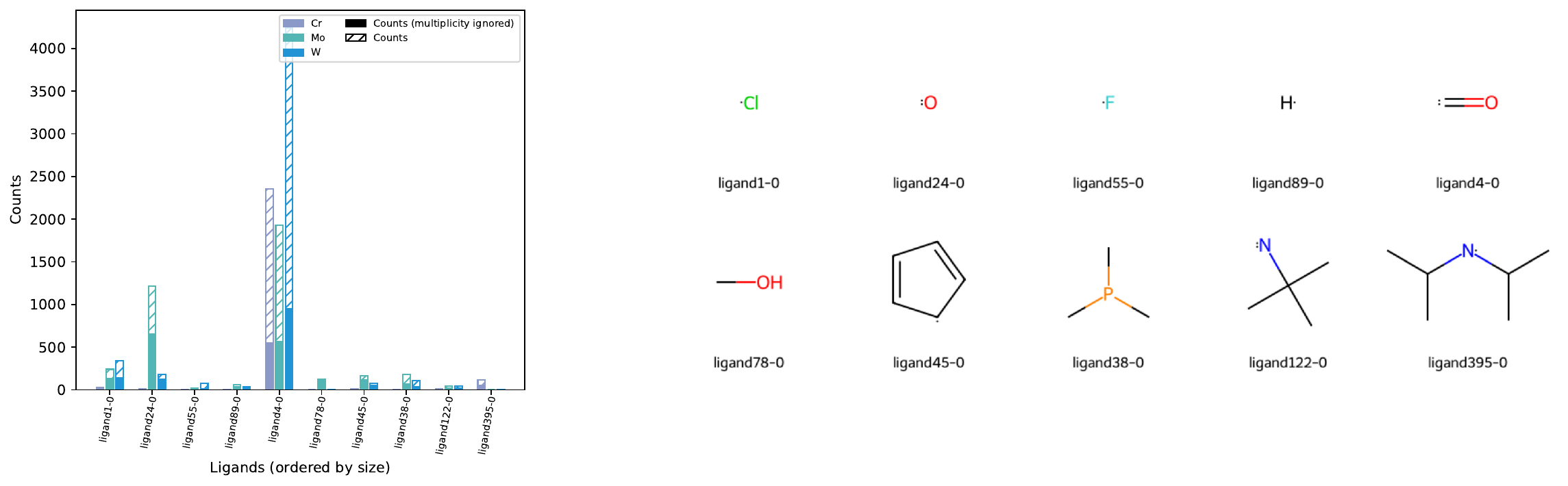}
            \caption{tmQM-RDF (Cr - Mo - W)}\figlabel{lig_sum_tmqm_early}
        \end{subfigure}
        
        \begin{subfigure}[b]{\linewidth}
            \includegraphics[width=\linewidth]{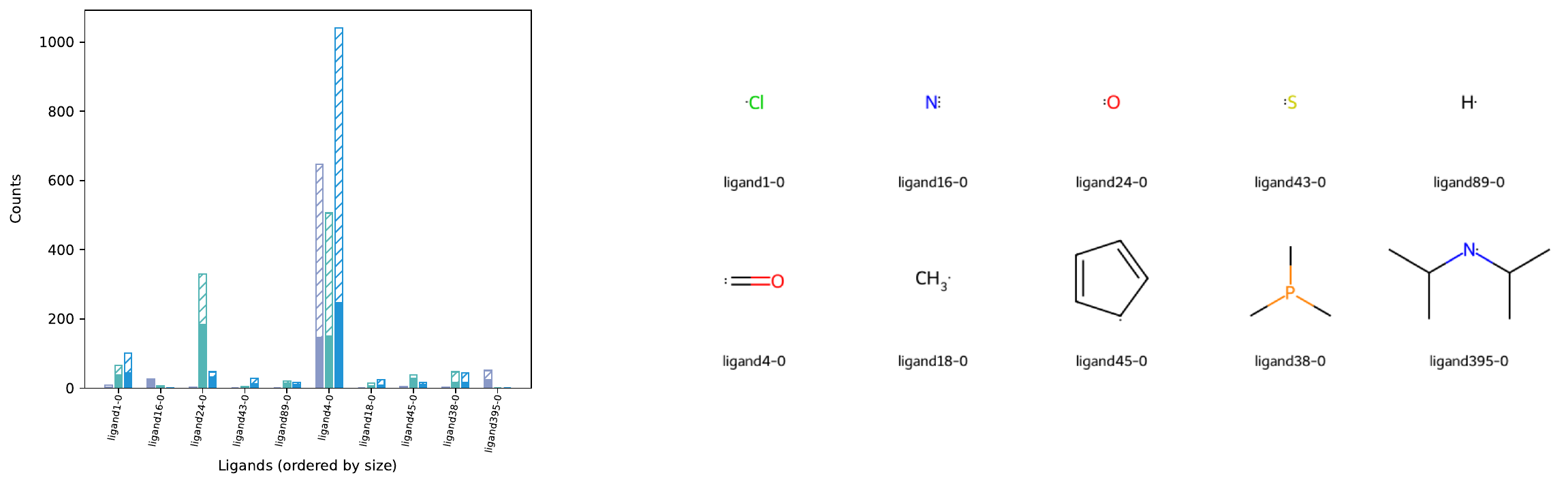}
            \caption{Train}\figlabel{lig_sum_train_early}
        \end{subfigure} 
        
        \begin{subfigure}[b]{\linewidth}
            \includegraphics[width=\linewidth]{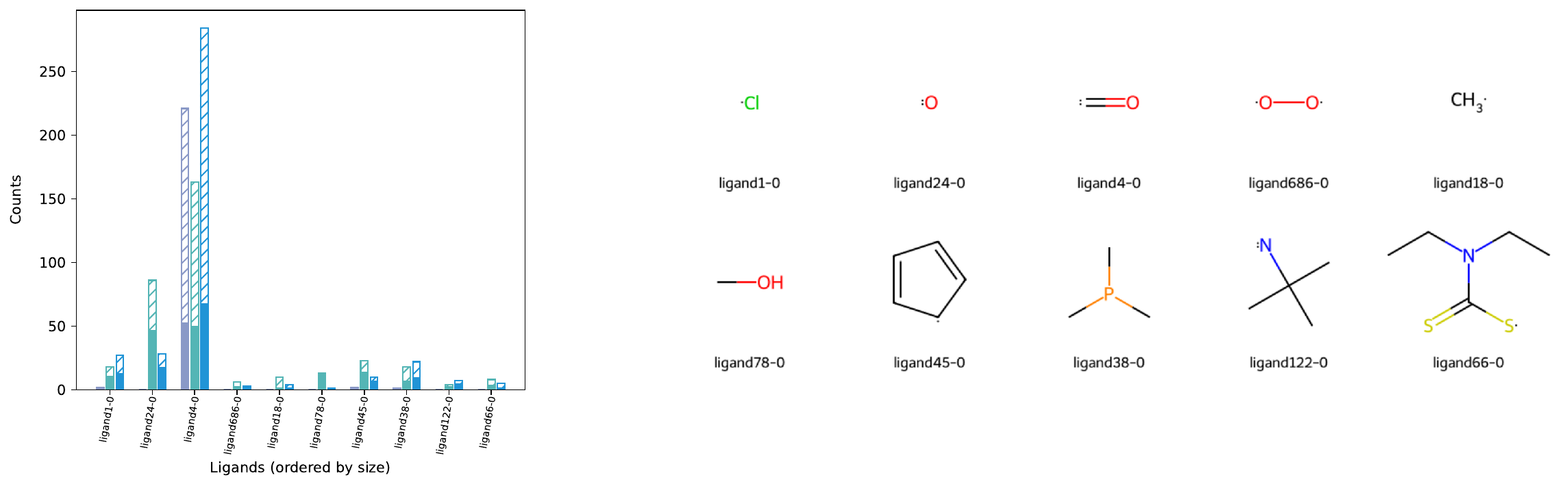}
            \caption{Validation}\figlabel{lig_sum_validation_early}
        \end{subfigure}

        \begin{subfigure}[b]{\linewidth}
            \includegraphics[width=\linewidth]{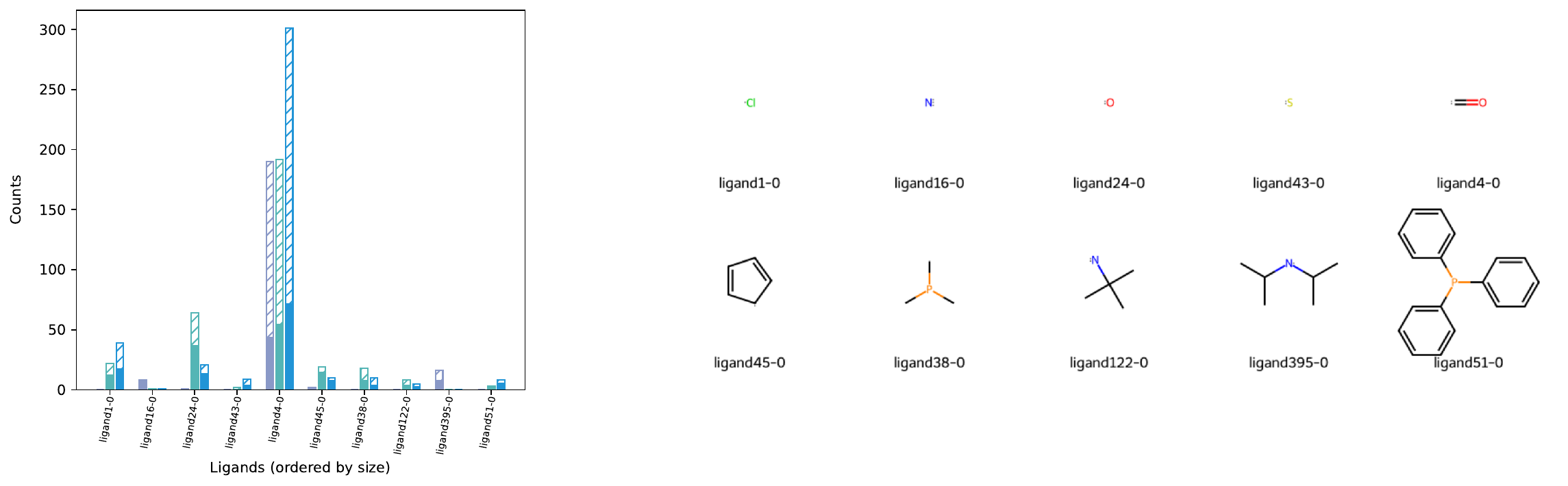}
            \caption{Test}\figlabel{lig_sum_test_early}
        \end{subfigure}
    
    \caption{The bar plots of the counts of the appearances of the 10 most frequent ligands, divided by metal centre, (left) and a visual representation of the same ligands (right) in: (A) tmQM-RDF, accounting only for Cr, Mo and W centres, (B), the training set, (C) the validation set and (D) the test set of the \emph{earlyTM} selection. In the barplots, two different counting methods are being represented: in one scenario, every single occurrence of a ligand is considered (dashed bars), in the other, multiple occurrences of a ligand within a single complex are not considered (dashed bars).}\figlabel{lig_sum_earlyTM}
\end{figure}

 \begin{figure}
        \begin{subfigure}[b]{\linewidth}
            \includegraphics[width=\linewidth]{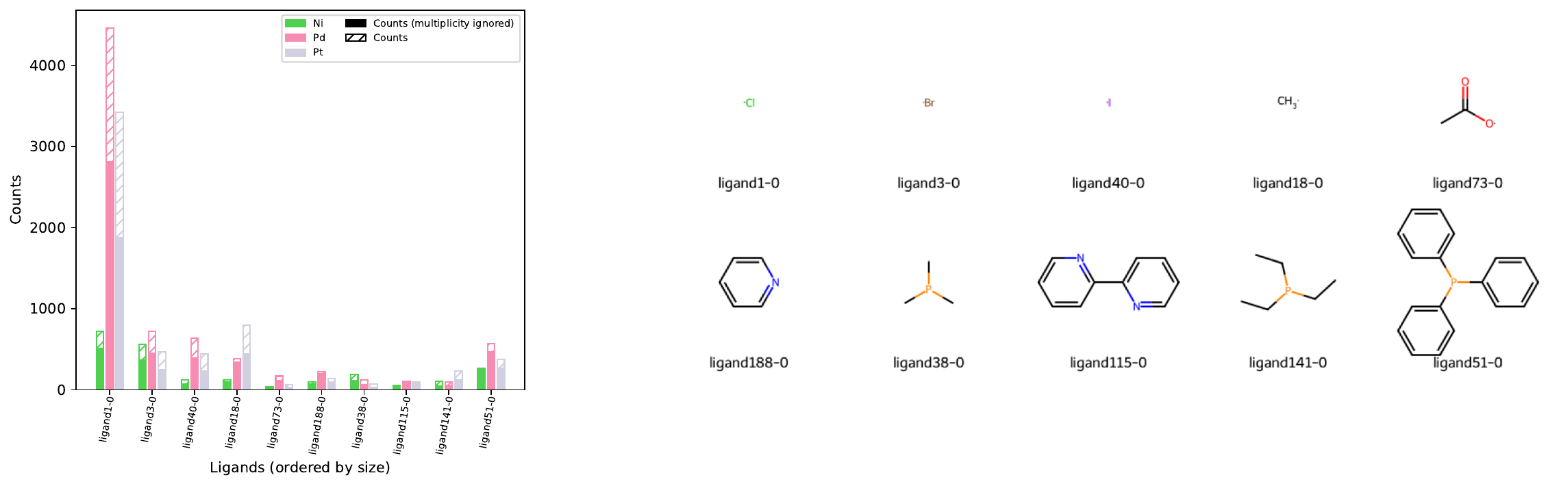}
            \caption{tmQM-RDF (Ni - Pd - Pt)}\figlabel{lig_sum_tmqm_late}
        \end{subfigure}
        
        \begin{subfigure}[b]{\linewidth}
            \includegraphics[width=\linewidth]{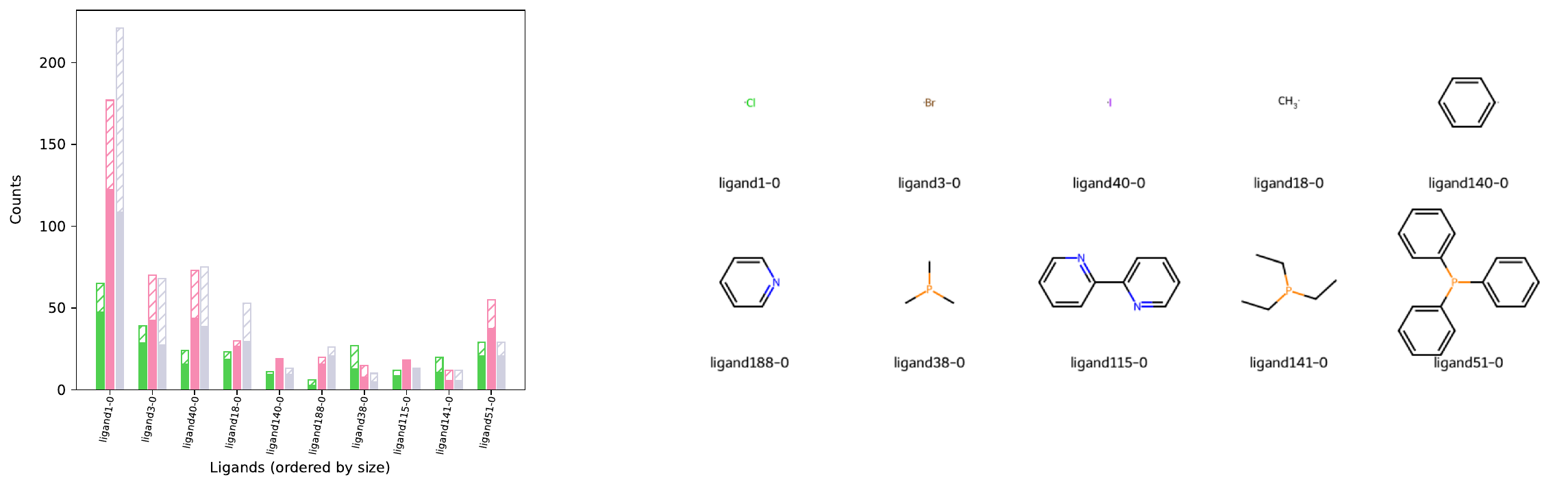}
            \caption{Train}\figlabel{lig_sum_train_late}
        \end{subfigure} 
        
        \begin{subfigure}[b]{\linewidth}
            \includegraphics[width=\linewidth]{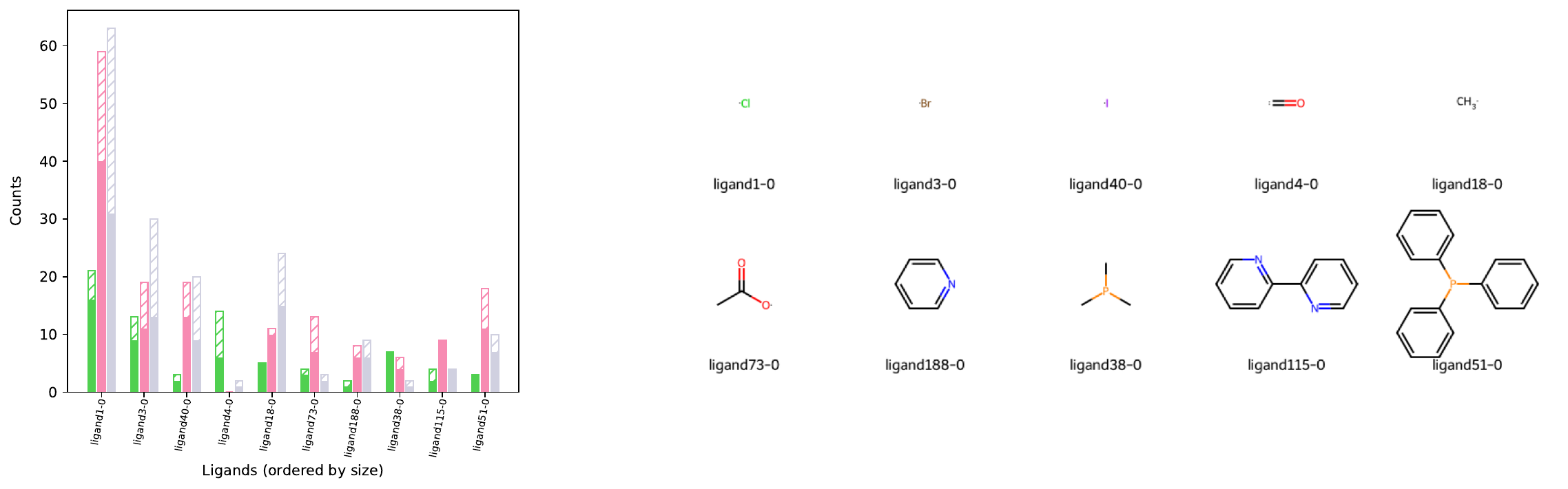}
            \caption{Validation}\figlabel{lig_sum_validation_late}
        \end{subfigure}

        \begin{subfigure}[b]{\linewidth}
            \includegraphics[width=\linewidth]{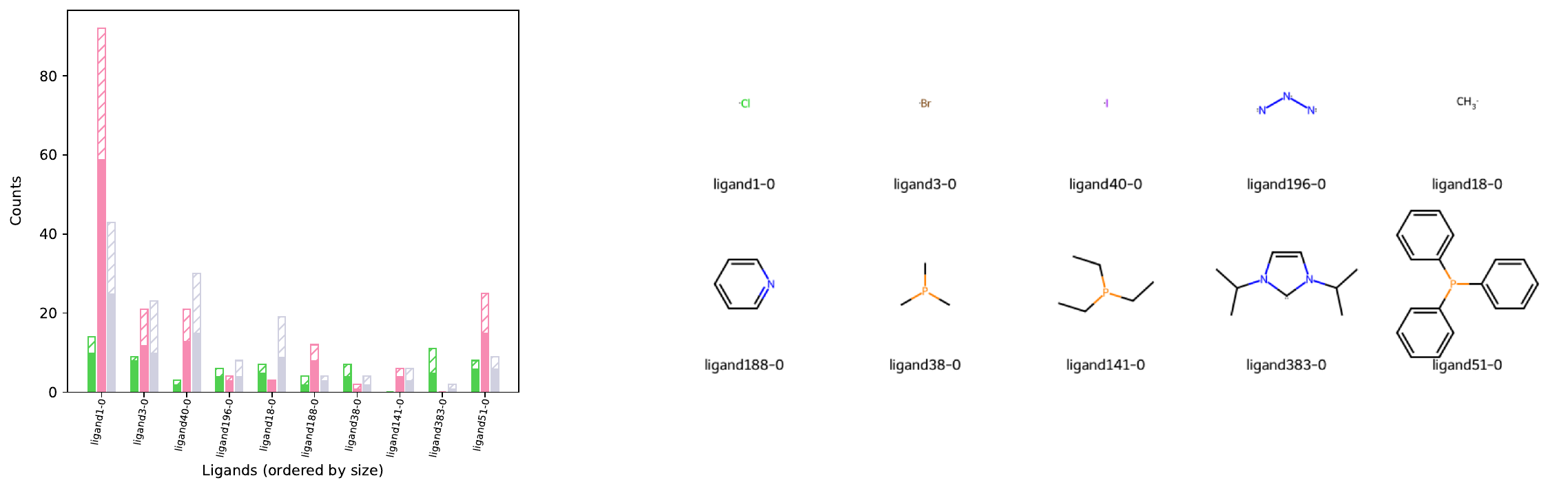}
            \caption{Test}\figlabel{lig_sum_test_late}
        \end{subfigure}
    
    \caption{The bar plots of the counts of the appearances of the 10 most frequent ligands, divided by metal centre, (left) and a visual representation of the same ligands (right) in: (A) tmQM-RDF, accounting only for Ni, Pd and Pt centres, (B), the training set, (C) the validation set and (D) the test set of the \emph{lateTM} selection. In the bar plots, two different counting methods are being represented: in one scenario, every single occurrence of a ligand is considered (dashed bars), in the other, multiple occurrences of a ligand within a single complex are not considered (dashed bars).}\figlabel{lig_sum_lateTM}
\end{figure}

\section{Experimental methods}\apxlabel{experiments_methods}
In this section we provide an initial account of the technical details behind the experimental methods employed in \sctref{experiments}. 

We recall that the purpose of the experiment is to extract relevant structural features from tmQM-RDF in the form of graph patterns. The patterns are clustered into families of typical substructures, which are then converted into binary graph-level features that indicate whether that family is expressed or not in a given graph. The joint distribution of the features is estimated and used to assign a score to a TMC. New TMCs are obtained by substituting one of the ligands in an existing TMC and then evaluated via the score function.

All the material needed to perform this learning task is formally introduced here. In the following, let $\mathcal{G}$ be the graph dataset representing the population of TMCs which is the object of the study. We partition $\mathcal{G}$ in a training set $\mathcal{G}_\mathrm{train}$, used to estimate the score function, and a test set $\mathcal{G}_{\mathrm{test}}$, which will be manipulated on the basis of the estimated score.

In \sctref{experiments} $\mathcal{G}$ has been simply presented as a subpopulation of the family of TMCs that are described in tmQM-RDF. Here, $\mathcal{G}$ is explicitly defined as a set of RDF graphs, as defined in \apxref{methods_rdf}, which correspond each to the subgraph of tmQM-RDF that encodes the information about the respective TMC of interest.\\

The first problem that has to be addressed is that of appropriately defining what a family of substructures is and how it can lead to the definition of a structural feature, in the context of this work and the tmQM-RDF dataset. Both of these concepts strongly rely on the notion of graph pattern, in the SPARQL sense \citep{harris2013sparql, ell2025graph}, which is here regarded as the primary data exploration tool used to acquire structural information about the TMCs under examination.

We will give a formal introduction to graph patterns and explain how patterns naturally lead to an elementary definition of a graph feature vector. Afterwards, we will move to the problem of clustering patterns into families of substructures using an agglomerative clustering algorithm \citep{legendre2012numeco} based on the notion of graph similarity. Finally, we will define the actual feature vector we will employ for our score assessment, by accordingly aggregating the elementary features.\\

These newly-defined features need to be studied from a statistical point of view, with the objective of defining a score function that can assess a TMC on the basis of its structural components. Here score is intended as the probability of the resulting feature vector, as measured by an estimate of the joint distribution of the features themselves. The challenges that arise when trying to estimate such a high-dimensional distribution are examined and dealt with by means of the Bayesian Network formulation \citep{koller2009probabilistic, kitson2023bnsurvey}.

\subsection{Graph patterns}\apxlabel{experiments_pattern_definition}
Informally, a graph pattern is a knowledge graph that employs the same language of the RDF dataset of interest (here, tmQM-RDF), with the addition of variables.\\
Using the notation introduced in \apxref{methods_kg}, if $\mathcal{T} = \mathcal{U} \cup \mathcal{L} \cup \mathcal{B}$ is the usual set of terms and $\mathcal{V}$ is a set of \emph{variables}, an \emph{RDF graph pattern} is a graph $p \subseteq (\mathcal{U} \cup \mathcal{B} \cup \mathcal{V}) \times (\mathcal{U} \cup \mathcal{V}) \times (\mathcal{T} \cup \mathcal{V})$ \citep{harris2013sparql, ell2025graph}. The sets of terms and variables that actually appear in $p$ are denoted with $\mathcal{T}_p$ and $\mathcal{V}_p$ respectively.\\

The presence of variables allows us to employ a graph pattern $p$ as a query against an RDF graph $G \in \mathcal{G}$, by asking if $G$ is expressing (in some sense) the structure represented by $p$. This leads to the definition of a \emph{match} of $p$ in $G$, which is intended to be a mapping $\mu: \mathcal{T}_p \cup \mathcal{V}_p \to \mathcal{T}_G$ such that $\mu(t) = t$ for every term $t \in \mathcal{T}_p$ and that $\mu(p) \subseteq G$, where $\mu(p)$ is a shorthand notation for the graph obtained by replacing every node and edge label in $p$ with its image under $\mu$. In fewer words, a match is a replacement (or grounding) of the variables in $p$ such that the resulting graph is a subgraph of $G$ \citep{harris2013sparql, ell2025graph}.

This work imposes an additional restriction to the definition of $\mu$, known as the \emph{\nrasem} \citep{ell2025graph}. This means that no variable in $p$ can be mapped to a term already present in $\mathcal{T}_p$ nor can two different variables be mapped to the same term (i.e., $\mu$ is required to be injective). The set of the matches of $p$ against $G$ under this evaluation semantics is denoted with $\Omega_{G,\, p}$.

Details on how $\mu$ and $\Omega_{G,p}$ are computed in this work, using the SPARQL query language \citep{harris2013sparql}, can be found in \apxref{pattern_matching}. \figref{example_graph_patterns} shows two example graph patterns, while also demonstrating how this concept can be used to capture structural information.\\

\begin{figure}[!t]
    \begin{center}
    \makebox[\textwidth][c]{%
    \begin{tabular}{c c}
    \begin{subfigure}[t]{0.37\linewidth}
        \includegraphics[width=\linewidth]{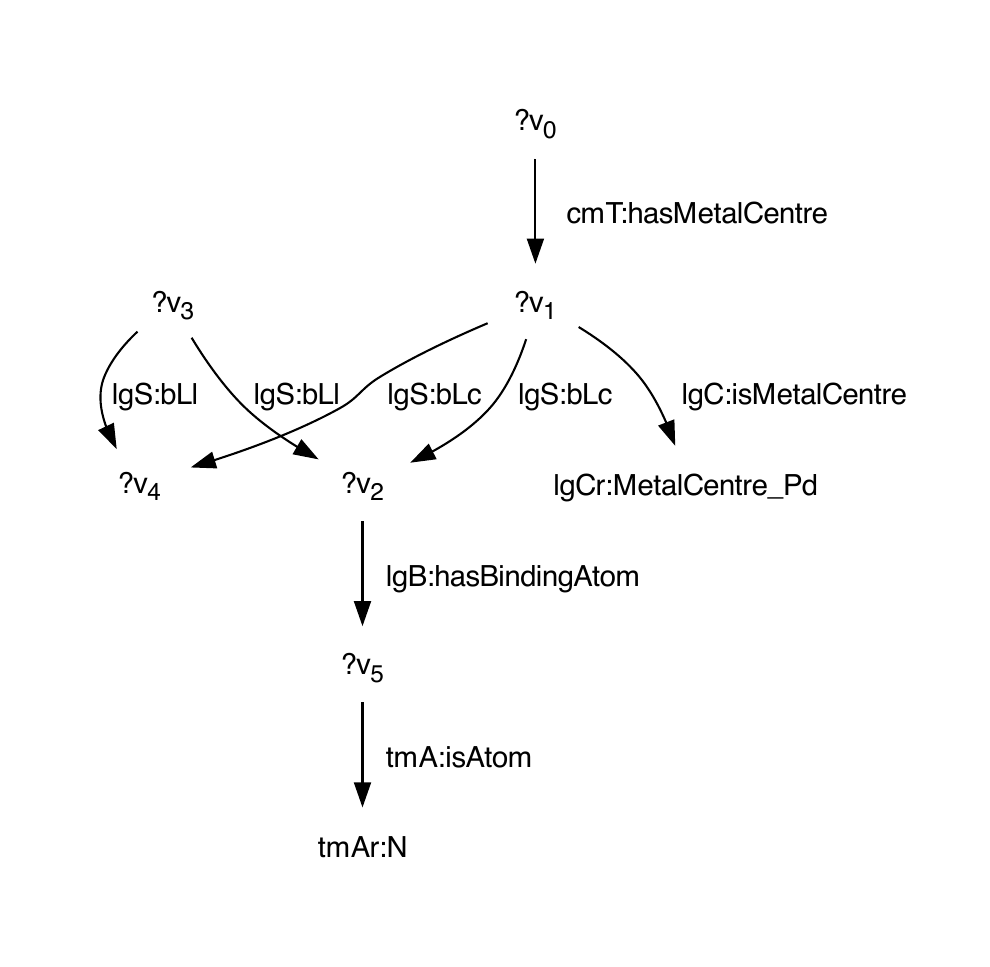}
        \caption{}\figlabel{ex_graph_pattern_1}
    \end{subfigure} &
    \begin{subfigure}[t]{0.62\linewidth}
        \includegraphics[width=\linewidth]{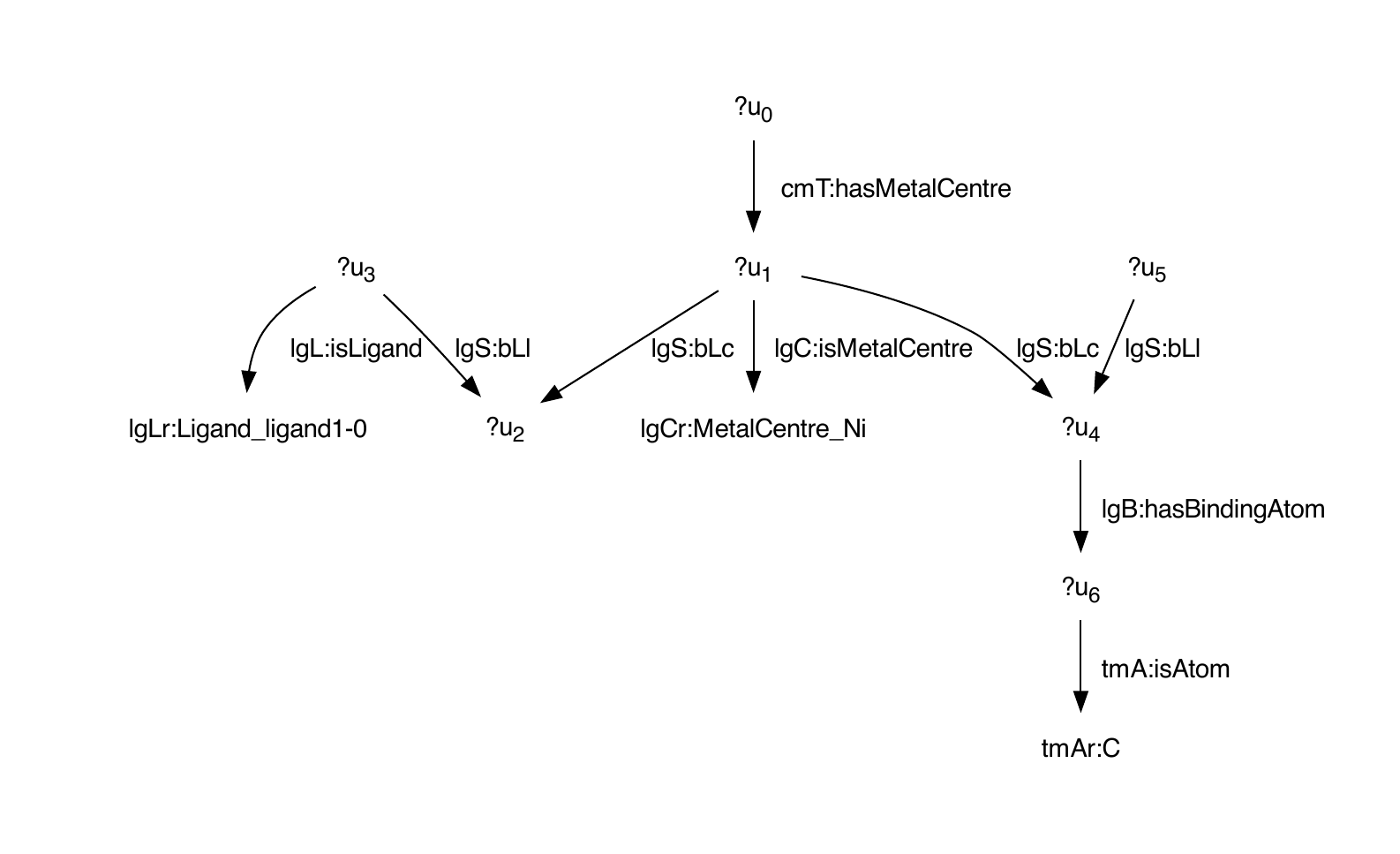}
        \caption{}\figlabel{ex_graph_pattern_2}
    \end{subfigure}
    \end{tabular}
    }
    \end{center}
    
    \caption{Two examples of graph patterns. Nodes starting with \say{?} represent variables. (a) A graph pattern that will match any graph representing a TMC with a bidentate ligand, with N as one of its binding atoms, and Pd as its metal centre. (b) A graph pattern that will match any graph representing a TMC that contains one copy of ligand1-0, another ligand, which has C as one of its binding atoms, and Ni as the metal centre. Notice that, by the \nrasem{}, these two ligands have to be distinct.}
    \figlabel{example_graph_patterns}
\end{figure}

\subsubsection{Frequent pattern mining and data exploration}\apxlabel{experiments_pattern_mining}
Tasks like that of identifying frequent substructures in a dataset of graphs that can be represented using the RDF syntax, as is the case in the proposed experiment, can be naturally addressed by identifying frequently occurring patterns (via pattern mining).

Here, the term \emph{pattern mining} refers to the computation of a set $\widetilde{\mathcal{P}}$ of \emph{frequent patterns} starting from a training set $\mathcal{G}_\mathrm{train}$ of RDF graphs. A pattern $p$ is considered frequent when
\begin{equation}\eqnlabel{frequent_pattern_definition}
 |\{G \in \mathcal{G}_\mathrm{train}: \Omega_{G,\,p} \neq \emptyset\}| \geq \alpha,
\end{equation}
for a given user\--defined threshold $\alpha \in \N$.\\

The pattern mining procedure that we employ in order to compute $\widetilde{\mathcal{P}}$ is hierarchical in nature. Given an initial \emph{seed pattern} $p_0$, i.e., a pattern made of a single triple, the algorithm produces candidate patterns by extending $p_0$ by one triple at a time and then by evaluating whether these candidates are frequent or not. Subsequent iterations proceed in the same way, by extending each of the patterns found at the previous step using only one triple and by looking for frequent patterns in the resulting set of candidates. The algorithm stops when a maximum number of triples per pattern is reached or no pattern can be extended into a frequent pattern.

The procedure is described at a technical level in \apxref{pattern_mining}.\\

Notice that pattern mining is combinatoric in nature, hence it should be expected that the cardinality of $\widetilde{\mathcal{P}}$ will be high.

\subsubsection{Filtering by relevance}\apxlabel{experiments_filtering}
In order to reduce the risk of employing patterns which are scarcely relevant with respect to the task being considered, we compute a subset $\mathcal{P} \subseteq \widetilde{\mathcal{P}}$ of patterns which are both frequent and encode information that we deem to be potentially useful.

In particular, since the experiment focuses on ligands and their binding atoms, we ignore out patterns in which at least one binding atom is specified. This means that a pattern $p \in \widetilde{\mathcal{P}}$ is considered to be interesting if the existence of at least one binding atom is asserted and the identity of that atom is specified. To avoid redundancy, we also impose that the chemical identity of the ligand the atom bleongs to must not be specified in the same pattern.

This condition alone, however, may be too restrictive, as more unusual TMCs may present configurations which are too rare to be encoded by a pattern which is both frequent and adhering to the rule above. For this reason, we also include in $\mathcal{P}$ patterns which are, in some sense, \say{precursors} of interesting patterns.

The most intuitive way of defining a precursor of a pattern $p \in \widetilde{\mathcal{P}}$, especially in view of the pattern mining algorithm in \apxref{experiments_pattern_mining}, is to consider a pattern $q \in \widetilde{\mathcal{P}}$ which is a \emph{subpattern} of $p$, i.e., $q \subseteq p$. In this simple case, it immediately holds that any match $\mu_p$ of $p$ against any graph $G$ can be restricted to a match $\mu_q$ of $q$ against that same graph. In particular, if $p$ matches a graph, then $q$ \emph{must} match that graph as well. Equivalently, if $q$ does not match a graph, $p$ \emph{cannot} match that graph.

As it turns out, this last property is more general than the strict subpattern relation. We say that $q$ \emph{dominates} $p$ if, for any RDF graph $G$, $|\Omega_{G,q}| = 0$ implies $|\Omega_{G,p}| = 0$. This relation may appear significantly more complicated than the subpattern relation, but, as shown in \apxref{domination}, it is sufficient to be able to compute pattern matches against an RDF graph to be able to check for pattern domination.

Somewhat informally, we could then define the set $\mathcal{P}$ as the set of frequent patterns which are either interesting, as per the rule given above, or dominate an interesting pattern. Precise details about the computation of $\mathcal{P}$ are given in \apxref{pattern_filtering}.

\subsection{Elementary graph pattern-based features}\apxlabel{experiments_naive_features}
Given a set $\mathcal{P}$ of frequent patterns, it is possible to compute a feature matrix $\widetilde{X} \in \N^{|\mathcal{G}| \times |\mathcal{P}|}$ such that 
 \[\widetilde{X}_{G,\,p} = |\Omega_{G,\, p}|,\]
where the notation $\widetilde{X}_{G,\,p}$ implicitly assumes that the entries of $\widetilde{X}$ are indexed by the elements of $\mathcal{G}$ and $\mathcal{P}$.

Due to the combinatorial explosion of the number of possible groundings of the variables in a pattern, it may happen that the columns of $\widetilde{X}$ possess largely different scales (see \figref{matches_sd}). For this reason, the actual elementary feature matrix $X$ is defined as a binarised version of $\widetilde{X}$, where 
\begin{equation}
X_{G,\,p} =
\begin{cases}
1&\qquad\text{if }\widetilde{X}_{G,\,p} > 0\\
0&\qquad\text{otherwise}
\end{cases}.
\end{equation}

\begin{figure}
    \centering
    \includegraphics[width=0.8\linewidth]{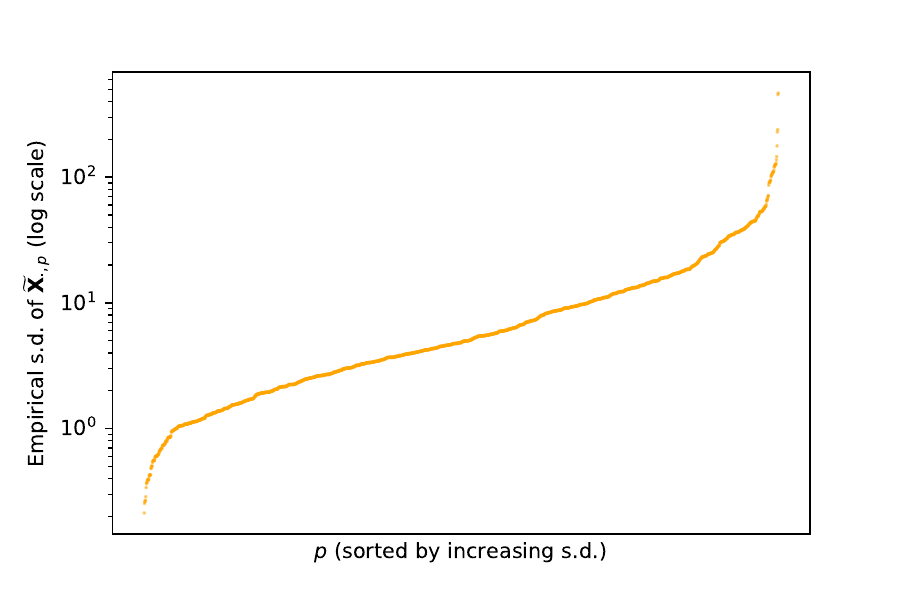}
    \caption{The empirical standard deviation (in log scale) of the columns of the elementary feature matrix $\widetilde{X}$ computed from the \emph{earlyTM} dataset selection. The patterns are sorted according to the increasing value of the empirical standard deviation of the corresponding column of $\widetilde{X}$.}
    \figlabel{matches_sd}
\end{figure}

\subsection{Clustering into families of substructures}\apxlabel{experiments_clustering}
A graph pattern is forced to encode exactly one type of structure, up to the uncertainty eventually granted by the variables, and therefore any \say{naive} pattern mining procedure (as per \apxref{experiments_pattern_mining}) will likely capture structures which are way more specific than the desired fundamental behaviours. For example, in \figref{ex_graph_pattern_1}, it may be that specifying that N participates in a ligand-Pd bond is just a form of \say{noise}, whereas the truly relevant information is the bidentate nature of the ligand. One way to address the issue is to employ a clustering algorithm to identify families of patterns that are similar according to some appropriate criterion. In this way, it is possible to capture the most meaningful substructure types, while also effectively reducing the dimensionality of the feature matrix $X$ introduced above.\\

The clustering method employed in this work falls within the category of hierarchical agglomerative clustering \citep{legendre2012numeco}. These methods proceed by progressively agglomerating items together, until all items have been merged into a single cluster or another stopping criterion is reached. Agglomeration is performed, at each step, by merging the two clusters that are closest to each other, according to a prespecified metric. Usually, a similarity (or dissimilarity) metric $s$ between pairs of objects is employed, together with a so called \emph{linkage criterion}, that specifies how to extend $s$ to groups of objects. The chosen criterion, here, is the \emph{average linkage} criterion, or \emph{UPGMA (Unweighted Pair-Group Method using Arithmetic averages)}. This amounts to
\[s(A, B) = \frac{1}{|A|\cdot|B|}\sum_{\substack{p \in A\\q \in B}}s(p, q),\]
for each pair of clusters $A$ and $B$. The linkage criterion also allows us to define a convenient stopping criterion. By specifying a threshold $\delta$, we can terminate the procedure when all the pairwise similarities between clusters fall beneath $\delta$. We determine the optimal value of $\delta$ among a set of candidates $\Delta$ by maximising the Silhouette coefficient \citep{lai2025silhouette} of the resulting clustering, under the constraint that there must be at least $M^{\mathcal{C}}_\mathrm{min}$ clusters\footnote{This is enforced in order to reduce the possibility of mistakenly aggregating together patterns which represent different chemical modalities.}.\\

As for the similarity metric $s$, we consider $2$ candidate families of metrics and, for each family, we explore two possible configurations, giving in total $4$ possible ways of computing $s$. In \sctref{experiments} we only considered one of these configurations, based on considerations that we make explicit in \apxref{extensive_discussion} For reasons of readability, we introduce here a brief description of each configuration, leaving the formal definitions for \apxref{similarity_metrics}.
\begin{enumerate}
\item Cosine similarity: given two vectors in a $d$\--dimensional space, the cosine similarity is defined as the cosine of the angle between them. In order to use this similarity in the context of graph clustering, it is necessary to define a vector representation of a graph pattern $p$:
\begin{enumerate}
    \item proxy vectors ($s_{\mathrm{cos};\,p}$): under the assumption that similar patterns will match a similar set of graphs, the similarity between two patterns $p$ and $q$ can be computed as the cosine similarity between the columns $\bm{X}_{\cdot, p}$ and $\bm{X}_{\cdot, q}$ of the feature matrix $X$ introduced in \apxref{experiments_naive_features};
    \item semantically-informed feature vectors ($s_{\mathrm{cos};\,s}$): as DLGs, graph patterns can also be interpreted as sets of terms, other than sets of triples, hence it is possible to assign a weight to each term using the term frequency inverse document frequency \citep[tf-idf;][]{salton1988termweighting} scheme, allowing for the definition of a feature vector where each entry is indexed by a term and it evaluates to its tf-idf weight.
\end{enumerate}
\item DLG similarity by \citet{champin2003graphsimilarity}: this metric is closely related to the graph edit distance \citep{sorlin2005reactive} and it uses multivalent mappings between the nodes of two graphs to define an \say{overlap} or \say{intersection} of the graphs. If a weighting scheme for the terms in the graphs is available, a candidate similarity value can be obtained as the ratio between the total weight in the overlap and the total weight in the two graphs. The final similarity is computed as the highest candidate similarity value achievable in this way. The choice of the weighting scheme greatly affects the results:
\begin{enumerate}
    \item naive weights ($s_{\mathrm{DLG};\,n}$): a constant weight $\omega_1$ is assigned to labels which specify a precise chemical identity (e.g., \texttt{tmAr:C}, \texttt{tmAr:Pt}, ...) and a constant weight $\omega_0 < \omega_1$ is assigned to non-specific labels (e.g. \texttt{tmA:Atom}, \texttt{lgC:MetalCentre}, ...);
    \item learned weights ($s_{\mathrm{DLG};\,l}$): a more advanced weighting scheme inferred from the semantic content of $\mathcal{P}$ using the principles of the tf-idf scheme (see \apxref{similarity_metrics}).
\end{enumerate}
\end{enumerate}

The behaviour of these $4$ similarity metrics is investigated via a pairplot reported in \figref{sim_pairs}

\begin{figure}
    \centering
    \includegraphics[width=\linewidth]{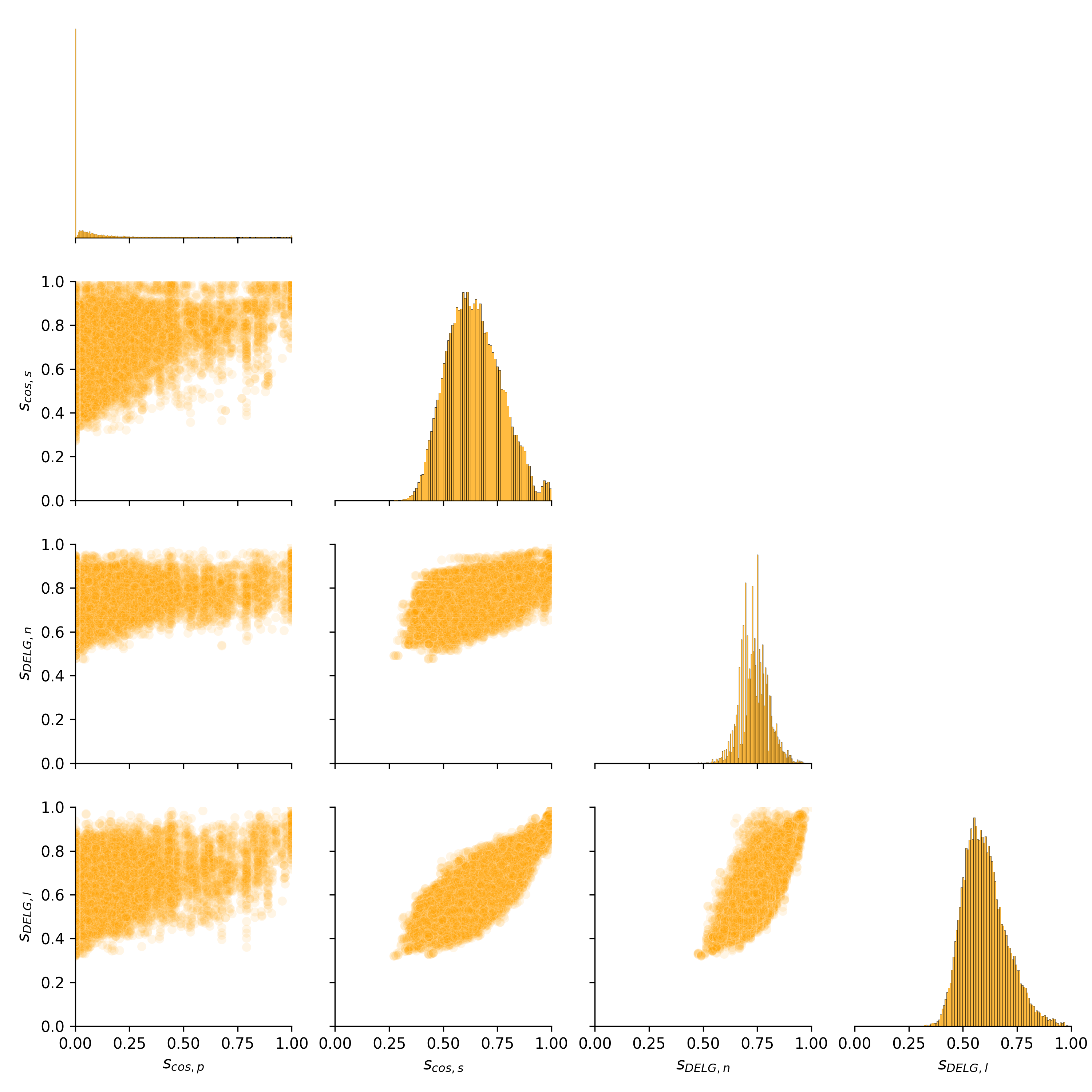}
    \caption{A pairplot of all the pairwise similarities among a random population of $300$ patterns, sampled from the frequent patterns mined from the \emph{earlyTM} dataset selection, for all the possible configurations of $s$. The plots on the diagonal of the grid are histograms of the similarity values computed for a given configuration, whereas the remaining plots are correlation plots.}
    \figlabel{sim_pairs}
\end{figure}

\subsection{Aggregated graph features}\apxlabel{experiments_graph_features}
Once a clustering $\mathcal{C}$ of the given set of patterns is obtained, it is possible to reduce the dimensionality of the feature matrix $X$ and define a new set of features. If $\mathcal{C} = \{C_1, \dots, C_P\}$, with $C_i \cap C_j = \emptyset$ for $i \neq j$, we can obtain a new feature matrix $Y \in \{0,1\}^{|\mathcal{G}|\times |\mathcal{C}|}$ by specifying an aggregation function $a$ and then computing
\[Y_{G, C_i} = a(\bm{X}_{G, C_i}, C_i),\]
where $\bm{X}_{G,C_i} = (X_{G,p})_{p \in C_i}$ (i.e., it is a view of the row vector $\bm{X}_{G,\cdot}$). In this work we consider two different aggregation functions:
\begin{enumerate}
    \item $a_\mathrm{max}(\bm{X}_{G,C_i}, C_i):= \max\{X_{G,p} \mid p\ \in C_i\}$, which implies that
\begin{equation}\eqnlabel{variable_definition}
    Y_{G,C_i} = \begin{cases}
        1&\qquad\text{if }\exists\, p \in C_i \text{ such that } \Omega_{G,p} \neq \emptyset\\
        0&\qquad\text{otherwise}
    \end{cases};
\end{equation}
    \item $a_\mathrm{median}(\bm{X}_{G,C_i}, C_i):= X_{G,p_{(C_i)}}$, where $p_{(C_i)} \in C_i$ is the \emph{set median graph} of $C_i$ \citep{jiang2001median}, i.e.
    \begin{equation*}
        p_{(C_i)} = \argmax_{p \in C_i}\sum_{p_j \in C_i}s(p, p_j).
    \end{equation*}
    In practice, this aggregation function has the effect of \say{collapsing} each cluster $C_i \in \mathcal{C}$ onto its median $p_{(C_i)}$.
\end{enumerate}

Notice that each similarity metric described in \apxref{experiments_clustering} gives rise to a different clustering and, for each clustering, each aggregation function gives rise to different features. This means that, overall, there will be $8$ versions of the feature matrix $Y$. As with the similarity metrics previously introduced, only one aggregation function is eventually presented in \sctref{experiments}. See \apxref{extensive_discussion} for further details.

\subsection{Probability-based TMC scoring model}\apxlabel{experiments_model}
The rows of the feature matrix $Y$ introduced in \apxref{experiments_graph_features} can be interpreted as the observations of a numeric random vector $\bm{Y}_G$, derived as a deterministic function of a random TMC $G$, given the set of frequent patterns $\mathcal{P}$ and its clustering $\mathcal{C}$. In terms of defining a scoring function for $G$, this formulation becomes useful as it allows us to define $f(G) := \log{\mathbb{P}(\bm{Y}_G\mid \mathcal{P}, \mathcal{C})}$. The distribution of $\bm{Y}_G$ can therefore be interpreted as a way to assess the score of a TMC in the context of a TMC training population.\\

The combinatoric nature of pattern mining and the consequently elevated cardinality of $\mathcal{P}$ make $\bm{Y}_G$ a high-dimensional vector, even after clustering. This poses a serious challenge to the statistical learning problem at hand \citep{johnstone2009highdimensions} and requires appropriate considerations. In this work we address the problem by modelling $\mathbb{P}(\bm{Y}_G\mid \mathcal{P}, \mathcal{C})$ using a \emph{Bayesian Network} \citep[BN;][]{koller2009probabilistic, kitson2023bnsurvey}, a framework specifically designed to handle high-dimensional data.\\

In the BN model, the distribution of $\bm{Y}_G$ factorises according to a Directed Acyclic Graph $B$, whose nodes are the very clusters of $\mathcal{C}$ who also index the entries of the random vector. This factorisation relies on the concept of \emph{parents} of a node $C \in \mathcal{C}$, i.e., the set of nodes that posses an outgoing edge pointing directly to $C$. This is denoted as $\text{pa}(C) = \{C_j \in \mathcal{C} \mid C_j \to C \text{ in } B\}$. Intuitively, the parents are those variables that have a \emph{direct} influence on $C$ \citep{koller2009probabilistic}. Building on this idea, a BN factorises the joint distribution of $\bm{Y}_G$ using the conditional distributions of $Y_{C_i} \mid \bm{Y}_{\text{pa}(C_i)}, \mathcal{P},\mathcal{C}$, with $\bm{Y}_{\text{pa}(C_i)} = (Y_{C_j})_{C_j \in \text{pa}(C_i)}$, i.e., 
\begin{equation}\eqnlabel{bn_fact}
    \mathbb{P}(\bm{Y}_G \mid \mathcal{P},\mathcal{C}) = \prod_{i = 1}^{|\mathcal{C}|}\mathbb{P}(Y_{C_i} | \bm{Y}_{\text{pa}(C_i)}, \mathcal{P},\mathcal{C}).
\end{equation}
Since $\bm{Y}_G$ is a binary feature vector, the conditional distributions in \eqnref{bn_fact} can be modelled using Bernoulli distributions, namely
\begin{align}\eqnlabel{bn_dist}
\begin{split}
    &Y_{C_i} \mid \bm{Y}_{\text{pa}(C_i)}, \mathcal{P},\mathcal{C} \sim \text{Bernoulli}(\theta_{\bm{Y}_{\text{pa}(C_i)}}),\\
    &\mathbb{P}(Y_{C_i} = 1| \bm{Y}_{\text{pa}(C_i)}, \mathcal{P},\mathcal{C}) = \theta_{\bm{Y}_{\text{pa}(C_i)}}.
\end{split}
\end{align}
Formula \eqnref{bn_fact} provides a series of benefits, the most evident of which is the factorisation of $\mathbb{P}(\bm{Y}_G \mid \mathcal{P},\mathcal{C})$ into the product of simpler univariate distributions. Additionally, it is possible to naturally derive a set  of conditional independence statements about $\bm{Y}_G$, specifically
\[Y_{C_i} \indep \bm{Y}_{\text{nd}(C_i)} \mid \bm{Y}_{\text{pa}(C_i)},\]
where $\text{nd}(C_i)$ is the set of the \emph{non-descendents} of $C_i$, i.e., the set of all the clusters $C_j$ such that there does not exist a directed path from $C_i$ to $C_j$. These are directly inferrable from the topology of $B$, meaning that a BN also offers a way to graphically inspect and represent a probability distribution.\\

\subsubsection{Learning the DAG}\apxlabel{experiments_learning_dag}
The crucial point of training a BN is, of course, acquiring the graph $B$. There are several structure learning algorithm available \citep[see][for a comprehensive survey] {kitson2023bnsurvey} which can be used to obtain an estimate of $B$ from observational data, possibly starting from a preliminary subgraph $B_0$, if such prior knowledge is available.\\

In this work we chose to employ the Hill-Climbing (HC) algorithm from the \texttt{bnlearn} package \citep{scutari2010bnlearn} in \texttt{R}. This is a greedy algorithm that searches for the graph $B$ that optimises the \emph{Bayesian Information Criterion}
\[\mathrm{BIC}(B) = \sum_{G \in \mathcal{G}_\mathrm{train}}\ln{\hat{\mathbb{P}}(\bm{y}_{G} \mid \mathcal{P}, \mathcal{C})} - \frac{\ln{|\mathcal{G}_\mathrm{train}|}}{2}F,\]
where the first term is the maximised likelihood of the training data according to the model specified by \eqnref{bn_fact} and \eqnref{bn_dist} and the second is a penalty proportional to the number of parameters in the model $F = \sum_{C_i \in \mathcal{C}}2^{|\text{pa}(C_i)|}$ \citep{kitson2023bnsurvey}.
The optimisation is carried out in a sequence of steps, where each step corresponds to applying one operation, i.e. edge insertion, deletion or removal, on the current estimate of $B$. We use prior information, in the form of the subgraph $B_0$, to provide a starting point to HC. This subgraph is derived by means of structural considerations on the patterns in $\mathcal{P}$, with respect of the notion of pattern domination introduced in \apxref{experiments_filtering}.\\

By definition, if $p, q \in \mathcal{P}$ are such that $q$ dominates $p$, it holds that, for any RDF graph $G$, if $q$ does not match $G$, then $p$ cannot match $G$ either. In terms of the elementary graph-level feature matrix $X$ defined in \apxref{experiments_naive_features}, this reads as: $X_{G,q} = 0$ implies $X_{G,p} = 0$. It is then evident that such patterns $p$ and $q$ have a direct influence on each other, in a similar sense in which a node experiences a direct influence from its parents in a BN. The matter is however made more complicated by the fact that the variables of interest in our model are not patterns, but clusters of patterns. As this work is intended as a mere example of the potential of our proposed data representation, we make here the simplifying assumption that the existence of a pair $p$ and $q$ as above, with patterns belonging to different clusters, implies an influential effect of one cluster on the other. Deciding on the direction of the effect, i.e., which node is the parent, is another delicate point, which we also choose not to discuss in detail. Since HC has the ability to reverse edge orientations, we simply choose a random orientation and leave the matter undecided.

Formally, then, we obtain $B_0$ by first defining an undirected graph $\tilde{B}_0$, whose nodes are the clusters in $\mathcal{C}$, and we draw an edge between $C_i, C_j \in \mathcal{C}$ if and only if there exists $p \in C_i, q \in C_j$ such that $q$ dominates $p$. The directed graph $B_0$ is derived from $\tilde{B}_0$ by randomly orienting all edges in such a way that no cycles are formed.\\

As in \apxref{experiments_graph_features}, each version of the feature matrix $Y$, deriving from a different similarity metric-aggregation function pair, will produce a different BN, thus leading to $8$ possible networks. 

\subsubsection{Parameter estimation}\apxlabel{experiments_bn_parameters}
Estimating the parameters in \eqnref{bn_dist} is the final step in training the BN model. We employ a Bayesian approach, using the Dirichlet priors
\begin{align*}
    \bm{\theta}_{C_i} &= \{\theta_{\bm{Y}_{\text{pa}(C_i)}}\}_{\bm{Y}_{\text{pa}(C_i)} \in \{0,\,1\}^{|\text{pa}(C_i)|}},\\
    \bm{\theta}_{C_i} &\sim \text{Dirichlet}(\lambda_0\cdot\bm{1}_{2^{|\text{pa}(C_i)|}}),
\end{align*}
where $\lambda_0 >0$, $\bm{1}_d = (1, \dots, 1)^\top \in \R^d$ and $\bm{\theta}_{C_i}$ is the family of the parameters of the conditional distributions $Y_{C_i} \mid \bm{Y}_{\text{pa}(C_i)}, \mathcal{P},\mathcal{C}$ for all the possible parent states $\bm{Y}_{\text{pa}(C_i)} \in \{0,\,1\}^{|\text{pa}(C_i)|}$. The estimation is carried out using the \texttt{pgmpy} package in Python \citep{ankan2024pgmpy}.

\subsection{The complete training phase}\apxlabel{experiments_training}
With the concepts introduced above, it is now possible to completely describe the steps of the training phase of the proposed experiment. This phase is divided in six fundamental steps, listed below. The aim is the isolation the relevant structural information, via preprocessing, pattern mining and clustering, and the fitting of the BN model.
\begin{enumerate}
    \item Acquiring $\mathcal{G}$ from tmQM-RDF: given a TMC population of interest, the RDF subgraphs describing those TMCs are extracted from tmQM-RDF, by locating the nodes of the form \texttt{cmT:XXYYZZ}, where \texttt{XXYYZZ} is a CSD code, and by then following the outgoing edges. Since only structural information is of interest here, the extracted subgraphs will only include the predicates listed in \tblref{reduced_rdf_predicates}. See \apxref{selection} for details on the nature of the populations described by $\mathcal{G}$ and the partition in $\mathcal{G}_\mathrm{train}$ and $\mathcal{G}_\mathrm{test}$.
    \item Graph pattern mining: frequent patterns are mined from $\mathcal{G}_\mathrm{train}$, using the algorithm in \apxref{experiments_pattern_mining}.
    \item Pattern filtering: the mined patterns are filtered according to the principles described in \apxref{experiments_filtering}.
    \item Elementary feature computation: the elementary graph-level features described in \apxref{experiments_naive_features} are computed.
    \item Identification of structural families: frequent patterns are clustered together using the similarity-based clustering technique described in \apxref{experiments_clustering}. This is repeated for each possible configuration of the similarity metric, as explained in \apxref{experiments_clustering}.
    \item Aggregated features computation: the elementary features are aggregated according to the identified structural families. This is repeated for each possible configuration of similarity metric and aggregation function, as explained in \apxref{experiments_graph_features}.
    \item Statistical learning: the joint distribution of the graph features is estimated as a BN, accounting for statistical interactions (i.e., correlations) and structural relationships among the features (i.e., dominations).  This is repeated for each possible configuration of similarity metric and aggregation function, as explained in \apxref{experiments_learning_dag}.
\end{enumerate}

\begin{table}[tb]
    \centering
    \begin{tabular}{l l}
        \toprule
         Identified structural element & URI \\
         \midrule
         \midrule
         Metal centre & cmT:hasMetalCentre \\
                                     & lgC:isMetalCentre \\[0.25cm]
         Ligand & cmT:hasLigand \\
                                & lgS:bLc \\
                                & lgS:bLl \\
                                & lgL:isLigand \\[0.25cm]
         Binding atom & lgB:hasBindingAtom \\
                                     & tmA:isAtom\\
         \bottomrule
    \end{tabular}
    \caption{The URIs of the only predicates allowed in frequent patterns. The predicates are grouped according to the structural element they primarily identify via their domain/range.}\tbllabel{reduced_rdf_predicates}
\end{table}

\section{Technical details about frequent pattern mining}\apxlabel{pattern_mining}
This section describes the idea behind the algorithm\footnote{This algorithm has been developed and implemented by B.\,E. and will be thoroughly addressed in a separate publication.} used to mine frequent patterns, as per the definition given in \eqnref{frequent_pattern_definition}.\\

In what follows, let $\mathcal{G}_\mathrm{train}$ be a dataset of RDF graphs and let $\alpha \in \N$ be the threshold for pattern frequency. The ultimate goal is to mine a set $\widetilde{\mathcal{P}}$ of frequent graph patterns. The algorithm hereby described proceeds hierarchically, by progressively expanding patterns of a given size (i.e., number of triples) by one additional triple (i.e., by adding a pattern of size 1) at a time.

\subsection{Notation}
We briefly recall here that we define a DLG as a subset $G \subseteq \mathcal{T}^3$, where $\mathcal{T}$ is a (possibly infinite) set of terms, eventually with the restrictions imposed by the RDF syntax described in \apxref{methods_kg}. Using this notation, merging two DLGs $G_1, G_2$ is achieved simply via the usual set union operation $G_1 \cup G_2$, whereas the subgraph relation coincides with the subset relation $G_1 \subseteq G_2$. 

A graph pattern, on the other hand, is a DLG defined as a subset of $(\mathcal{T} \cup \mathcal{V})^3$, where $\mathcal{V}$ is an infinite set of symbols called variables. Given a pattern $p$, we denote with $\mathcal{T}_p$ and $\mathcal{V}_p$ the set of terms and variables, respectively, that appear in $p$. Given a function $m:\mathcal{T} \cup \mathcal{V} \to \mathcal{T}$, we denote with $m(p)$ the DLG obtained by applying $m$ to every element of every triple of $p$. This work only considers patterns that admit variables either in subject or object position (hence never in predicate position). For simplicity, we will still write $p \subseteq (\mathcal{T} \cup \mathcal{V})^3$ and consider this constraint to be enforced implicitly.

Finally, given a DLG $G$ and a pattern $p$, a match of $p$ against $G$ under the \nrasem{} is defined as an injective function $\mu:\mathcal{T}_p \cup \mathcal{V}_p \to \mathcal{T}_G$ such that $\mu(t) = t$ for every $t \in \mathcal{T}_G$ and $\mu(p) \subseteq G$. The set of all such functions is denoted as $\Omega_{G,p}$.

\subsection{Preliminary definitions}\apxlabel{pattern_definitions}
Let $\widetilde{\mathcal{P}}_i$ be the set of frequent patterns of size $i$. In particular, patterns of $\widetilde{\mathcal{P}}_1$, i.e., patterns made of a single triple, are called \emph{triple patterns}.\\

Consider now two patterns $p_a \in \widetilde{\mathcal{P}}_i$, for some $i \geq 1$, and $p_b \in \widetilde{\mathcal{P}}_1$. The set of the \emph{possible extensions of $p_a$ given $p_b$}, denoted as $\mathcal{E}(p_a\mid p_b)$ is defined as the set of all the injective mappings $m: \mathcal{V}_{p_b} \to \mathcal{V}_{p_a}$ such that $m(p_b) \not\subseteq p_a$ and there exists at least one variable $v \in \mathcal{V}_{p_b}$ such that $m(v) \in \mathcal{V}_{p_a}$ (in other words, $\mathcal{E}(p_a\mid p_b)$ is the set of all the possible relabelling of the variables in $p_b$ that ensure that the transformed triple pattern is not a triple already present in $p_a$ and that the extended pattern $p_a \cup m(p_b)$ is connected). Two mappings that only differ in how they label elements that are not mapped into $\mathcal{V}_{p_a}$ are considered equivalent. 

Given two matches $\mu_a \in \Omega_{G,p_a}, \mu_b \in \Omega_{G, p_b}$, for some $G \in \mathcal{G}$, they are said to be \emph{compatible}, given a mapping $m \in \mathcal{E}(p_a\mid p_b)$ if the following conditions hold:
\begin{enumerate}
    \item $\mu_a(\mathcal{V}_{p_a}) \cap \mathcal{T}_{p_b} = \emptyset$;
    \item $\mu_b(\mathcal{V}_{p_b}) \cap \mathcal{T}_{p_a} = \emptyset$;
    \item for any $v_a \in \mathcal{V}_{p_a}, v_b \in \mathcal{V}_{p_b}$, $m(v_b) = v_a$, if and only if $\mu_a(v_a) = \mu_b(v_b)$.
\end{enumerate}
Given two compatible matches $\mu_a \in \Omega_{G,p_a}, \mu_b \in \Omega_{G, p_b}$, given $m \in \mathcal{E}(p_a\mid p_b)$, we define the \emph{extension of $\mu_a$ by $\mu_b$ via $m$} as the function $\mu_a \cup m(\mu_b) : \mathcal{T}_{p_a} \cup \mathcal{V}_{p_a} \cup \mathcal{T}_{p_b} \cup m(\mathcal{V}_{p_b}) \to \mathcal{T}$ given by
\begin{equation*}
    \left(\mu_a \cup m(\mu_b)\right)(x) := \begin{cases}
        x \qquad&\text{if } x \in \mathcal{T}_{p_a} \cup \mathcal{T}_{p_b}\\
        \mu_a(x)\qquad&\text{if } x \in \mathcal{V}_{p_a}\\
        \mu_b(m^{-1}(x))\qquad&\text{otherwise}
    \end{cases}.
\end{equation*}
The expression $m^{-1}(x)$ is a slight abuse of notation as we are implicitly considering the codomain of $m$ to be restricted to $m(\mathcal{V}_{p_b})$. Notice also how the compatibility of $\mu_a$ and $\mu_b$ ensures that $\mu_a \cup m(\mu_b)$ is well defined and that $\mu_a \cup m(\mu_b) \in \Omega_{G, p_a\cup m(p_b)}$.\\

Two patterns $p_a, p_b \in \widetilde{\mathcal{P}}_i$ are said to be \emph{isomorphic} if there exists a bijective function $m: \mathcal{V}_{p_a}\cup\mathcal{T}_{p_a} \to \mathcal{V}_{p_b}\cup\mathcal{T}_{p_b}$ such that $m(t) = t$ for each $t \in \mathcal{T}_{p_a}$ and $m(p_a) = p_b$.

\subsection{Frequent pattern mining algorithm}\apxlabel{experiments_pattern_mining_algo}
The outline of the pattern mining algorithm used in this work is described in this section. As anticipated, the basic idea is to progressively extend patterns of size $i$ by adding a single triple pattern. Extension of a pattern $p_a \in \widetilde{\mathcal{P}}_i$ with a triple pattern $p_b \in \widetilde{\mathcal{P}}_1$ is performed by considering the possible extensions $m \in \mathcal{E}(p_a \mid p_b)$ to form extended patterns $p_a \cup m(p_b)$. The matches of these extended patterns against graphs $G \in \mathcal{G}$ are immediately computed by considering all the compatible matches $\mu_a \in \Omega_{G, p_a}, \mu_b \in \Omega_{G,p_b}$ and by computing the extension $\mu_a \cup m(\mu_b)$. The mining procedure starts from a single triple pattern $p_0 \in (\mathcal{T} \cup \mathcal{V})^3$, also called a \emph{seed pattern}, which is extended to obtain larger graph patterns.

As the task can be particularly demanding from a computational point of view, a random sampling procedure is introduced in order to use only a fraction of the possible extensions of a pattern. Let then $(\pi_i)_{i\geq2}$ be a sequence of probabilities. When considering a pattern $p_a \in \widetilde{\mathcal{P}}_i$, to be extended with $p_b \in \widetilde{\mathcal{P}}_1$ to form a pattern of size $i+1$, each extension $m \in \mathcal{E}(p_a \mid p_b)$ has a probability $\pi_{i+1}$ of being \emph{discarded}.

In what follows, we then denote with $\hat{\mathcal{E}}(p_a \mid p_b)$ the set of accepted extension. Similarly, we denote with $\hat{\mathcal{P}}_i$ the set of \emph{mined} patterns of size $i$ that will be part of the output of the algorithm (these set are subsets of the collections $\widetilde{\mathcal{P}}_i$ of \emph{all} the frequent patterns in the dataset).\\

The full set of inputs of the algorithm is then the following: $\mathcal{G}_{train}$, a set of training DLGs from which frequent patterns are to be mined; $\alpha \in \N$, the threshold used to determine if a pattern if frequent; $p_0$, a seed triple pattern; and $(\pi_i)_{i\geq2}$, the sequence of rejection probabilities, a maximum pattern size $M_\mathrm{max}$.

\begin{denumerate}
    \item Compute the set $\widetilde{\mathcal{P}}_1$ of frequent triple patterns that will be used to extend the patterns:
    \begin{denumerate}
        \item Extract all the triples from all the graphs of $\mathcal{G}_{train}$ and replace either the subject or the object (or both) with a variable to obtain an initial set of triple patterns\footnote{As mentioned, we do not consider here triple patterns that have a variable in predicate position. Similarly, in this phase it is also possible to implement other forms of constraints regarding variable placement or predicate blacklisting.} (for each triple pattern, keep track of the graph $G$ from which it was extracted and the original triple).
        \item For each triple pattern $p$ so obtained, populate the sets $\Omega_{G,p}$ with the matches that replace the new variables with the original terms before the substitution.
        \item Retain only the triple patterns that match at least $\alpha$ different graphs and use them to populate $\widetilde{\mathcal{P}}_1$.
    \end{denumerate}
    \item Initialise $\hat{\mathcal{P}}_1 \leftarrow \{p_0\}$.
    \item Initialise $i \leftarrow 1$.
    \item Repeat the following as long as $\hat{\mathcal{P}}_i \neq \emptyset$ and $i \leq M_\mathrm{max}$:
    \begin{denumerate}
        \item Initialise $\hat{\mathcal{P}}_{i + 1} \leftarrow \emptyset$.
        \item For each $p_a \in \hat{\mathcal{P}}_{i},p_b \in \widetilde{\mathcal{P}}_1$:
        \begin{denumerate}
            \item Compute $\hat{\mathcal{E}}(p_a \mid p_b)$:
            \begin{denumerate}
                \item Initialise $\hat{\mathcal{E}}(p_a \mid p_b) \leftarrow \emptyset$
                \item For each $\mathcal{V}_{p_a}' \subseteq \mathcal{V}_{p_a}, \mathcal{V}_{p_b}' \subseteq \mathcal{V}_{p_b}$, with $|\mathcal{V}_{p_a}'| = |\mathcal{V}_{p_b}'| > 0$:
                \begin{denumerate}
                    \item For each injective mapping $m_0:\mathcal{V}_{p_b}' \to \mathcal{V}_{p_a}'$:
                    \begin{denumerate}
                        \item Extend $m_0$ to $m: \mathcal{V}_{p_b} \to \mathcal{V}$ by assigning to each variable of $\mathcal{V}_{p_b} \setminus \mathcal{V}_{p_b}'$ a symbol from $\mathcal{V} \setminus \mathcal{V}_{p_a}$.
                        \item If $m(p_b) \not\subseteq p_a$, with probability $1 - \pi_{i+1}$, update $\hat{\mathcal{E}}(p_a \mid p_b) \leftarrow \hat{\mathcal{E}}(p_a \mid p_b) \cup \{m\}$.
                    \end{denumerate}
                \end{denumerate}
            \end{denumerate}
            \item For each $m \in \hat{\mathcal{E}}(p_a \mid p_b)$:
            \begin{denumerate}
                \item Let $p_a' = p_a \cup m(p_b)$
                \item For each $G \in \mathcal{G}_\mathrm{train}$:
                \begin{denumerate}
                    \item For each $\mu_a \in \Omega_{G,p_a}, \mu_b \in \Omega_{G, p_b}$ such that $\mu_a$ and $\mu_b$ are compatible, update $\Omega_{G,p_a'} \leftarrow \Omega_{G,p_a'} \cup \{\mu_a \cup m(\mu_b)\}$
                \end{denumerate}
                \item If $p_a'$ is frequent and not isomorphic to any pattern in $\hat{\mathcal{P}}_{i+1}$, update $\hat{\mathcal{P}}_{i+1} \leftarrow \hat{\mathcal{P}}_{i+1} \cup \{p_a'\}$.
            \end{denumerate}
        \end{denumerate}
        \item Update $i \leftarrow i + 1$.
    \end{denumerate}
    \item Return $\hat{\mathcal{P}}_{1} \cup \hat{\mathcal{P}}_{2} \cup \dots \cup \hat{\mathcal{P}}_{M_\mathrm{max}}.$
\end{denumerate}

\section{Isolating interesting patterns}\apxlabel{pattern_filtering}
Let $\widetilde{\mathcal{P}}$ be the set of frequent patterns produced by the algorithm in \apxref{pattern_mining}. We recall that, as per \apxref{experiments_filtering}, we consider a pattern $p \in \widetilde{\mathcal{P}}$ to be interesting if it specifies the existence and the identity of at least one binding atom, in a non redundant way.\footnote{We recall that a specified binding atom is considered non-redundant if the pattern does not also specify the chemical identity of the ligand the atom belongs to.} Moreover, we say that a pattern $q \in \widetilde{\mathcal{P}}$ dominates a pattern $p$ if, for any RDF graph $G$, $|\Omega_{G,q}| = 0$ implies $|\Omega_{G,p}| = 0$.

Our aim is then that of computing a subset $\mathcal{P} \subseteq \widetilde{\mathcal{P}}$ made of patterns that are either interesting, or dominate an interesting pattern. Using the notation introduced in \apxref{pattern_mining}, we consider the decomposition $\widetilde{\mathcal{P}} = \hat{\mathcal{P}}_1 \cup \dots \cup \hat{\mathcal{P}}_{M_\mathrm{max}}$. For simplicity, we construct $\mathcal{P}$ using only patterns from $\hat{\mathcal{P}}_{M_\mathrm{min}} \cup \dots \cup \hat{\mathcal{P}}_{M_\mathrm{max}}$, where $M_\mathrm{min} \in \N$ is a user-defined minimum pattern size.\\

The construction of $\mathcal{P}$ proceeds as follows.
\begin{enumerate}
    \item Initialise $\mathcal{P} \leftarrow \emptyset$.
    \item Initialise $\mathcal{R} \leftarrow \mathcal{P}$
    \item Initialise $i \leftarrow M_\mathrm{max}$.
    \item Repeat the following as long as $i \geq M_\mathrm{min}$:
    \begin{enumerate}
        \item Update $\mathcal{P} \leftarrow \mathcal{P} \cup \{p \in \hat{\mathcal{P}}_i\mid p\text{ is interesting}\}$.
        \item Update $\mathcal{P} \leftarrow \mathcal{P} \cup \{q \in \hat{\mathcal{P}}_i\mid \exists p \in \mathcal{R} : q \text{ dominates } p\}$.
        \item Update $\mathcal{R} \leftarrow \mathcal{P} \setminus \mathcal{R}$.
        \item Update $i \leftarrow i - 1$.
    \end{enumerate}
    \item Return $\mathcal{P}$.
\end{enumerate}

\section{Implementation of pattern matching}\apxlabel{pattern_matching}

In practice, the matches of a pattern $p$ against a graph $G$ can be easily computed using the SPARQL query language \citep{harris2013sparql}.

As SPARQL is a graph pattern matching-based query language \citep{harris2013sparql}, a graph pattern $p$ as introduced in \apxref{experiments_pattern_definition} can be naturally transformed into a query. In particular, the set of triples of $p$ will form the body of \texttt{WHERE} clause of the query. The \nrasem{} is also easily implemented by including a \texttt{FILTER} instruction that prevents any variable to match terms already present in the pattern and any two variables to match the same term.\\

Once a pattern is written into a SPARQL query, we employ the \texttt{R} package \texttt{virtuoso} \citep{boettiger2021virtuoso, rcoreteam2024r} as a convenient interface to SPARQL.

\section{On the domination relationship between graph patterns}\apxlabel{domination}
Given two patterns $p$ and $q$, we recall that we say that $q$ dominates $p$ if, for any RDF graph $G$, $|\Omega_{G,q}| = 0$ implies $|\Omega_{G,p}| = 0$. Equivalently, $|\Omega_{G,p}| > 0$ implies $|\Omega_{G,q}|>0$.

\begin{prop}\label{prop:domination}
    Let $p, q$ be two frequent patterns given a graph dataset $\mathcal{G}$. Then the following statements are equivalent:
    \begin{enumerate}
        \item $q$ dominates $p$.
        \item There exists an injective mapping $\mu:\mathcal{T}_q \cup \mathcal{V}_q \to \mathcal{T}_p \cup \mathcal{V}_p$ such that $\mu(t) = t$ for each term $t \in \mathcal{T}_q$ and $\mu(q)$ is a subgraph of $p$.
    \end{enumerate}
\end{prop}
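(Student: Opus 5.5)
The plan is to prove the two implications separately. For (2)$\Rightarrow$(1) I compose matches, and for (1)$\Rightarrow$(2) I use a ``canonical graph'' obtained by freezing the variables of $p$. The frequency hypothesis will not be needed; the equivalence should hold for arbitrary patterns. Throughout, a match is understood as in \apxref{experiments_pattern_definition}: an injective map fixing the terms of the pattern, sending no variable to a term of the pattern, and whose image lies in the graph.

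\emph{(2)$\Rightarrow$(1).} Let $\mu$ be as in (2), and let $G$ be an RDF graph with some $\nu \in \Omega_{G,p}$. I claim that $\nu\circ\mu \in \Omega_{G,q}$.
\begin{itemize}
\item It is injective, being a composition of injective maps.
\item It fixes every $t\in\mathcal{T}_q$: we have $\mu(t)=t$, and since $\mu(q)\subseteq p$ this gives $t\in\mathcal{T}_p$, which $\nu$ fixes.
\item It satisfies $\nu(\mu(q))\subseteq\nu(p)\subseteq G$.
\end{itemize}
For the no-repeated-anything condition, take $v\in\mathcal{V}_q$. By injectivity of $\mu$ and $\mu|_{\mathcal{T}_q}=\mathrm{id}$, we have $\mu(v)\notin\mathcal{T}_q$. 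There are two cases.
\begin{itemize}
\item If $\mu(v)\in\mathcal{V}_p$, then $\nu(\mu(v))\notin\mathcal{T}_p\supseteq\mathcal{T}_q$.
\item If $\mu(v)\in\mathcal{T}_p\setminus\mathcal{T}_q$, then $\nu$ fixes it, so again $\nu(\mu(v))\notin\mathcal{T}_q$.
\end{itemize}
Hence $\Omega_{G,q}\neq\emptyset$ whenever $\Omega_{G,p}\neq\emptyset$, which is domination.

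\emph{(1)$\Rightarrow$(2).} Choose pairwise distinct fresh URIs $u_v$, one for each $v\in\mathcal{V}_p$, none of which occurs in $\mathcal{T}_p\cup\mathcal{T}_q$. Let $\sigma$ be the identity on $\mathcal{T}_p$ with $\sigma(v)=u_v$, and set $G_p:=\sigma(p)$.
\begin{itemize}
\item $G_p$ is a legitimate RDF graph, because variables occur only in subject or object position and are replaced by URIs.
\item By construction $\sigma\in\Omega_{G_p,p}$, so $\Omega_{G_p,p}\neq\emptyset$.
\end{itemize}
By domination there is $\nu\in\Omega_{G_p,q}$. Every term of $G_p$ lies in $\sigma(\mathcal{T}_p\cup\mathcal{V}_p)$, and $\sigma$ is a bijection onto this set, so $\mu:=\sigma^{-1}\circ\nu$ is well defined and injective into $\mathcal{T}_p\cup\mathcal{V}_p$. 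Moreover $\mu(q)=\sigma^{-1}(\nu(q))\subseteq\sigma^{-1}(G_p)=p$.

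It remains to check that $\mu$ fixes every $t\in\mathcal{T}_q$. We have $\nu(t)=t\in\mathcal{T}_{G_p}$, and $t$ is not one of the fresh $u_v$, so $t\in\mathcal{T}_p$ and $\sigma^{-1}(t)=t$.

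The main obstacle is the bookkeeping in this direction. The fresh URIs must be chosen to avoid every term of both patterns, so that $\sigma^{-1}$ cannot accidentally turn a constant of $q$ into a variable of $p$. One must also check that $G_p$ respects the RDF syntactic restrictions. The fact that domination quantifies over \emph{all} RDF graphs, and not only over $\mathcal{G}$, is exactly what makes $G_p$ an admissible test graph.
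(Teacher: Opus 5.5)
Your proof is correct and is essentially the paper's argument: the composition $\nu\circ\mu$ for (2)$\Rightarrow$(1), and for (1)$\Rightarrow$(2) grounding $p$ with fresh terms, applying domination to that grounded graph, and pulling the resulting match of $q$ back through the inverse grounding. You are somewhat more careful than the paper in requiring the fresh URIs to avoid $\mathcal{T}_p$ as well as $\mathcal{T}_q$, which the paper needs implicitly for its grounding to be a match. Your separate no-repeated-anything check is harmless but redundant, since injectivity together with fixing $\mathcal{T}_q$ already prevents a variable of $q$ from landing on a term of $q$.
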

\begin{proof}
    First, consider the case in which $q$ dominates $p$. Construct then an RDF graph $G$ by grounding $p$ using terms from $\mathcal{T} \setminus \mathcal{T}_q$, so that $\mathcal{T}_G = \mathcal{T}_p \cup \mathcal{T}_0,\ \mathcal{T}_q \cap \mathcal{T}_0 = \emptyset$. Trivially, there exists a mapping $\mu_p \in \Omega_{G,p}$ and, in particular, $p$ and $G$ are isomorphic via $\mu_p$. It also holds that
    \begin{align*}
        \forall t \in \mathcal{T}_p,&\ \mu_p^{-1}(t) = t,\\
        \forall t \in \mathcal{T}_0,&\ \mu_p^{-1}(t) \in \mathcal{V}_p.
    \end{align*}
    But now, since $q$ dominates $p$, there must also be a mapping $\mu_q \in \Omega_{G,q}$. By the construction of $G$, it must then be that $\mathcal{T}_q \subseteq \mathcal{T}_p$ and therefore
    \[\forall t \in \mathcal{T}_q,\ \mu_p^{-1}(\mu_q(t)) = \mu_q(t) = t.\]
    Moreover, since $\mu_q(q)$ is a subgraph of $G$, which is isomorphic to $p$ via $\mu_p$, it must follow that $\mu_p^{-1}(\mu_q(q))$ is a subgraph of $p$. Finally, since both $\mu_p$ and $\mu_q$ are injective, so is the mapping $\mu_p^{-1} \circ \mu_q$, which then satisfies all the required properties.\\

    Suppose now that there exists a mapping $\mu$ as described in statement (2) and let $G$ be a graph such that there exists $\mu_p \in \Omega_{G,p}$. It is clear that the mapping $\mu_p \circ \mu:\mathcal{T}_q \cup \mathcal{V}_q \to \mathcal{T}_G$ is injective and that $\mu_p(\mu(q))$ is a subgraph of $G$. Moreover, 
    \[\forall t \in \mathcal{T}_q,\ \mu_p(\mu(t)) = \mu_p(t) = t,\]
    where the last equality is based on the fact that, by the properties of $\mu$, it necessarily follows that $\mathcal{T}_q \subseteq \mathcal{T}_p$. We can then conclude that $\mu_p \circ \mu \in \Omega_{G,q}$.
\end{proof}

This proposition offers a convenient way of checking whether two patterns $p$ and $q$ are related via domination. In fact, condition (2) can be equivalently written as follows:
\begin{enumerate}
    \setcounter{enumi}{2}
    \item Let $G_p$ be an RDF graph built by grounding $p$ using terms from a set $\mathcal{T}_0,\ \mathcal{T}_0 \cap \left(\mathcal{T}_p \cup \mathcal{T}_q\right) = \emptyset$. Then $|\Omega_{G_p,q}| > 0$.
\end{enumerate}
It follows, then, that pattern domination is equivalent to pattern matching. We provide a concrete example of this principle in Example~\ref{ex:domination}

\begin{figure}[!t]
    \begin{center}
    \makebox[\textwidth][c]{%
    \begin{tabular}{c c c}
    \begin{subfigure}[t]{0.48\linewidth}
        \includegraphics[width=\linewidth]{pattern1.pdf}
        \caption{}\figlabel{example_pattern_1}
    \end{subfigure} &
    \begin{subfigure}[t]{0.48\linewidth}
        \includegraphics[width=\linewidth]{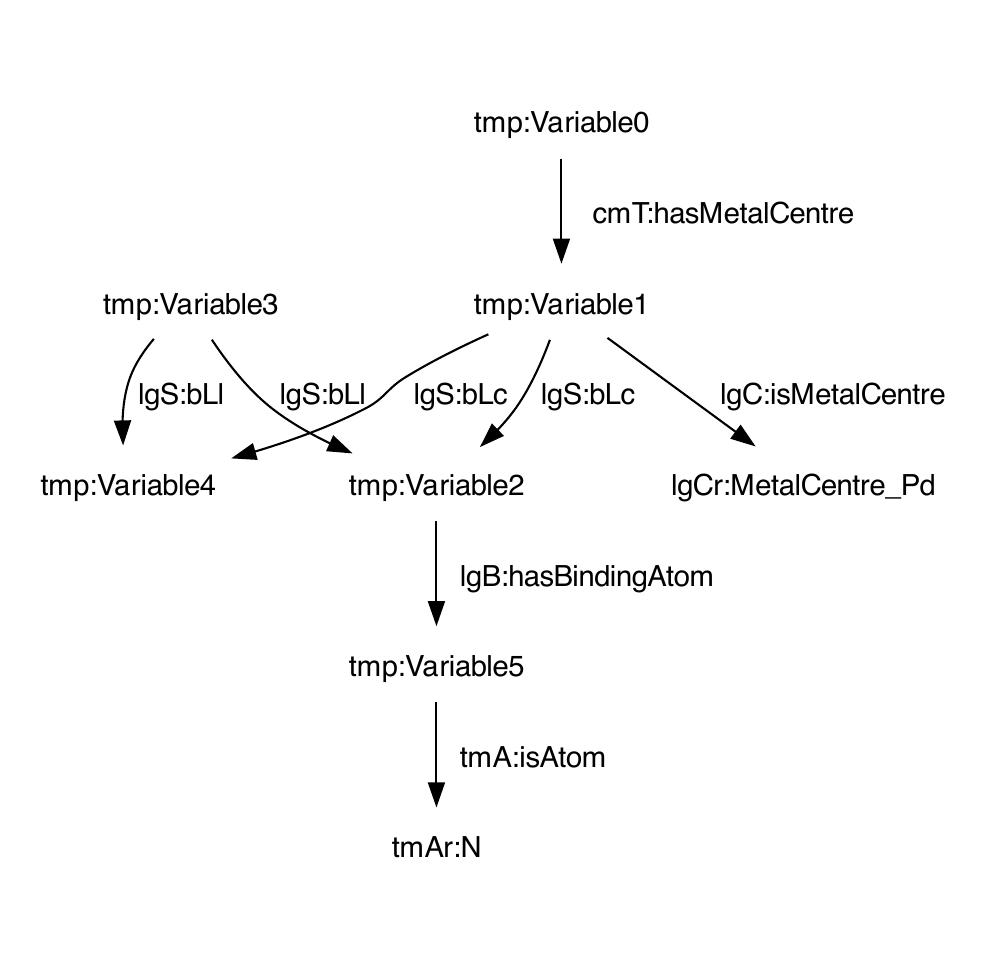}
        \caption{}\figlabel{example_pattern_grounded}
    \end{subfigure} 
    \end{tabular}
    }
    \end{center}
    
    \caption{(a) An example of a graph pattern. (b) An example of a graph matched by the graph pattern, obtained by grounding the variables in the pattern itself.}
    \figlabel{example_pattern}
\end{figure}

\begin{example}\label{ex:domination}
    Let $p$ be the pattern shown in \figref{example_pattern_1}. This pattern will match a TMC $G$ if and only if this TMC has Pd as its metal centre and a bidentate ligand which binds to the centre with at least one N atom. Consider now the triple pattern $q = \left\{(?w0,\, \texttt{tmA:isAtom,}\, ?w1)\right\}.$

    If $\mu_p$ is a match of $p$ against $G$, then a match $\mu_q$ of $q$ against $G$ is immediately obtained as
    \begin{align*}
        w_0 &\xmapsto{\mu_q} \mu_p(v_5)\\
        w_1 &\xmapsto{\phantom{\mu_q}} \texttt{tmAr:N}.
    \end{align*}
    Since $p$ matches $G$, it must be that $(\mu_p(v_5),\, \texttt{tmA:isAtom},\, \texttt{tmAr:N}) \in G$, which implies that $\mu_q(q) \subseteq G$. It follows then that $q$ dominates $p$. This can be equivalently written in terms of the composition $\mu_p^{-1}\circ\mu_q$, as
    \begin{align*}
        w_0 &\xmapsto{\mu_p^{-1}\circ\mu_q} v_5\\
        w_1 &\xmapsto{\phantom{\mu_p^{-1}\circ\mu_q}} \texttt{tmAr:N},
    \end{align*}
    which provides us with the mapping $\mu$ whose existence is assured by Proposition~\ref{prop:domination}.\\

    Suppose now that, instead of the considerations above, we approach the situation by considering a graph $G$ obtained by grounding $p$, for example the graph shown in \figref{example_pattern_grounded}. Then, a match $\mu_q'$ of $q$ against this $G$ is simply given by
    \begin{align*}
        w_0 &\xmapsto{\mu_q'} \texttt{tmp:Variable5}\\
        w_1 &\xmapsto{\phantom{\mu_q'}} \texttt{tmAr:N}.
    \end{align*}
    But an equally immediate match $\mu_p'$ of $p$ is now given by
    \begin{align*}
        v_i &\xmapsto{\mu_p'} \texttt{tmp:Variable}i \qquad i=0,\dots,5
    \end{align*}
    and this match is, in particular, an invertible function when its codomain is restricted to $\{\texttt{tmp:Variable}i\mid i = 0,\dots,5\}.$ We can then define the mapping $\mu := (\mu_p')^{-1}\circ\mu_q$ which has the following action:
    \begin{align*}
        w_0 \xmapsto{\mu_q'} &\texttt{tmp:Variable5} \xmapsto{(\mu_p')^{-1}} v_5\\
        \texttt{?w1} \xmapsto{\phantom{\mu_q'}} &\texttt{tmAr:N} \xmapsto{\phantom{(\mu_p')^{-1}}} \texttt{tmAr:N}.
    \end{align*}
    By Proposition \ref{prop:domination}, the existence of this mapping ensures that $q$ dominates $p$. We have then been able to verify the validity of the domination relationship simply by virtue of our capability of computing pattern matches.
\end{example}

\section{Similarity metrics}\apxlabel{similarity_metrics}
In order to properly discuss the metrics introduced in \apxref{experiments_clustering}, it is useful to introduce a more rigorous definition of a DLG. For the remainder of this section, a DLG will be intended to be a triple $G = (V, E, \pi)$, where $V$ is the set of nodes, $E \subseteq V \times V$ is the set of (directed) edges and $\pi:V \cup E \to \mathcal{T}$ is a labelling function. The notation $\pi(v) = t$, for $v \in V$, means that the node $v$ is labelled with the term $t \in \mathcal{T}$, whereas $\pi(v_s, v_o) = t_p$ signifies that the edge which has tail with label $\pi(v_s)$ (the subject) and head with label $\pi(v_o)$ (the object) is given the label $t_p$ (the predicate). It is straightforward to move from the definition given in \sctref{methods} to the one just stated. If $G_\mathrm{set} \subseteq \mathcal{T}^3$ is a set of triples, we can obtain an equivalent graph $G$ by defining $V = \{t \in \mathcal{T} \mid \exists t_1, t_2 \in \mathcal{T}: (t,\, t_1,\, t_2) \in G_\mathrm{set}\ \vee\ (t_1,\, t_2,\, t) \in G_\mathrm{set}\}$, $E = \{(t_s,\, t_o) \in V \mid \exists t_p \in \mathcal{T}: (t_s,\, t_p,\, t_o) \in G_\mathrm{set}\}$ and $\pi$ such that $\pi(v) = v$ for every $v \in V \subseteq \mathcal{T}$ and $\pi(t_s, t_o) = t_p$ whenever $(t_s, t_p, t_o) \in G_{set}$\footnote{Remember that this work only considers sets of triples such that the corresponding graph has at most one edge between any two nodes and at most one node per term.}.

In cases in which there may be ambiguity as to which DLG we refer to, we shall specify that by denoting the node set, the edge set and the labelling function as $V_G, E_G$ and $\pi_G$ respectively.

When DLGs are expressed using the RDF language, the restrictions described in \apxref{methods_rdf} naturally apply.

\subsection{Cosine similarity}
The cosine similarity is a similarity metric defined over vectors of $\R^d$. Given $\bm{x}, \bm{y} \in \R^d$, their cosine similarity is
\[s_\mathrm{cos}(\bm{x},\, \bm{y}) = \frac{\bm{x}\cdot \bm{y}}{\|\bm{x}\|_2\|\bm{y}\|_2},\]
where $\|\cdot\|_2$ is the Euclidean norm and $\cdot$ is the usual dot product in $\R^d$.

Now, since the objects between which a similarity is to be computed are graph patterns (and therefore labelled graphs, not vectors), in order to use the cosine similarity it is necessary to introduce appropriate feature vectors.

\subsubsection{Proxy feature vectors}
The easiest way to address the similarity problem is probably to exploit the columns of the (binarised) feature matrix $X$ introduced in \apxref{experiments_naive_features} as proxy for the actual patterns, under the assumption that similar patterns will match a similar set of graphs. Using this approach, the similarity between the patterns $p$ and $q$, measured as the cosine similarity between the corresponding columns of $X$, is
\[s_{\mathrm{cos};\,p}(p, q) := s_\mathrm{cos}(\bm{X}_{\cdot, p}, \bm{X}_{\cdot, q}) = \frac{\bm{X}_{\cdot, p}\cdot \bm{X}_{\cdot, q}}{\|\bm{X}_{\cdot, p}\|_2\|\bm{X}_{\cdot, q}\|_2},\]
where $\|\cdot\|_2$ is the Euclidean norm and $\cdot$ is the usual dot product in $\R^{|\mathcal{G}|}$.

\subsubsection{Semantic feature vectors}
It is not hard to imagine why the feature vectors introduced above may be unsatisfactory. First of all, the very idea of relying on $X$ instead of directly employing the semantic information within the patterns means that we are not using all the available information. Secondly, there may be other reasons, apart from structural similarity, for which two patterns may share a similar matching profile. For instance, if two patterns encode two different structures that are strongly positively correlated, then the two patterns will tend to match a similar set of graphs, even though they may be remarkably different. This implies that statistically significant information may be masked within clusters.\\

Since the task is to identify families of patterns that encode similar information, an ideal metric should then exploit the semantic content of the patterns themselves, as this is the most complete source of knowledge about the information encoded therein. In this context, \say{semantic content} could be interpreted as the set of RDFS classes that is present within a pattern. Notice that an item can either receive an explicit class assignment via a predicate that is a subproperty of \texttt{rdf:type} (e.g. \texttt{tmA:isAtom}) or its class can be inferred from the TBox in \figref{tbox} and, if multiple classes can be determined, within tmQM-RDF it is always possible to determine a most specific class with respect to the subclass relation. This concept can be effectively synthesised by introducing a \say{compressed} representation $\tilde{p}$ of a pattern $p$, which amounts to a labelled graph derived from $p$ by deleting all triples in which the predicate is a subproperty of \texttt{rdf:type} and by labelling all the remaining nodes according to their most specific RDFS class. See \figref{example_compressed_pattern} for an example.

\begin{figure}[!t]
    \begin{center}
    \makebox[\textwidth][c]{%
    \begin{tabular}{c c}
    \begin{subfigure}[t]{0.45\linewidth}
        \includegraphics[width=\linewidth]{pattern1.pdf}
        \caption{}\figlabel{ax_example_comp_pattern_1}
    \end{subfigure} &
    \begin{subfigure}[t]{0.55\linewidth}
        \raisebox{0.65cm}{%
        \includegraphics[width=\linewidth]{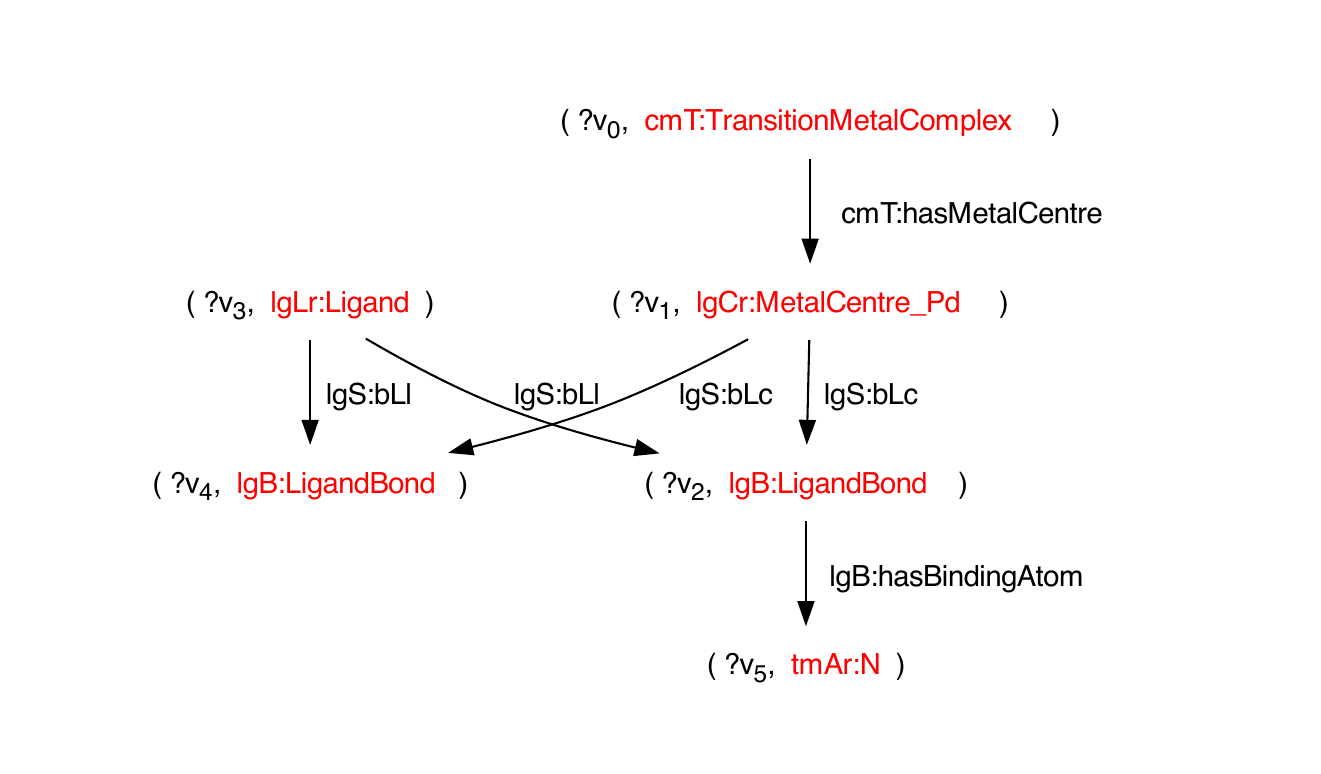}
        }
        \caption{}\figlabel{ax_example_comp_pattern_c1}
    \end{subfigure}%
    \end{tabular}
    }
    \end{center}
    
    \caption{(a) An example of a graph pattern. (b) Its corresponding compressed representation.}
    \figlabel{example_compressed_pattern}
\end{figure}

This new compressed representation enables the construction of a different kind of proxy for graph patterns. Consider the (logarithmically smoothed) tf-idf weighting scheme \citep{salton1988termweighting} defined as
\begin{equation*}
    \mathrm{tf-idf}(\ell, p, \mathcal{P}) = \frac{N_{\ell, p}}{\sum_{\ell' \in p}N_{\ell',p}} \cdot \left( 1 + \log\frac{|\mathcal{P}|}{1 + |\mathcal{P}_\ell|}\right)
\end{equation*} 
where $\ell$ is any of the labels appearing in the compressed representation $\tilde{p}$ of $p$, $N_{\ell,p}$ is the number of times that $\ell$ appears in $\tilde{p}$\footnote{In an RDF graph $G$, there exists a unique node (URI) representing an RDF class, and class assignment is performed via an edge that points to that URI. Hence, in compliance with the constraint imposing at most one node per label, there exists a unique instance of a class label in $G$. In the compressed representation, however, class assignement is performed via label assignment, hence there will be as many instances of a class as entities belonging to that class.}, $\mathcal{P}$ is the set of frequent patterns and $\mathcal{P}_\ell$ is the set of (compressed) patterns in which $\ell$ appears at least once. Now, if $L_\mathcal{P}$ is the set of all the labels that appear in the graphs of $\{\tilde{p} \mid p \in \mathcal{P}\}$, then the feature vector is
\[\bm{z}_p := \big(\mathrm{tf-idf}(\ell,p,\mathcal{P})\big)_{\ell \in L_\mathcal{P}}\]
and the corresponding similarity metric becomes
\[s_{\mathrm{cos};\,s}(p, q) := s_{\mathrm{cos}}(\bm{z}_p, \bm{z}_q).\]

\subsection{Similarity metric for DLGs}
Another interesting metric is the similarity metric for DLGs introduced by \citet{champin2003graphsimilarity}. This is a metric that strongly relies on explicit label assignment, hence it naturally applies to the compressed representation introduced above.

Given two patterns (DLGs) $p$ and $q$, and their compressed representations $\tilde{p} = (V_{\tilde{p}}, E_{\tilde{p}}, \pi_{\tilde{p}})$ and $\tilde{q} = (V_{\tilde{q}}, E_{\tilde{q}}, \pi_{\tilde{q}})$, the similarity of \citet{champin2003graphsimilarity} relies on a \emph{multivalent mapping} $m \subseteq V_{\tilde{p}} \times V_{\tilde{q}}$, which maps every vertex in any of the two patterns with one or more (or even zero) nodes in the other. Following \citet{champin2003graphsimilarity}, when two nodes $v \in V_{\tilde{p}}$ and $w \in V_{\tilde{q}}$ are mapped onto each other, we will denote it using the functional notation $m(v) = w$ or $m(w) = v$, depending on the context (but we consider the two expressions to be equivalent). Edges are mapped when their extremes are mapped. Each mapping $m$ induces an \say{intersection} $\tilde{p}\, \sqcap_m \tilde{q}$ made of the nodes and edges that are mapped onto each other and that share the same label. By assigning a score to nodes, edges and their labels, and by eventually penalising features that are \say{split} (i.e., mapped onto two or more objects), it is possible to evaluate the entire intersection. Finally, the similarity score is defined as the evaluation of the intersection induced by best possible mapping $m$. Formally,
\begin{equation}\eqnlabel{s_delg_def}
    s_\mathrm{DLG}(p, q\mid \omega_{f,V},\omega_{f,E},\omega_{g}) = \max_{m \subseteq V_{\tilde{p}} \times V_{\tilde{q}}} \frac{f(\tilde{p} \sqcap_m \tilde{q}\mid \omega_{f,V},\omega_{f,E}) - g(\text{splits}(m)\mid \omega_G)}{f(\tilde{p} \cup \tilde{q}\mid \omega_{f,V},\omega_{f,E})},
\end{equation}
where
\begin{align*}
    \tilde{p}\, \sqcap_m \tilde{q} &:= \bigcup_{\substack{\tilde{r},\,\tilde{s} \in \{\tilde{p}, \tilde{q}\}\\\tilde{r} \neq \tilde{s}}}\bigg(\big\{v \in V_{\tilde{r}} \mid \exists v' \in V_{\tilde{s}}:\, m(v) = v',\, \pi_{\tilde{r}}(v) = \pi_{\tilde{s}}(v')\big\}\\[-0.5cm]
    &\qquad\qquad\quad\cup\ \big\{(v, w) \in E_{\tilde{r}} \mid \exists (v', w') \in E_{\tilde{s}}:\, m(v) = v', m(w) = w' \in m,\pi_{\tilde{r}}(v,w) = \pi_{\tilde{s}}(v',w')\big\}\bigg),\\
    \text{splits}(m) &:= \bigcup_{\substack{\tilde{r},\,\tilde{s} \in \{\tilde{p}, \tilde{q}\}\\\tilde{r} \neq \tilde{s}}}\big\{(v, s_v) \mid v \in V_{\tilde{r}},\, s_v = \{v' \in V_{\tilde{s}}\mid m(v) = v'\},\, |s_v| \geq 2\big\}\\
    \tilde{p} \cup \tilde{q} &:= V_{\tilde{p}} \cup E_{\tilde{p}} \cup V_{\tilde{q}} \cup E_{\tilde{q}} 
\end{align*}
and
\begin{align*}
    f(F\mid \omega_{f,V},\omega_{f,E}) &:= \sum_{\tilde{r} \in \{\tilde{p}, \tilde{q}\}}\left(\sum_{v \in F \cap V_{\tilde{r}}}\omega_{f,V}(v, \pi_{\tilde{r}}(v)) + \sum_{(v, w) \in F \cap E_{\tilde{r}}}\omega_{f,E}(v, w, \pi_{\tilde{r}}(v,w))\right),\\
    g(S\mid \omega_{g}) &:= \sum_{(v, s_v) \in S}\omega_g(v, s_v).
\end{align*}
The choice of the weight functions $\omega_{f,V}, \omega_{f,E}$ and $\omega_g$ determine the behaviour of the similarity metric.

The optimisation problem in \eqnref{s_delg_def} can be approached via  greedy algorithm presented in \citet{champin2003graphsimilarity}. This solution, however, is non-deterministic, possibly suboptimal and particularly expensive to compute, especially for large sets of graph patterns.\\

\subsubsection{Naive weighting scheme}\apxlabel{ax_similarity_s_delg_naive}
Minimally informative weights can be chosen as
\begin{align*}
    \omega_{f,V}^\mathrm{naive}(v, \ell\mid\omega_0,\omega_1) &= \begin{cases}
        \omega_0&\quad\text{if } \ell\in\{\text{\texttt{lgCr:MetalCentre, lgLr:Ligand, tmA:Atom}}\}\\
        \omega_1&\quad\text{otherwise}
    \end{cases},\\
    \omega_{f,E}^\mathrm{naive}(v, w ,\ell\mid\omega_2) &= \omega_2,\\
    \omega_g(v, s_v) &= |s_v|,
\end{align*}
for some non-negative constants $\omega_0, \omega_1, \omega_2$ such that $\omega_0 < \omega_1$. Notice that $\pi(v)\in\{\text{\texttt{lgCr:Me\-tal\-Cen\-tre}},\allowbreak{}\- \text{\texttt{lgLr:Li\-gand}},\allowbreak{}\- \text{\texttt{tmA:A\-tom}}\}$ can only happen if $v$ represents a metal centre, a ligand or an atom and the pattern does not explicitly assign a class, meaning that the chemical identity of the node is left unspecified. We then define
\[s_{\mathrm{DLG};\,n}(p,q\mid\omega_0,\omega_1,\omega_2) := s_\mathrm{DLG}\left(p,q\mid\omega_{f,V}^\mathrm{naive}(\cdot,\cdot\mid\omega_0,\omega_1),\omega_{f,E}^\mathrm{naive}(\cdot, \cdot,\cdot\mid\omega_2), \omega_{g}\right).\]
In this case, the constraint $\omega_0 < \omega_1$ has the effect of penalising pairs of patterns that specify contradictory chemical identities for the same entity versus pairs in which one pattern does not specify any identity (see Example \ref{ex:s_delg}). 

\begin{example}\label{ex:s_delg}
    Consider the two compressed graph representations shown in \figrefmult{ax_example_pattern_c1}{ax_example_pattern_c2} and name them $\tilde{p}$ and $\tilde{q}$ respectively.

    \begin{figure}[!t]
    \begin{center}
    \makebox[\textwidth][c]{%
    \begin{tabular}{c c}
    \begin{subfigure}[t]{0.55\linewidth}
        \includegraphics[width=\linewidth]{compressed1.pdf}
        \caption{}\figlabel{ax_example_pattern_c1}
    \end{subfigure} &
    \begin{subfigure}[t]{0.55\linewidth}
        \includegraphics[width=\linewidth]{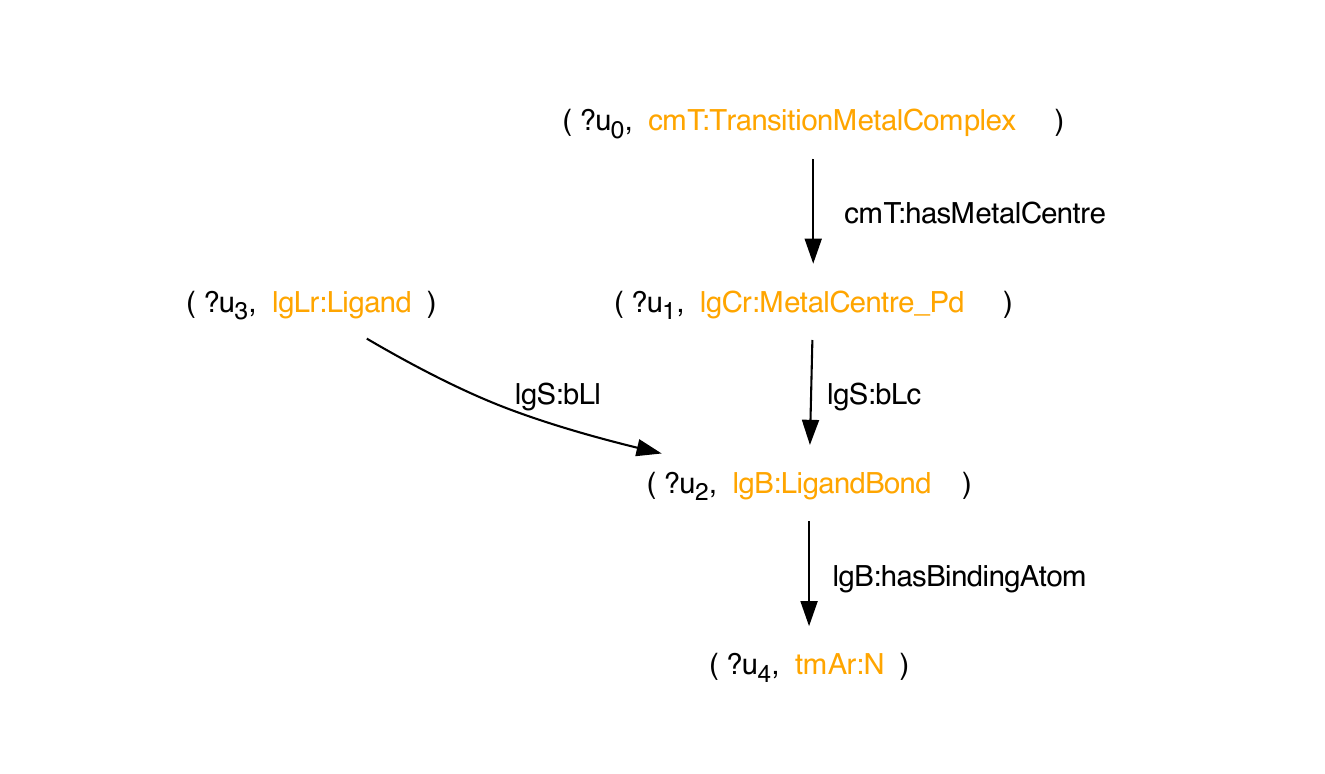}
        \caption{}\figlabel{ax_example_pattern_c2}
    \end{subfigure} \\
    \begin{subfigure}[t]{0.55\linewidth}
        \includegraphics[width=\linewidth]{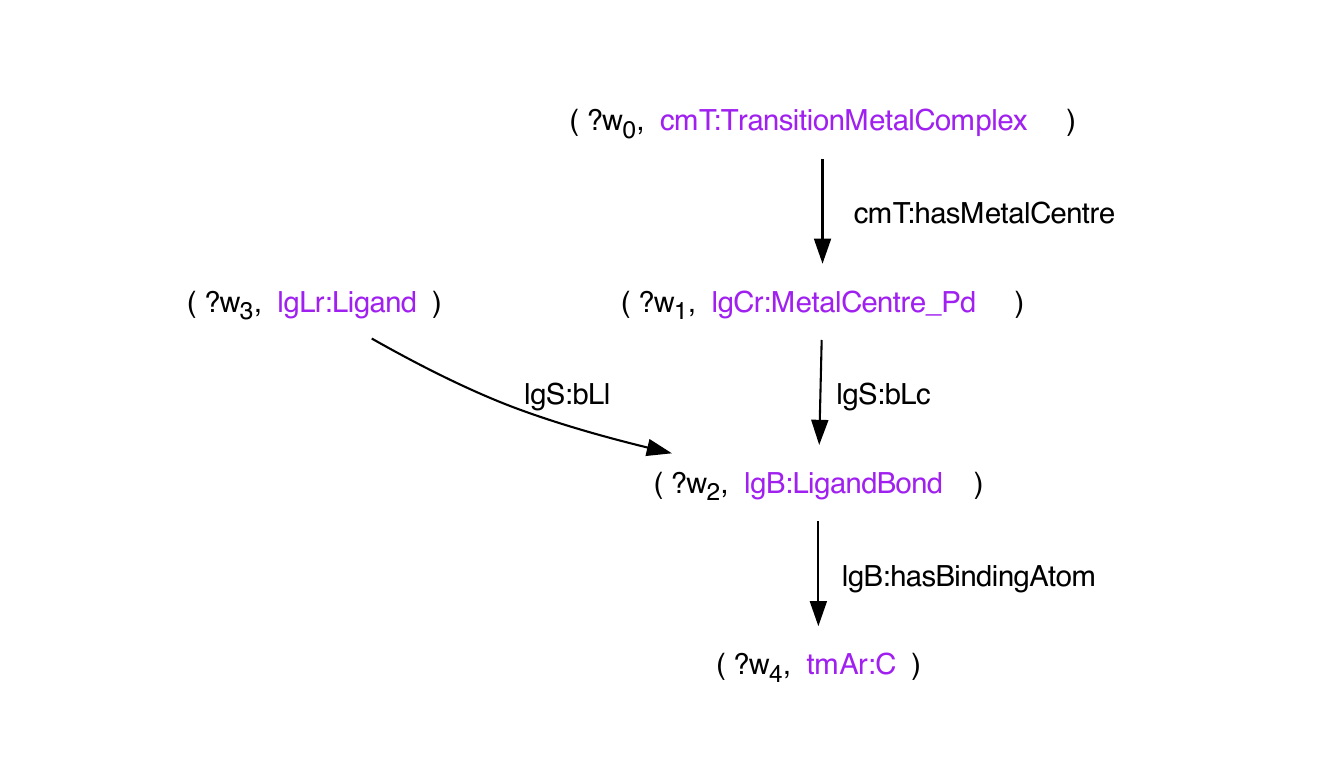}
        \caption{}\figlabel{ax_example_pattern_c3}
    \end{subfigure} &
    \begin{subfigure}[t]{0.55\linewidth}
        \includegraphics[width=\linewidth]{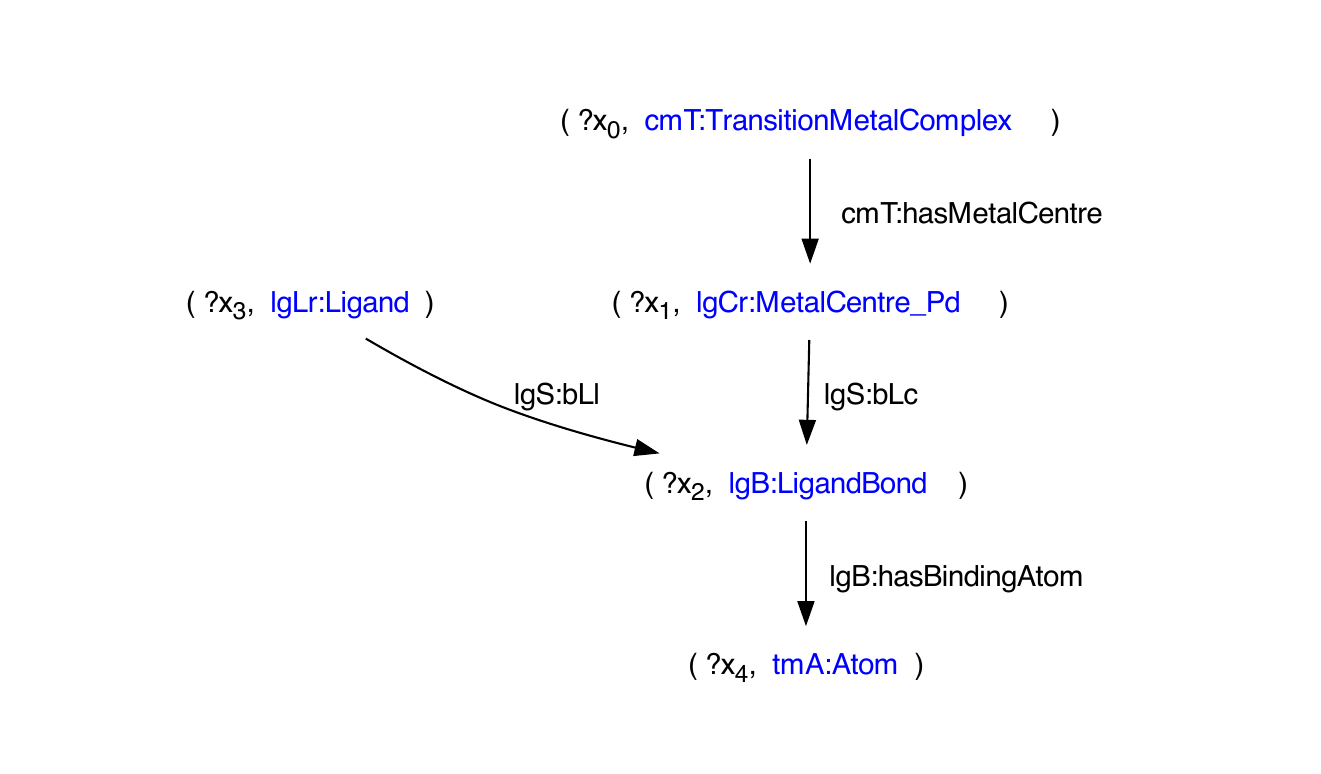}
        \caption{}\figlabel{ax_example_pattern_c4}
    \end{subfigure} \\
    \end{tabular}
    }
    \end{center}
    
    \caption{Four examples of compressed patterns. In Example \ref{ex:s_delg}, these are referred to as $\tilde{p},\, \tilde{q},\, \tilde{q}'$ and $\tilde{q}''$ respectively.}
    \figlabel{ax_example_pattern}
\end{figure}

    Consider then the following multivalent mapping:
    \begin{align*}
        v_0 &\mapsto u_0\\
        v_1 &\mapsto u_1\\
        v_2 &\mapsto u_2\\
        v_3 &\mapsto u_3\\
        v_4 &\mapsto u_2\\
        v_5 &\mapsto u_4.
    \end{align*}

    This mapping essentially preserves the identity and role of each node and deals with the different denticity of the specified ligand by collapsing the two bond objects found in $\tilde{p}$ onto the single bond object in $\tilde{q}$.

    If we evaluate the objective function of \eqnref{s_delg_def} in this scenario we find that
    \begin{align*}
        \tilde{p}\, \sqcap_m \tilde{q} &= V_{\tilde{p}} \cup E_{\tilde{p}} \cup V_{\tilde{q}} \cup E_{\tilde{q}}, \\
        f(\tilde{p}\, \sqcap_m \tilde{q}) &= \omega_1(|\tilde{p} \sqcap_m \tilde{q}| - 2) + 2\omega_0,\\
        \text{splits}(m) &= \{(u_2, \{v_2, v_4\})\},\\
        g(\text{splits}(m)) &= 2,\\
        f(\tilde{p} \cup \tilde{q}) &= \omega_1(|\tilde{p} \sqcap_m \tilde{q}| - 2) + 2\omega_0
    \end{align*}
    where the second and the fourth equation are based on the fact that the intersection (via $m$) covers the entire union of the two patterns and all the nodes, except two, are assigned a specific chemical identity (when needed). All nodes and edges therefore receive the weight $\omega_1$, except for the two unspecified nodes (the ligands) which receive the weight $\omega_0$.
    For $\omega_0 = 0.5, \omega_1 = 1$, we get the candidate score
    \[\frac{\omega_1(|\tilde{p}\, \sqcap_m \tilde{q}| - 2) + 2\omega_0 - 2}{\omega_1(|\tilde{p}\, \sqcap_m \tilde{q}| - 2) + 2\omega_0} = \frac{1\cdot (21 - 2) + 2\cdot0.5 - 2}{1\cdot (21 - 2) + 2\cdot0.5} = \frac{18}{20} = 0.9.\]
    A computational examination of the problem reveals also that the mapping hereby proposed is locally optimal.\\

    Consider now a slightly different case, namely the case in which $\tilde{p}$ is compared first against the graph $\tilde{q}'$, represented in \figref{ax_example_pattern_c3}, and then against $\tilde{q}''$, \figref{ax_example_pattern_c4}. Both graphs are almost identical to $\tilde{q}$, except for the label of the nodes that correspond to $u_4$ in $\tilde{q}$, namely $w_4$ in $\tilde{q}'$ and $x_4$ in $\tilde{q}''$.

    First, $w_4$ is labelled as \texttt{tmAr:C}. Under the same mapping used before (with $u_i$ replaced by $w_i$), now it holds that
    \[\tilde{p}\, \sqcap_m \tilde{q}' = V_{\tilde{p}} \cup E_{\tilde{p}} \cup V_{\tilde{q}'} \cup E_{\tilde{q}'} \setminus\{v_5, u_4\}\]
    and the candidate score becomes
     \[\frac{\omega_1(|\tilde{p}\, \sqcap_m \tilde{q}'| - 4) + 2\omega_0 - 2}{\omega_1(|\tilde{p}\, \sqcap_m \tilde{q}'| - 2) + 2\omega_0} = \frac{1\cdot (21 - 4) + 2\cdot0.5 - 2}{1\cdot (21 - 2) + 2\cdot0.5} = \frac{16}{20} = 0.8.\]
    Notice that the denominator did not change with respect to the previous case, as the label change did not alter the number of specified chemical identities.

    Now, in $\tilde{q}''$, the node $x_4$ is labelled \texttt{tmA:Atom}. The intersection and the associated score are the same as before, but this time the denominator changes to
    \[f(\tilde{p} \cup \tilde{q}'') = \omega_1(|\tilde{p} \sqcap_m \tilde{q}''| - 3) + 3\omega_0, \]
    as now there are three unspecified chemical identities. The similarity score in this case becomes
     \[\frac{\omega_1(|\tilde{p} \sqcap_m \tilde{q}''| - 4) + 2\omega_0 - 2}{\omega_1(|\tilde{p} \sqcap_m \tilde{q}''| - 3) + 3\omega_0} = \frac{1\cdot (21 - 4) + 2\cdot0.5 - 2}{1\cdot (21 - 3) + 3\cdot0.5} = \frac{16}{19.5} = 0.82.\]

     This example then showcases the behaviour of the chosen weight functions, i.e., they penalise more two patterns which specify wrong identities with respect to two patterns in which one expresses uncertainty in terms of the identity,
\end{example}

\subsubsection{Learned semantic weights}
An alternative and more refined weighting scheme can however be devised. In particular, the weights should take two different principles into account. There is prior ontological information ready to be exploited: specific RDFS classes are more informative than general ones. In addition, there is empirical information that can be learned from the pattern dataset, in the sense that the frequency of a label can be an indicator of its relevance (for instance, a rare ligand class may be more informative than a frequent one). These two principles can be accounted for by means of two separate mechanisms.

The ontological hierarchy of the labels can be defined by defining the weight of an identity-specifying label as the sum of two components:
\[\omega^\mathrm{learned}(\ell) = \begin{cases}
    \omega_\mathrm{gen}(\text{general}(\ell))&\qquad\text{if general}(\ell) = \ell\\
    \omega_\mathrm{gen} + \omega_0(\ell)&\qquad\text{otherwise}
\end{cases},\]
where $\omega_0$ is an arbitrary (non negative) weighting function and $\text{general}(\ell)$ is the corresponding general-purpose RDFS class (e.g. if $\ell =\,$\texttt{tmAr:C}, then $\text{general}(\ell)=\,$\texttt{tmAr:Atom}). By doing so, we can guarantee that specific labels will always have more influence than generic labels (recall the mechanism showcased in Example \ref{ex:s_delg}).

Now, empirical information can be incorporated by defining the weighting scheme $\omega_0$ as a data-driven inverse document frequency weight:
\[\omega_\mathrm{gen}(\ell) = 1 + \log\frac{|\mathcal{P}|}{1 + |\mathcal{P}_{\ell}|},\]
where, once again, $\mathcal{P}$ is the set of frequent patterns and $\mathcal{P}_\ell$ is the set of (compressed) patterns in which $\ell$ appears at least once. It is easy to see that $\omega_0(\ell)$ decreases as $|\mathcal{P}_\ell|$ increases.\\

Using these two weighting functions, we can define node and edge weighting functions
\begin{align*}
    \omega_{f,V}^\mathrm{learned}(v, \ell) &= \omega^\mathrm{learned}(\ell)\\
    \omega_{f,E}^\mathrm{learned}(v, w ,\ell) &= \omega^\mathrm{learned}(\ell),
\end{align*}

which finally allows us to define
\[s_{\mathrm{DLG};\,l}(p,q\mid\omega^\mathrm{learned}) := s_\mathrm{DLG}(p,q\mid\omega_{f,V}^\mathrm{learned},\omega_{f,E}^\mathrm{learned}, \omega_{g})\]
where $\omega_g(v, s_v) = |s_v|$, as in \apxref{ax_similarity_s_delg_naive}.

\section{Computational parameters}\apxlabel{parameters}
We report in \tblref{parameters} the list of the values of all the parameters used in this work during the experiments described in \sctref{experiments} and \apxref{experiments_methods}.

{
\newcommand{\locspace}{0.1cm}
\begin{table}[]
    \makebox[\textwidth][c]{%
    \centering
    
    \begin{tabular}{l l l l}
        \toprule
         Parameter & Description & Reference & Value \\
         \midrule
         \midrule
         \multicolumn{4}{c}{\emph{Dataset selections}}\\
         \midrule
         $N_\mathrm{seed}$ & Number of ligands in the seed & \apxref{selection} & \makecell[tl]{\emph{earlyTM:} $1350$\\\emph{lateTM:} $350$\\[\locspace]}\\
         \multicolumn{4}{c}{\emph{Frequent pattern mining and filtering}}\\
         \midrule
         $\alpha$ & Minimum number of matched graphs & \apxref{experiments_pattern_mining} &  \makecell[tl]{\emph{earlyTM:} $10$\\\emph{lateTM:} $20$\\[\locspace]}\\
         $p_0$ & Seed pattern & \apxref{experiments_pattern_mining} & \makecell[tl]{$(?v_0,\, t_p,\, ?v_1)$, with\\$t_p = \texttt{cmT:hasMetalCentre}$ \\[\locspace]}\\
         $M_\mathrm{max}$ & Maximum pattern size& \apxref{pattern_mining} &  $12$\\[\locspace]
         $(\pi_i)_{i=2}^{M_\mathrm{max}}$ & Rejection probabilities & \apxref{pattern_mining} &  \makecell[tl]{\emph{earlyTM:} $(0.1,\, 0.15,\,0.2,\,0.25$\\$\qquad\qquad\quad0.3,\,0.4,\,0.65,\,0.72,$\\$\qquad\qquad\quad0.895,\,0.9,\,0.9)$\\\emph{lateTM:} $(0.2,\, 0.25,\,0.25,\,0.42$\\$\qquad\qquad\quad0.45,\,0.45,\,0.53,\,0.8,$\\$\qquad\qquad\quad0.8,\,0.798,\,0.8)$\\[\locspace]}\\
         $M_\mathrm{min}$ & Minimum pattern size& \apxref{pattern_filtering} &  $10$\\[\locspace]
         \multicolumn{4}{c}{\emph{Agglomerative clustering}}\\
         \midrule
         $\Delta$ & Set of candidate similarity thresholds& \apxref{experiments_clustering} &  $\left\{\frac{i}{200}\mid i = 0,\dots,200\right\}$\\[\locspace]
         $M^{\mathcal{C}}_\mathrm{min}$ & Minimum number of clusters & \apxref{experiments_clustering} &  $300$\\[\locspace]
         $(\omega_0,\, \omega_1, \omega_2)$ & Weights of $s_{\mathrm{DLG};\,n}$ & \apxref{ax_similarity_s_delg_naive} & $(0.5,\, 1,\,1)$ \\[\locspace]
         \multicolumn{4}{c}{\emph{Bayesian Network}}\\
         \midrule
         $\lambda_0$ & Dirichlet prior hyperparameters & \apxref{experiments_bn_parameters} & $10$ \\[\locspace]
         \bottomrule
    \end{tabular}
    
    }
    \caption{The list of all the numerical parameters employed in this work, organised by topic, together with the value used. Unless otherwise specified, the same value of each parameter has been used across the experiments involving the two selection \emph{earlyTM} and \emph{lateTM}.}
    \tbllabel{parameters}
\end{table}
}

\section{Extensive experimental results}\apxlabel{extensive_results}
We have performed the experiment from \sctref{experiments} using all the $8$ possible combinations of similarity metric and aggregation function to compute the feature matrix $Y$, as described in \apxref{experiments_methods}.

We briefly recall that the goal of the experiment was to learn a score function $f: \mathcal{G} \to [0,\,1]$ which could be used as a means of evaluating the outcomes of elementary TMC manipulations. In particular, during the test phase we removed one random ligand from each TMC in the test population $\mathcal{G}_\mathrm{test}$ and we then tried to reconstruct the original compound by computing all the possible reconstructions that can be formed using ligands from the training population. We then used $f$ to rank all the possible reconstructions of each incomplete complex. The quality of the learned function $f$ is assessed by measuring the top$-k$ accuracy, for $k \in \{1,\, 5,\, 10\}$, intended as the fraction of test TMCs that can be found among the top$-k$ highest scoring reconstructions that can be formed after one of their ligands is removed.

The full set of results is shown in \tblref{extensive_results}.

\subsection{Discussion}\apxlabel{extensive_discussion}
In the case of the \emph{earlyTM} dataset, it emerged that the proposed method of learning a score function $f$ is capable of achieving promising results already for $k = 1$. As reported in \tblref{extensive_results}, the average top$-1$ accuracy across all configurations amounts to $0.649$, $0.7245$ and $0.672$ depending on the filter applied to the set of ligands available for the reconstruction (respectively: no filter, same denticity/hapticity order as the missing ligand, same charge as the missing ligand). These values increase remarkably at $k = 10$, with the average accuracy reporting at $0.874$, $0.975$, and $0.905$ respectively.

For the \emph{lateTM} dataset, on the other hand, the situation appears to be different. The top$-1$ accuracy is particularly low, with average values of $0.280$, $0.324$ and $0.294$. Still, at $k = 10$ we can observe similar performances to the \emph{lateTM} dataset, with average accuracy values of $0.806$, $0.889$ and $0.910$.

Overall, given that each molecular scaffold admits hundreds of possible reconstructions within this framework, the high accuracy values found at the relatively restrictive threshold of $k = 10$ advocate in favour of the soundness of our approach to structure assessment, as well as the usefulness of the unified view, at different resolutions, proposed in tmQM-RDF.\\

We also observe that, while it is possible to appreciate a similar trend across all configurations and commensurable performances, there is not a single configuration that stands out as being particularly superior across all scenarios. Nevertheless, the pair $(s_{\mathrm{cos};\,p},\,a_\mathrm{median})$ achieves excellent performances, relative to the other settings, within the \emph{lateTM} dataset. In \emph{earlyTM}, this pair is often outperformed by other configurations, but the improvements are contained in magnitude (the largest difference in accuracy amounts to $0.051$). Given the particularly restrictive nature of the $a_{median}$ function, as discussed in \apxref{experiments_graph_features}, this is perhaps the strongest indicator that the assumption of the existence of few fundamental structural modalities is deserving of further investigations.

The pair $(s_{\mathrm{cos};\,p},\,a_\mathrm{max})$ also displays a near-optimal behaviour in \emph{lateTM}, although not in \emph{earlyTM}. This reason, paired with the intuitive interpretation of the $a_\mathrm{max}$ aggregation function, still prompted us to choose this last pair as the representative configuration whose results have been discussed in \sctref{experiments}.

\begin{table}[t]
    \centering
    \makebox[\textwidth][c]{%
    
    \begin{tabular}{c c c c c c c c c c c c c c c}
         \toprule
Metric & Variant & Aggregation & & \multicolumn{11}{c}{Filter}\\
\cmidrule(lr){5-15}
 &  &  & & \multicolumn{3}{c}{\emph{No filter}} & & \multicolumn{3}{c}{\emph{Hapticity/denticity order}} & & \multicolumn{3}{c}{\emph{Charge}}\\
\cmidrule(lr){5-7}\cmidrule(lr){9-11}\cmidrule(lr){13-15}
 &  &  & & \multicolumn{3}{c}{Top$-k$ accuracy} & & \multicolumn{3}{c}{Top$-k$ accuracy} & & \multicolumn{3}{c}{Top$-k$ accuracy}\\
\cmidrule(lr){5-7}\cmidrule(lr){9-11}\cmidrule(lr){13-15}
 &  &  &  & $\emph{k = 1}$ & $\emph{k = 5}$ & $\emph{k = 10}$ &  & $\emph{k = 1}$ & $\emph{k = 5}$ & $\emph{k = 10}$ &  & $\emph{k = 1}$ & $\emph{k = 5}$ & $\emph{k = 10}$\\
\midrule
\midrule
\multicolumn{15}{c}{\emph{earlyTM}}\\
\midrule
$s_{cos}$ & $s_{cos;\,p}$ & $a_{max}$ &  & 0.619 & 0.826 & 0.866 &  & 0.702 & 0.926 & 0.967 &  & 0.649 & 0.846 & 0.896\\
	 & 	 & $a_{median}$ &  & 0.662 & 0.826 & 0.863 &  & 0.732 & $\mathbf{0.946}$ & $\mathbf{0.983}$ &  & 0.696 & 0.846 & 0.896\\
	 & $s_{cos;\,s}$ & $a_{max}$ &  & 0.649 & 0.816 & 0.866 &  & 0.722 & 0.933 & 0.973 &  & 0.656 & 0.833 & $\emph{0.906}$\\
	 & 	 & $a_{median}$ &  & $\mathbf{0.692}$ & 0.833 & 0.860 &  & $\emph{0.753}$ & 0.920 & $\emph{0.980}$ &  & $\emph{0.699}$ & 0.839 & $\emph{0.906}$\\
$s_{DELG}$ & $s_{DELG;\,n}$ & $a_{max}$ &  & $\emph{0.689}$ & $\emph{0.849}$ & $\mathbf{0.890}$ &  & $\mathbf{0.783}$ & 0.936 & 0.977 &  & $\mathbf{0.726}$ & $\emph{0.860}$ & $\mathbf{0.913}$\\
	 & 	 & $a_{median}$ &  & 0.629 & 0.839 & $\emph{0.886}$ &  & 0.686 & 0.936 & 0.973 &  & 0.649 & $\emph{0.860}$ & $\mathbf{0.913}$\\
	 & $s_{DELG;\,l}$ & $a_{max}$ &  & 0.632 & $\mathbf{0.853}$ & $\emph{0.886}$ &  & 0.736 & $\emph{0.940}$ & 0.973 &  & 0.659 & $\mathbf{0.870}$ & $\emph{0.906}$\\
	 & 	 & $a_{median}$ &  & 0.622 & 0.826 & 0.873 &  & 0.689 & $\emph{0.940}$ & 0.977 &  & 0.645 & 0.843 & 0.900\\[0.1cm]
\multicolumn{15}{c}{\emph{lateTM}}\\
\midrule
$s_{cos}$ & $s_{cos;\,p}$ & $a_{max}$ &  & $\emph{0.357}$ & $\mathbf{0.667}$ & $\emph{0.833}$ &  & $\emph{0.403}$ & $\mathbf{0.767}$ & $\emph{0.903}$ &  & $\mathbf{0.377}$ & $\emph{0.770}$ & $\emph{0.920}$\\
	 & 	 & $a_{median}$ &  & $\mathbf{0.363}$ & $\mathbf{0.667}$ & $\mathbf{0.853}$ &  & $\mathbf{0.407}$ & 0.743 & $\mathbf{0.923}$ &  & $\emph{0.367}$ & $\mathbf{0.797}$ & $\mathbf{0.927}$\\
	 & $s_{cos;\,s}$ & $a_{max}$ &  & 0.193 & 0.633 & 0.810 &  & 0.227 & 0.730 & $\emph{0.903}$ &  & 0.200 & 0.723 & 0.893\\
	 & 	 & $a_{median}$ &  & 0.213 & 0.640 & 0.807 &  & 0.250 & 0.733 & 0.883 &  & 0.237 & 0.737 & 0.903\\
$s_{DELG}$ & $s_{DELG;\,n}$ & $a_{max}$ &  & 0.313 & 0.660 & 0.797 &  & 0.373 & 0.733 & 0.873 &  & 0.330 & 0.740 & 0.917\\
	 & 	 & $a_{median}$ &  & 0.287 & $\emph{0.663}$ & 0.790 &  & 0.333 & $\emph{0.747}$ & 0.880 &  & 0.303 & 0.733 & 0.910\\
	 & $s_{DELG;\,l}$ & $a_{max}$ &  & 0.260 & 0.633 & 0.767 &  & 0.307 & 0.713 & 0.860 &  & 0.263 & 0.707 & 0.897\\
	 & 	 & $a_{median}$ &  & 0.253 & 0.617 & 0.793 &  & 0.293 & 0.713 & 0.887 &  & 0.277 & 0.707 & 0.910\\
\bottomrule
    \end{tabular}
    }
    \caption{Top$-k$ accuracy measured during the TMC completion task for the two datasets employed and each of the $8$ possible combinations of similarity metric and aggregation function. In each column, the highest and second-highest value are highlighted in boldface and italic respectively.}
    \tbllabel{extensive_results}
\end{table}

\end{artappendix}
\end{document}

\typeout{get arXiv to do 4 passes: Label(s) may have changed. Rerun}